\def\showauthornotes{1}
\newtheorem{theorem}{Theorem}
\newtheorem{observation}[theorem]{Observation}
\newtheorem*{fact*}{Fact}
\numberwithin{theorem}{section}
\newtheorem{thm}[theorem]{Theorem}
\newtheorem{lemma}[theorem]{Lemma}
\newtheorem{obs}[theorem]{Observation}
\theoremstyle{definition}
\newtheorem*{rep@theorem}{\rep@title}
\newcommand{\newreptheorem}[2]{
\newenvironment{rep#1}[1]{
 \def\rep@title{#2 \ref{##1}}
 \begin{rep@theorem}}
 {\end{rep@theorem}}}
\newcommand{\oddconst}{\ensuremath{\mathsf{odd}}}
\newcommand{\evenconst}{\ensuremath{\mathsf{even}}}
\newcommand{\unconst}{\ensuremath{\mathsf{unconstrained}}}
\colorlet{darkgreen}{green!50!black}
\newcommand{\defcal}[1]{\expandafter\newcommand\csname c#1\endcsname{{\mathcal{#1}}}}
\newcounter{ct}
    \edef\letter{\Alph{ct}}
\newcommand{\Authornote}[2]{{\sffamily\small\color{red}{[#1: #2]}}}
\newcommand{\Authornotecolored}[3]{{\sffamily\small\color{#1}{[#2: #3]}}}
\newcommand{\Authorcomment}[2]{{\sffamily\small\color{gray}{[#1: #2]}}}
\newcommand{\Authorstartcomment}[1]{\sffamily\small\color{gray}[#1: }
\newcommand{\Authorfnote}[2]{\footnote{\color{red}{#1: #2}}}
\newcommand{\Authorfixme}[1]{\Authornote{#1}{\textbf{??}}}
\newcommand{\Authormarginmark}[1]{\marginpar{\textcolor{red}{\fbox{\Large #1:!}}}}
\newcommand{\Authornote}[2]{}
\newcommand{\Authornotecolored}[3]{}
\newcommand{\Authorcomment}[2]{}
\newcommand{\Authorstartcomment}[1]{}
\newcommand{\Authorfnote}[2]{}
\newcommand{\Authorfixme}[1]{}
\newcommand{\Authormarginmark}[1]{}
\newcommand{\Sinv}{\ensuremath{S_{\mathsf{inv}}}\xspace}
\newcommand{\SI}{\ensuremath{S_{\mathcal{I}}}\xspace}
\newcommand{\sI}{\ensuremath{\sigma_{\mathcal{I}}}\xspace}
\newcommand{\SO}{\ensuremath{S_{\mathcal{O}}}\xspace}
\newcommand{\sO}{\ensuremath{\sigma_{\mathcal{O}}}\xspace}
\newcommand{\Obar}{\ensuremath{\overline{O}}\xspace}
\newcommand{\dJ}{\ensuremath{\mathsf{deg}_J}\xspace}
\newcommand{\hide}[1]{}
\newcommand{\sgraph}[1]{\ensuremath{\mathcal{SG}
\ifthenelse{\equal{#1}{}}{}{(#1)}
}}
\newcommand{\cgraph}[1]{\ensuremath{\mathcal{CG}
\ifthenelse{\equal{#1}{}}{}{(#1)}
}}
\newcommand{\cpath}[2]{\ensuremath{P_{#1}
\ifthenelse{\equal{#2}{}}{}{(#2)}
}}
\DeclareMathOperator*{\argmin}{arg\,min}
\newcommand{\UncSol}{\mathcal{I}}
\definecolor{colorempty}{RGB}{255,255,255}
\definecolor{colorodd}{RGB}{255,31,0}
\definecolor{coloreven}{RGB}{0,0,191}
\definecolor{colorsolopen}{RGB}{0,0,0}
\definecolor{coloroptopen}{RGB}{127,127,127}
\definecolor{coloradjtoz}{RGB}{0,0,0}
\definecolor{colorya}{RGB}{0,127,0}
\definecolor{coloryb}{RGB}{0,0,127}
\definecolor{coloryc}{RGB}{255,127,127}
\newcommand{\drawEmpty}[3]{
    \draw [draw=colorempty, fill=colorempty] (#1*#3, #2*#3) -- (#1*#3+#3, #2*#3) -- (#1*#3+#3, #2*#3+#3) -- (#1*#3, #2*#3+#3) -- cycle;
}
\newcommand{\drawOdd}[4]{
    \draw [draw=colorodd, dash pattern=on \pgflinewidth off 1.5pt, line width=#4] (#1*#3, #2*#3) -- (#1*#3+#3, #2*#3) -- (#1*#3+#3, #2*#3+#3) -- (#1*#3, #2*#3+#3) -- cycle;
}
\newcommand{\drawEven}[4]{
    \draw [draw=coloreven, line width=#4] (#1*#3, #2*#3) -- (#1*#3+#3, #2*#3) -- (#1*#3+#3, #2*#3+#3) -- (#1*#3, #2*#3+#3) -- cycle;
}
\newcommand{\drawSolOpen}[3]{
    \draw [draw=colorsolopen, fill=colorsolopen] (#1*#3+#3, #2*#3) -- (#1*#3+#3, #2*#3+#3) -- (#1*#3, #2*#3+#3) -- cycle;
}
\newcommand{\drawOptOpen}[3]{
    \draw [draw=coloroptopen, fill=coloroptopen] (#1*#3+#3, #2*#3) -- (#1*#3, #2*#3) -- (#1*#3, #2*#3+#3) -- cycle;
}
\newcommand{\drawOddOO}[4]{
    \drawOdd{#1}{#2}{#3}{#4}
    \drawEmpty{#1}{#2}{#3}
    \drawSolOpen{#1}{#2}{#3}
    \drawOptOpen{#1}{#2}{#3}
}
\newcommand{\drawEvenOO}[4]{
    \drawEven{#1}{#2}{#3}{#4}
    \drawEmpty{#1}{#2}{#3}
    \drawSolOpen{#1}{#2}{#3}
    \drawOptOpen{#1}{#2}{#3}
}
\newcommand{\drawOddCO}[4]{
    \drawOdd{#1}{#2}{#3}{#4}
    \drawEmpty{#1}{#2}{#3}
    \drawOptOpen{#1}{#2}{#3}
}
\newcommand{\drawEvenCO}[4]{
    \drawEven{#1}{#2}{#3}{#4}
    \drawEmpty{#1}{#2}{#3}
    \drawOptOpen{#1}{#2}{#3}
}
\newcommand{\drawOddOC}[4]{
    \drawOdd{#1}{#2}{#3}{#4}
    \drawEmpty{#1}{#2}{#3}
    \drawSolOpen{#1}{#2}{#3}
}
\newcommand{\drawEvenOC}[4]{
    \drawEven{#1}{#2}{#3}{#4}
    \drawEmpty{#1}{#2}{#3}
    \drawSolOpen{#1}{#2}{#3}
}
\newcommand{\drawClient}[3]{
    \draw[draw=black, fill=white, thick] (#1*#3+#3/2, #2*#3+#3/2) circle (#3/2);
}
\newcommand{\drawSpecial}[3]{
    \draw[draw=black, fill=white, thick] (#1*#3+#3/2, #2*#3-#3/4) -- (#1*#3-#3/4, #2*#3+#3/2) -- (#1*#3+#3/2, #2*#3+#3+#3/4) -- (#1*#3+#3+#3/4, #2*#3+#3/2) -- cycle;
    \draw (#1*#3+#3/2, #2*#3+#3/2) node[align=center] {$z$};
}
\newcommand{\drawTextA}[3]{
    \draw[font=\bfseries] (#1*#3+#3/2, #2*#3+#3/2) node[align=center] {(A)};
}
\newcommand{\drawTextB}[3]{
    \draw[font=\bfseries] (#1*#3+#3/2, #2*#3+#3/2) node[align=center] {(B)};
}
\newcommand{\drawSolAssign}[5]{
    \draw[->, draw=colorsolopen, ultra thick] (#1*#5+#5/2, #2*#5+#5/2) -- (#3*#5+#5/2, #4*#5+#5/2);
}
\newcommand{\drawOptAssign}[5]{
    \draw[->, draw=coloroptopen, ultra thick] (#1*#5+#5/2, #2*#5+#5/2) -- (#3*#5+#5/2, #4*#5+#5/2);
}
\newcommand{\drawOptAssignCurve}[7]{
    \draw[draw=coloroptopen, ultra thick] (#1*#7+#7/2, #2*#7+#7/2) .. controls (#5*#7+#7/2, #6*#7+#7/2) .. (#3*#7+#7/2, #4*#7+#7/2);
}
\newcommand{\drawEdgeGrey}[5]{
    \draw[->, draw=coloradjtoz, thin] (#1*#5+#5/2, #2*#5+#5/2) -- (#3*#5+#5/2, #4*#5+#5/2);
}
\newcommand{\drawEdgeGreyCurve}[7]{
    \draw[draw=coloradjtoz, thin] (#1*#7+#7/2, #2*#7+#7/2) .. controls (#5*#7+#7/2, #6*#7+#7/2) .. (#3*#7+#7/2, #4*#7+#7/2);
}
\newcommand{\drawYa}[5]{
    \draw[->, densely dotted, draw=colorya, ultra thick] (#1*#5+#5/2, #2*#5+#5/2) -- (#3*#5+#5/2, #4*#5+#5/2);
}
\newcommand{\drawYb}[5]{
    \draw[->, loosely dotted, draw=coloryb, ultra thick] (#1*#5+#5/2, #2*#5+#5/2) -- (#3*#5+#5/2, #4*#5+#5/2);
}
\newcommand{\drawYc}[5]{
    \draw[->, draw=coloryc, ultra thick] (#1*#5+#5/2, #2*#5+#5/2) -- (#3*#5+#5/2, #4*#5+#5/2);
}
\begin{document}
\title{{Constant-Factor Approximation Algorithms\\for Parity-Constrained Facility Location Problems\footnote{
This work was supported by the National Research Foundation of Korea (NRF) grant funded by the Korea government (MSIT) (No. NRF-2019R1C1C1008934).
This work was supported by the National Research Foundation of Korea (NRF) grant funded by the Korea government (MSIT) (No. NRF-2016R1C1B1012910).
This research was supported by the Yonsei University Research Fund of 2018-22-0093.
}}}

\author{
Kangsan Kim\thanks{Department of Computer Science, Yonsei University, South Korea.
Email: \texttt{wongrikera@yonsei.ac.kr}}
,
Yongho Shin\thanks{Department of Computer Science, Yonsei University, South Korea.
Email: \texttt{yshin@yonsei.ac.kr}}
,
Hyung-Chan An\thanks{Corresponding author. Department of Computer Science, Yonsei University, South Korea. \hspace{12em}\mbox{ }
\mbox{ }\mbox{ }\mbox{ }\mbox{ }\mbox{ }\mbox{ }\mbox{ }Email: \texttt{hyung-chan.an@yonsei.ac.kr}}
}

\date{} 

\maketitle

\setcounter{page}{0}
\maketitle
\thispagestyle{empty}

\begin{abstract}
\emph{Facility location} is a prominent optimization problem that has inspired a large quantity of both theoretical and practical studies in combinatorial optimization. Although the problem has been investigated under various settings reflecting typical structures within the optimization problems of practical interest, little is known on how the problem behaves in conjunction with parity constraints. This shortfall of understanding was rather disturbing when we consider the central role of \emph{parity} in the field of combinatorics.
In this paper, we present the first constant-factor approximation algorithm for the facility location problem with parity constraints. We are given as the input a metric on a set of \emph{facilities} and \emph{clients}, the opening cost of each facility, and the \emph{parity requirement}---\oddconst, \evenconst, or \unconst---of every facility in this problem. The objective is to open a subset of facilities and assign every client to an open facility so as to minimize the sum of the total opening costs and the assignment distances, but subject to the condition that the number of clients assigned to each open facility must have the same parity as its requirement.

Although the unconstrained facility location problem as a relaxation for this parity-constrained generalization has unbounded gap, we demonstrate that it yields a structured solution whose parity violation can be corrected at small cost. This correction is prescribed by a $T$-join on an auxiliary graph constructed by the algorithm. This graph does not satisfy the triangle inequality, but we show that a carefully chosen set of shortcutting operations leads to a cheap and \emph{sparse} $T$-join. Finally, we bound the correction cost by exhibiting a combinatorial multi-step construction of an upper bound. At the end of this paper, we also present the first constant-factor approximation algorithm for the parity-constrained $k$-center problem, the bottleneck optimization variant.
\end{abstract}

\medskip
\noindent
{\small \textbf{Keywords:}
facility location problems, approximation algorithms, clustering problems, parity constraints, \linebreak $k$-center problem
}

\newpage

\section{Introduction}

\emph{Parity} plays a central role in a myriad of topics in combinatorics. This is so natural that one would not need examples; yet, we remark that, as a short sample of previous works, Schrijver and Seymour~\cite{schrijver1994} for example studied the packing of odd paths, Everett et al.~\cite{everett1997} and Kami\'nski \& Nishimura~\cite{kaminski2012} considered induced path parities in connection with the theory of perfect graphs, and Kakimura et al.~\cite{kakimura2012} studied the packing of parity-constrained cycles intersecting a given vertex set.
Naturally, there also exists a large volume of previous research that incorporates parity constraints into different combinatorial optimization problems. Submodular function minimization~\cite{grotschel1981ellipsoid,grotschel1984}, the minimum cut problem~\cite{padberg1982,benczur2000}, the shortest path problem (\emph{cf.}~\cite{grotschel1981weakly}), and the connected subgraph problem~\cite{sebo2013,cheriyan2015} are all examples of such problems. However, introducing parity constraints to combinatorial optimization problems usually results in a significant level of added complexity in their algorithms, and perhaps due to this difficulty, not all parity-constrained combinatorial optimization problems are as well studied as one would expect from the centrality of parity in this field.

The \emph{facility location problem} is one of the prominent optimization problems that has guided a large volume of studies in both computer science and operations research (see, e.g.,~\cite{balinski1964,alfred1963,alan1964,john1963}). In this problem, we are given as the input a set of \emph{facilities} and a set of \emph{clients}, along with the opening cost of each facility and the metric distance between every pair of facility and client. The goal of the problem is to choose a subset of facilities to \emph{open} and a clustering that \emph{assigns} every client to an open facility, so as to minimize the sum of the facility opening costs and the distance between each client and the facility it is assigned to. While the facility location problem also served as a test bed on which a variety of algorithmic theories were developed, another primary reason the problem has attracted the interests of many researchers is that it closely reflects the structure of optimization problems witnessed in practice. Precisely for this reason, facility location problems are studied in a wide variety of settings that better reflect typical constraints imposed on the problem,
including the capacitated version~\cite{alfred1963,korupolu2000,pal2001,bansal2012,an2017} that places an upper bound on the number of clients assigned to an open facility,
online and/or dynamic variants~\cite{cygan2018, goranci2018, eisenstat2014evolving, lammersen2008},
mobile facility location~\cite{ahmadian2013},
planar versions~\cite{marx2015},
and the lower-bounded version that imposes a lower bound on the number of clients assigned to an open facility~\cite{li2019}.
Unfortunately, in conjunction with parity constraints, it was not previously known how this well-understood problem behaves on the other hand.

This paper aims at filling this gap. In the \emph{$O$-facility location problem}, a subset of facilities $O$ is specified as part of the input, in addition to the usual input for the unconstrained facility location problem. The goal of the problem is still to find a minimum-cost subset of open facilities with a clustering of the clients, but now we also need to ensure an additional constraint that the number of clients assigned to each open facility $i$ must be odd if $i\in O$, and even otherwise. It is particularly surprising that we do not have a proper understanding of this generalized version to this date, especially when we consider its practical relevance. In many problems that seek an optimal clustering, we sometimes have a strong preference for either parity of the cluster sizes. For example, Ahamad and Ammar~\cite{ahmed1989} demonstrate that the performance of a \emph{distributed database system} (DDBS), measured by success rates and mean response times, depends on the parity of the number of storage sites. In fact, this preferred parity is determined as a function of the server failure rates and the ratio between the number of read and write transactions. Thus, the task of clustering a given set of storage sites into multiple instances of distributed database systems, which host different applications whose parameters vary, can be formulated as an $O$-facility location problem. Preference on a particular parity can also be witnessed in other distributed system design settings (see, e.g., \cite{zookeeper}) or even outside the realm of computer science and operations research~\cite{frank1971,menon2011}.

Another extensively studied clustering problem is the \emph{$k$-center problem}, a bottleneck optimization variant of facility location. Given an integer $k$ and a metric on a set of nodes, the goal of this problem is to choose at most $k$ nodes as \emph{centers} and assign every node to one of these centers. In contrast to the (min-sum) facility location problem, the objective here is to minimize the \emph{maximum} assignment distance. In applications, depending on the nature of the assignment distances, one may be more interested in this bottleneck objective rather than min-sum: imagine for example that in the problem of clustering storage sites the assignment distances were given as the propagation delays of communication links, in which case the maximum propagation delay would have greater significance compared to the ``total'' delay.
The $k$-center problem was studied in various settings as well, including capacitated versions~\cite{khuller2000,cygan2012}, dynamic/online settings~\cite{lang2018,chan2018}, planar graphs~\cite{eisenstat2014planar,demaine2005}, and lower-bounded versions~\cite{ene2013}.

However, a similar lack of understanding again exists for parity constraints. In the \emph{parity-constrained $k$-center problem}, each node is labeled with one of the following as part of the input: \oddconst, \evenconst, or \unconst. The objective of the problem is now to find an assignment where the number of nodes assigned to each center has the same parity as its label. Note that, unlike its min-sum counterpart, we allowed \emph{three} types of parity constraints (or absence thereof). This is because, for the facility location problem, there is a very simple equivalence argument (see Section~\ref{sec:dproofs}) between the $O$-facility location problem and the version that allows unconstrained facilities.

In this paper, we present the first $O(1)$-approximation algorithm for the parity-constrained facility location problem: first for the special case where $O=\emptyset$, and then for the general case.

\begin{theorem}\label{thm:alleven}
There exists an $O(1)$-approximation algorithm for the $\emptyset$-facility location problem.
\end{theorem}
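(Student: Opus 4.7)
My plan is to use the unconstrained facility location problem as a relaxation and then to repair all parity violations of its solution via a minimum-cost $T$-join in a carefully constructed auxiliary graph.

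\emph{Relaxation.} Every feasible solution to the $\emptyset$-facility location instance is also feasible for unconstrained facility location, so $\OPT_{\text{UFL}}\le\OPT$. I would therefore invoke any known constant-factor approximation for unconstrained facility location to obtain $(\SFL,\sFL)$ of cost $O(\OPT)$. Let $T\subseteq\SFL$ denote the set of open facilities to which an odd number of clients is assigned under $\sFL$; feasibility of the instance forces the total number of clients to be even, so $|T|$ is even and $T$-joins exist.

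\emph{Auxiliary graph and repair.} I would build a weighted graph $H$ whose vertex set is \SFL\ (possibly augmented with a small set of currently closed facilities that may need to be opened during the repair phase). The weight of an edge $\{i,i'\}$ would be the cheapest atomic parity swap between $i$ and $i'$: reassigning a single client $j$ with $\sFL(j)\in\{i,i'\}$ to the other endpoint, minimised over the admissible choices of $j$. Given any $T$-join $J$ in $H$, executing the underlying sequence of single-client reassignments flips the parity of exactly the facilities in $T$ and hence produces a feasible $\emptyset$-FL solution of cost at most $\text{cost}(\sFL)+w(J)$. It therefore suffices to compute a minimum-weight $T$-join and to bound $w(J)\le O(\OPT)$.

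\emph{Bounding the $T$-join via OPT.} To exhibit a cheap $T$-join I would use a hypothetical optimum $(\SO,\sO)$ as a witness. Consider the bipartite multigraph on $\SFL\cup\SO$ whose edges are the pairs $(\sFL(j),\sO(j))$, one per client $j$. Since every facility in \SO\ has even \sO-degree, a parity-counting argument shows that a facility $i\in\SFL$ has odd degree in this multigraph if and only if $i\in T$; the multigraph therefore decomposes into walks and closed trails whose \SFL-endpoints are exactly $T$. Collapsing the \SO-vertices along each such walk yields a candidate $T$-join $J^{\circ}$ in $H$, and I would upper-bound the weight of each edge of $J^{\circ}$ by the $\sFL$- and $\sO$-distances of a constant number of witness clients via triangle inequalities in the underlying metric. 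Summing over all edges then gives $w(J^{\circ})=O(\text{cost}(\sFL)+\text{cost}(\sO))=O(\OPT)$, which upper-bounds the minimum $T$-join.

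\emph{Main obstacle.} The principal difficulty, and what distinguishes this problem from a routine min-cost $T$-join reduction, is that $H$ does not inherit the triangle inequality: the cheapness of a parity transfer between two facilities depends on the existence of a suitable client near both of them rather than on the metric distance alone. As a consequence, a direct shortcut between non-consecutive facilities along a collapsed walk may be arbitrarily more expensive than the walk itself, or may not correspond to any single edge at all. Handling this cleanly requires designing the walk-to-$T$-join projection so that distinct witness clients pay for each shortcut, so that no client's distance is charged more than $O(1)$ times across the entire construction, and so that the resulting $T$-join is \emph{sparse} at every vertex. This combinatorial, multi-step bookkeeping is, I expect, where the main technical effort will lie.
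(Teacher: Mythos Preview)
Your plan is sound, but it is not how the paper proves Theorem~\ref{thm:alleven}. For the all-even case the paper avoids $T$-joins entirely and uses a much more elementary reduction: since every open facility in any feasible solution serves an even number of clients, the optimal assignment can be shortcut into a perfect matching on $D$ of cost at most $c(\sO)$. The algorithm computes a minimum perfect matching $M^\star$ on $D$, picks one random endpoint of each matching edge as a representative, solves unconstrained facility location on the representatives only, and then assigns every non-representative to the same facility as its matched partner. Feasibility is immediate (each facility receives clients in pairs) and a short analysis gives a $2\rho_{\mathsf{FL}}$-approximation.

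What you sketch is instead the paper's strategy for the \emph{general} problem (Theorem~\ref{thm:general}, Section~\ref{sec:general}), specialised to $O=\emptyset$. That specialisation does go through and is in fact cleaner than the general case: since $|D|$ must be even, $|T|$ is automatically even; no opening or closing edges are needed; and the $T$-odd-component repair step (the paper's $Y_3$) becomes vacuous, so the edge set $\{(\sFL(j),\sO(j)):j\in D\}$ is already a $T$-join dominator of cost at most $c(\sFL)+c(\sO)$. One remark on your stated obstacle: if you take the reassign-edge weight to be the metric distance $c(i,i')$ rather than a client-dependent swap cost, the triangle inequality is immediate and the dominator bound is a one-liner; the genuine difficulty is then \emph{executability} of the $T$-join (each reassignment needs a client at its source), which is what the sparsification step is for. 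Your route buys generality at the price of heavier machinery; the matching route buys simplicity and a slightly better constant but depends essentially on $O=\emptyset$.
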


\begin{theorem}\label{thm:general}
There exists an $O(1)$-approximation algorithm for the $O$-facility location problem for arbitrary $O$.
\end{theorem}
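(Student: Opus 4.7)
The plan is to derive Theorem~\ref{thm:general} from Theorem~\ref{thm:alleven} via a metric-preserving reduction. Given an instance $\mathcal{I}=(F, C, d, f, O)$ of the $O$-facility location problem, I form a companion $\emptyset$-instance $\mathcal{I}'=(F, C\cup D, d', f, \emptyset)$ by introducing, for each $i\in O$, a \emph{phantom} client $c_i$ co-located with $i$; formally $d'(c_i, p):=d(i, p)$ for every other point $p$, which extends $d$ to a metric on $C\cup D\cup F$. Every facility in $\mathcal{I}'$ must now receive an even number of clients, so applying Theorem~\ref{thm:alleven} to $\mathcal{I}'$ yields a constant-factor approximate solution that I would then translate back to a feasible solution for $\mathcal{I}$.

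To complete the reduction I would prove two cost-matching lemmas. For the forward direction, I start from an optimal $\mathcal{I}$-solution $S^*$ and construct a feasible $\mathcal{I}'$-solution of cost $O(\OPT(\mathcal{I}))$: each phantom $c_i$ is assigned to $i$ itself when $S^*$ opens $i$ (at cost $0$, and the parity at $i$ flips from odd to even as required); when $S^*$ does not open $i\in O$, I open $i$ as an additional facility in the constructed $\mathcal{I}'$-solution and route $c_i$ to it. For the backward direction, given the $O(1)$-approximation $S'$ for $\mathcal{I}'$, I strip the phantoms and, if this leaves any open facility with the wrong parity relative to $O$, apply a final cheap correction --- either a constant number of client swaps per violation or, if violations are correlated, another $T$-join on an auxiliary graph in the spirit of the algorithm proving Theorem~\ref{thm:alleven}.

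The main obstacle I anticipate is bounding the extra facility-opening costs paid in the forward direction when many $i\in O$ are left closed by $S^*$; naive accounting can exceed $\OPT(\mathcal{I})$ by more than a constant factor. I would argue that in any optimal $O$-FL solution, an $O$-labeled facility $i$ left closed must be ``witnessed'' by nearby real clients whose $S^*$-assignment distances already pay $\Omega(f_i)$ (otherwise opening $i$ and redirecting a single local client would improve the objective), giving the required bound. If this charging argument does not close the gap, the fallback plan is to forego the reduction and instead generalize the algorithm of Theorem~\ref{thm:alleven} in place: redefine its target set $T$ as the set of open facilities whose current cluster-size parity disagrees with its $O$-label, and verify that the auxiliary-graph construction, shortcutting procedure, and combinatorial multi-step cost bound are parity-agnostic --- they only use $T$ as an abstract parity-correction target --- and thus lift directly to arbitrary $O$.
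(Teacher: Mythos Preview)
Your primary plan --- reducing the $O$-instance $\mathcal{I}$ to an $\emptyset$-instance $\mathcal{I}'$ by attaching a phantom client $c_i$ to every $i\in O$ --- has an unbounded gap, not merely an accounting difficulty. In $\mathcal{I}'$ every phantom must be served, regardless of whether the original optimum opens its host facility. Take a single even-constrained facility $e$ with $f(e)=0$, two real clients co-located with $e$, and $N$ odd-constrained facilities at mutual distance $L$ from $e$ and from one another, each with opening cost $L$. Then $\OPT(\mathcal{I})=0$, but every feasible $\mathcal{I}'$-solution must pay at least $\Omega(NL)$ to serve the $N$ phantoms, so $\OPT(\mathcal{I}')/\OPT(\mathcal{I})$ is unbounded. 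Your charging argument (``a closed $i\in O$ must be witnessed by nearby real clients whose assignment cost in $S^*$ is $\Omega(f_i)$'') is simply false here: no real client is anywhere near these facilities, and opening any of them would only worsen the objective. There is also a local infeasibility in your forward construction: when $i\in O\setminus S^*$, opening $i$ and routing only $c_i$ to it yields $|\sigma'^{-1}(i)|=1$, which is odd, so the constructed $\mathcal{I}'$-solution is not even feasible.

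Your fallback plan rests on a misreading of Theorem~\ref{thm:alleven}. The all-even algorithm does \emph{not} build an auxiliary graph on facilities with a target set $T$ and a $T$-join; it computes a minimum perfect matching on the \emph{clients}, halves the client set, and runs unconstrained facility location on the representatives. There is no ``$T$ of open facilities with wrong parity'' to redefine, and the matching trick depends essentially on every optimal cluster having even size so that it can be shortcut into a client matching --- this breaks the moment a facility may be odd-constrained. The paper explicitly notes this and develops a \emph{separate} algorithm for general $O$: it first solves the unconstrained instance, then builds an auxiliary graph on $F\cup\{z\}$ with three edge types (reassign, opening, closing), computes a minimum $T$-join for $T=\Sinv$ (plus $z$ if $|\Sinv|$ is odd), sparsifies it via shortcutting, and applies the prescribed corrections. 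The substantive work is not the correction step but bounding the $T$-join cost: one exhibits a $T$-join dominator $Y=Y_1\cup Y_2\cup Y_3$ and proves (Lemma~\ref{existence-of-closed-odd-fac-in-T-odd}) that every $T$-odd component not containing $z$ must contain an odd-constrained facility open in the initial solution but closed in the optimum, which is what makes the closing edges in $Y_3$ available and affordable. None of this is a ``parity-agnostic lift'' of the matching argument; it is a different construction whose analysis genuinely uses the interplay between odd-constrained and even-constrained facilities.
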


\noindent
We also present the first $O(1)$-approximation algorithm for the parity-constrained $k$-center problem.

\begin{theorem}\label{thm:kcentermain}
There exists an $O(1)$-approximation algorithm for the parity-constrained $k$-center problem.
\end{theorem}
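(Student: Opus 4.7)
The plan is to build on the classical threshold-graph framework for $k$-center: guess the optimal radius $r^{\star}$ by enumerating the $\binom{n}{2}$ pairwise distances, and for each guess try to produce a feasible parity-constrained clustering whose radius is bounded by $c \cdot r^{\star}$ for a universal constant $c$. The smallest guess that succeeds then yields the $O(1)$-approximation.

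For a fixed guess $r^{\star}$, I would first compute an initial unconstrained $k$-clustering in the classical way---\emph{e.g.}, via a Hochbaum--Shmoys-style maximal independent set in the square of the threshold graph---producing at most $k$ centers such that every node lies within $2r^{\star}$ of its assigned center. This solution will in general violate parity at some set $T$ of \emph{defective} centers. To repair parities while blowing up the radius by only a constant factor, I would perform single-node shifts between \emph{adjacent} clusters, where two clusters are called adjacent if they contain a pair of nodes at distance at most $r^{\star}$ (equivalently, if their centers lie within a bounded multiple of $r^{\star}$). Each such shift flips the parity of exactly the two clusters it touches and increases at most one node's assignment distance by an additional $O(r^{\star})$.

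The sequence of shifts is prescribed by a minimum-cardinality $T$-join in the cluster adjacency graph, which can be computed in polynomial time by standard matching techniques. A crucial feature here---very different from the min-sum situation treated earlier in the paper---is that the cost blow-up is bottleneck-style rather than additive: even if a $T$-join path is long, each \emph{individual} shifted node crosses only one cluster boundary along the path, so its new assignment distance remains $O(r^{\star})$. Hence the final radius is at most a constant multiple of $r^{\star}$.

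The main obstacle will be handling connected components of the cluster adjacency graph in which an odd number of constrained centers are defective, since no $T$-join of shifts can pair them up. I would resolve this by using unconstrained centers as parity sinks: a shift terminating at an unconstrained cluster flips only the constrained donor's parity, and such a sink can absorb the residual parity of the component. For a component containing no unconstrained center, a global parity-counting argument that compares the sum of required parities over our centers to that of OPT, together with the fact that both clusterings partition the same $n$ nodes, shows that the number of defective constrained centers in the component must actually be even, so no sink is needed. Combining these cases and verifying that shifts never increase the number of open centers beyond $k$ completes the proof.
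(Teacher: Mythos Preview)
Your parity-counting claim for components without unconstrained centers is the gap, and it is fatal as stated. You assert that comparing ``the sum of required parities over our centers to that of OPT'' forces the number of defective centers in such a component to be even. But OPT and your Hochbaum--Shmoys solution may open \emph{different} sets of centers, so the number of odd-constrained open centers can differ in parity between the two. Concretely, in a component $C$ one has
\[
|T\cap C| \;\equiv\; \bigl|\{u\in S\cap C:\pi(u)=\oddconst\}\bigr| \;-\; \bigl|\{u\in S^{\star}\cap C:\pi(u)=\oddconst\}\bigr| \pmod 2,
\]
and nothing forces the right-hand side to vanish. A three-node instance already breaks your argument: nodes $a,b,c$ pairwise at distance $r^{\star}$, with $\pi(a)=\oddconst$ and $\pi(b)=\pi(c)=\evenconst$. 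OPT opens $a$ and assigns all three there; your initial clustering may open $b$ instead, giving a single defective center. There is no unconstrained sink, $|T|=1$ is odd, and no $T$-join exists. Shifts between existing clusters cannot repair this because there is only one cluster.

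What is missing is the ability to \emph{change the set of open centers}, not just to shift clients. The paper's algorithm resolves precisely this case by either closing an odd-constrained open center or opening a closed odd-constrained node (its Cases~(B) and~(C)); the existence of such a node is guaranteed once infeasibility ($|V|$ odd with every node even-constrained) has been ruled out. After this one open/close step the defect count becomes even and the path-based reassignments go through. Your scheme needs an analogous mechanism; with shifts alone it cannot succeed.
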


The difficulty of the classic unconstrained facility location problem lies in the fact that it is a ``joint optimization'' problem: it is trivial to find an optimal assignment when the set of open facilities is given, but the simultaneous optimization of the choice of open facilities along with the assignment makes the problem difficult and, in fact, NP-hard. The $O$-facility location problem is a generalization of the unconstrained facility location problem and therefore inherits this difficulty. Moreover, even when the set of open facilities is given, it is not as easy to find an optimal assignment for this problem (although polynomial-time solvable).

In order to obtain good approximation algorithms for the unconstrained facility location problem, many algorithmic tools have been used. In particular, linear programming (LP) relaxations and methods based on them have been  successful~\cite{shmoys1997,jain2001,jain2002,byrka2007,li2011}.
Unfortunately, however, the $O$-facility location problem does not appear amenable to LP-based techniques, and it is easy to show that the standard LP relaxation devised in the context of the unconstrained problem has an unbounded integrality gap, i.e., the LP optimum can be away from the true optimum by an arbitrarily large factor. In fact, even the integral optimum to the unconstrained instance obtained by dropping the parity constraints can be arbitrarily away from the true optimum.\footnote{Consider an instance with two pairs of an even-constrained facility and a client, where the distance within each pair is zero and one across. Both opening costs are zeroes.} Despite this gap, to our surprise, we will prove that the approximation algorithms for the unconstrained facility location problem can serve as a useful subroutine of an approximation algorithm for the parity-constrained generalization.

In Section~\ref{sec:alleven} and Appendix~\ref{app:even}, we present our $O(1)$-approximation algorithm for the \emph{all-even} case, i.e., $O=\emptyset$. The algorithm begins with finding a minimum perfect matching on the set of clients. Using the fact that every facility is assigned an even number of clients in an optimal solution, we can ``shortcut'' the optimal solution into a perfect matching, bounding the minimum cost of a perfect matching. We then reduce the given instance to an instance of the unconstrained facility location problem by designating one of the two matched clients as the representative, at the cost of a constant multiplicative factor in the approximation ratio.

This clean approach, however, crucially relies on the fact that every facility is even-constrained, and does not extend to the general case. This necessitates a totally different approach, which is presented in Section~\ref{sec:general}. The first step of our $O(1)$-approximation algorithm for the general case is to drop the parity constraints and solve this instance using an algorithm for the unconstrained problem.\footnote{The analysis in Section~\ref{sec:general} treats the algorithm for the unconstrained problem as a black box to give the ratio of 6.464; using a bi-factor approximation algorithm leads to an improvement over this ratio.} The cost of the obtained approximate solution is a lower bound on the true optimum, but as was noted earlier, it may be arbitrarily smaller than the true optimum. However, we show that the parity violation of this \emph{initial solution} can be repaired by performing a set of three types of operations: reassigning one client from an open facility to another, opening a new facility, and closing down an open facility after reassigning all its clients to another facility. We observe that, interestingly, it suffices to permit the third type of operation only when the facility is odd-constrained, which in turn allows such a set of operations to be encoded as a \emph{sparse} $T$-join on an auxiliary graph. The auxiliary graph has three types of edges corresponding to the three types of operations. The key step of the analysis is to show that the minimum cost of a $T$-join of the auxiliary graph is bounded by a linear combination of the initial solution (despite the gap) and an optimal solution.

In contrast to the $O$-facility location problem, the parity-constrained $k$-center problem turns out to be quite amenable to existing techniques. In Appendix~\ref{app:kcenter}, we show that the standard method to solve bottleneck optimization problems, namely guessing the optimum and considering an unweighted graph of ``admissible'' edges along which assignments can be made, remains useful for this problem. The algorithm considers each connected component of this graph to perform a long ``chain'' of reassignments, similar to other variants of the $k$-center problem~\cite{cygan2012,an2015}. 
\footnote{We obtain a $6$-approximation algorithm that runs in $O(|V|^2\log |V|)$ time; whilst this approximation ratio is obtained under the standard problem definition where there is no distinction between facilities and clients, we can easily extend the analysis to obtain an $O(1)$-approximation algorithm even when the distinction exists.}

\paragraph{Future directions.}
One of the interesting questions that follow this paper is whether we can write an algorithmically useful LP relaxation for the $O$-facility location problem. As we could use an LP-based approximation algorithm for the unconstrained facility location problem as a subroutine of our algorithm, one could say that our algorithm can technically be an LP-based algorithm; this, of course, is not a satisfactory answer. Rather than having to solve an LP relaxation which itself is parameterized by a rounded integral solution to another relaxation (which is the case for our algorithm), it would be interesting to have a single relaxation that can be separated in polynomial time and solved to obtain an $O(1)$-approximate lower bound on the optimum. Recall that, for the minimum-cost $T$-join problem, an exact and polynomial-time separable relaxation exists~\cite{edmonds1973}.

\vspace{.7ex}

Another intriguing future direction is in introducing parity constraints to further combinatorial optimization problems. As we noted earlier, to our surprise considering the prominence of parity in combinatorics, there remain many parity-constrained optimization problems yet to be studied. We envision that a further understanding of our algorithm, particularly if we can positively answer our first open question, may lead to extending our knowledge to other parity-constrained optimization problems.

\vspace{2ex}

\section{Preliminaries}
\label{sec:probdefn}

\paragraph{Problem definition.}
As the input of the \emph{$O$-facility location problem}, we are given a set of \emph{facilities} $F$, a set of \emph{clients} $D$, \emph{opening costs} $f:F\to\mathbb{Q}_{\geq0}$, \emph{assignment costs}\footnote{Assignment costs are sometimes defined only between facilities and clients. In this ``bipartite'' case, the domain of $c$ will be defined as $F\times D$ instead. These two definitions, however, are equivalent, since we can deduce inter-facility (and inter-client) distances by computing the metric closure of the given ``bipartite'' assignment cost.} $c:\binom{F\cup D}{2}\to\mathbb{Q}_{\geq 0}$ satisfying the triangle inequality, and a set of facilities $O\subseteq F$ that, if open, are required to be assigned odd number of clients.

A feasible solution to the problem is given by a set of \emph{open} facilities $S\subseteq F$ and an \emph{assignment} of clients $\sigma:D\to S$ to the open facilities. In order for $(S,\sigma)$ to be a feasible solution, it must satisfy the parity constraints: for all $i\in S\cap O$, $|\sigma^{-1}(i)|$ must be odd; for all $i\in S\cap \overline{O}$, $|\sigma^{-1}(i)|$ must be even. The objective is to find a feasible solution that minimizes the total solution cost, defined as $\sum_{i\in S}f(i)+\sum_{j\in D}c(\sigma(j),j)$.

\paragraph{An equivalent problem definition.}
Alternatively, we can define our problem as taking a parity constraint function $\pi:F\to\{\mathsf{odd},\mathsf{even},\mathsf{unconstrained}\}$ instead of $O$. In this case, the parity constraint is redefined as follows: for each $i\in F$, $|\sigma^{-1}(i)|$ must be odd if $\pi(i)=\mathsf{odd}$ and even if $\pi(i)=\mathsf{even}$. (If $\pi(i)=\mathsf{unconstrained}$, we do not impose any parity constraints on $|\sigma^{-1}(i)|$.) 

\begin{obs}\label{obs:equiv}
The two problem definitions are equivalent.
\end{obs}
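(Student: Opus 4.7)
The goal is to exhibit, in both directions, a cost-preserving transformation between instances of the two problem definitions. The forward direction (from $O$-facility location to the $\pi$-version) is immediate: given $O \subseteq F$, define $\pi(i) = \mathsf{odd}$ for $i \in O$ and $\pi(i) = \mathsf{even}$ for $i \in \overline{O}$. The parity constraints imposed by $\pi$ coincide verbatim with those of $O$ (no facility receives the label $\mathsf{unconstrained}$), so feasible solutions and their costs are preserved identically.

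The substantive direction is the reverse, i.e., simulating $\mathsf{unconstrained}$ facilities with only odd/even constraints. The plan is a standard duplication reduction: from a $\pi$-instance $(F,D,f,c,\pi)$ I would construct an $O$-instance on facility set $F' := F_{\mathrm{fix}} \cup \{i_{\mathsf{odd}}, i_{\mathsf{even}} : \pi(i)=\mathsf{unconstrained}\}$, where $F_{\mathrm{fix}}$ consists of the facilities with $\pi(i) \in \{\mathsf{odd},\mathsf{even}\}$. The two copies $i_{\mathsf{odd}}$ and $i_{\mathsf{even}}$ are co-located with $i$ (so their distances to all other points equal those of $i$), both inherit the opening cost $f(i)$, and I set $O := \{i \in F_{\mathrm{fix}}: \pi(i)=\mathsf{odd}\} \cup \{i_{\mathsf{odd}}\}$. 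The client set, distances among clients, and the rest of the metric are unchanged; inter-facility distances involving the new copies are inherited from the metric closure so the triangle inequality is maintained.

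The verification is then a cost-preservation argument in each direction. Given a feasible solution $(S,\sigma)$ to the $\pi$-instance, I would produce a feasible solution $(S',\sigma')$ to the $O$-instance of identical cost: for every $i \in S$ with $\pi(i)=\mathsf{unconstrained}$, include exactly one of $i_{\mathsf{odd}}, i_{\mathsf{even}}$ in $S'$ according to the parity of $|\sigma^{-1}(i)|$, and reroute the clients to that copy; other facilities are kept as is. Opening costs and assignment distances are preserved since the copies are co-located with $i$. Conversely, given a feasible $(S',\sigma')$ for the $O$-instance, I would construct a $\pi$-feasible $(S,\sigma)$ of cost no greater: for each original unconstrained facility $i$, open $i$ in $S$ whenever at least one of its copies is in $S'$, and assign all clients routed to either copy to $i$. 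The resulting assignment cost is unchanged (co-location) and the opening cost is at most that in $(S',\sigma')$, with strict savings $f(i)$ whenever both copies were opened simultaneously. Together, these two transformations show optimal costs coincide, establishing the observation.

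The only mildly delicate point — and the one I would make sure to state explicitly — is that the $O$-instance might, in principle, open both copies $i_{\mathsf{odd}}$ and $i_{\mathsf{even}}$ of the same unconstrained facility; the argument above handles this by observing that merging them in the $\pi$-instance is always at least as cheap, so this possibility never obstructs the equivalence of optima.
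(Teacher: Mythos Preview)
Your proposal is correct and follows essentially the same approach as the paper: the forward direction is identical, and for the reverse direction both you and the paper use the duplication reduction (two co-located copies of each unconstrained facility, one placed in $O$). The only difference is presentational---the paper dispatches the ``both copies open'' case with a one-line appeal to nonnegativity of opening costs, whereas you spell out the merging argument explicitly; the content is the same.
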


\noindent Its proof is deferred to Section~\ref{sec:dproofs}. Note that this observation shows the NP-hardness of the $O$-facility location problem.

\paragraph{Notation.}
To simplify the presentation, we introduce the following shorthands: for any $S\subseteq F$, let $f(S):=\sum_{i\in S}f(i)$, and for any $\sigma:D\to S$, let $c(\sigma):=\sum_{j\in D}c(\sigma(j),j)$. Using this notation, the objective function of the problem can be rewritten as $f(S)+c(\sigma)$. For $D^\prime\subseteq D$ and $\sigma:D\to S$, let $\sigma|_{D^\prime}:D^\prime \to S$ denote the restriction of $\sigma$ to $D^\prime$ as the new domain, i.e., $\sigma|_{D^\prime}(j)=\sigma(j)$ for all $j\in D^\prime$. Accordingly, $c(\sigma|_{D^\prime})$ is defined as $c(\sigma|_{D^\prime}):=\sum_{j\in D^\prime}c(\sigma|_{D^\prime}(j),j)=\sum_{j\in D^\prime}c(\sigma(j),j)$. Let $\sigma^{-1}:S\to D$ denote the inverse function of $\sigma$, i.e., $\sigma^{-1}(i):=\{j\in D\mid \sigma(j)=i\}$. We will slightly abuse the notation by letting $\sigma^{-1}(i):=\emptyset$ for $i\in F\setminus S$.

\paragraph*{Additional definitions.}

Let $G=(V,E)$ be a graph. For $T\subseteq V$, we say $J\subseteq E$ is a \emph{$T$-join} if, for every vertex $v\in V$, the number of edges in $J$ that are incident with $v$ is odd if and only if $v\in T$. Given a weighted graph, the minimum-cost $T$-join can be found in polynomial time~\cite{edmonds1973} (see also \cite{schrijver2003}).

Given $T\subseteq V$, we say $U\subseteq V$ is \emph{$T$-odd} if $|U \cap T|$ is odd, and $Y\subseteq E$ is a \emph{$T$-join dominator} if, for every $T$-odd set $U \subseteq V$, there exists at least one edge in $Y$ that has exactly one endpoint in $U$. 
\begin{lemma}[\cite{edmonds1973,schrijver2003}]
Given a weighted graph $G=(V,E)$ with $T\subseteq V$ and a $T$-join dominator $Y$, the minimum cost of a $T$-join on $G$ is no greater than the cost of $Y$.
\end{lemma}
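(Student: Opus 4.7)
The plan is to construct an explicit $T$-join $J$ contained in $Y$; since the edge weights are nonnegative (the paper's costs lie in $\mathbb{Q}_{\geq 0}$), this immediately gives $c(J) \leq c(Y)$ and hence the claimed bound.

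First, I would define $T_Y := \{v \in V : \deg_Y(v) \text{ is odd}\}$, where $\deg_Y(v)$ is the degree of $v$ in the subgraph $(V, Y)$. By construction $Y$ is itself a $T_Y$-join, but in general $T_Y \neq T$. The goal is to find a subset $J' \subseteq Y$ whose odd-degree set is exactly $T \triangle T_Y$; then $J := Y \setminus J'$ has odd-degree set $T_Y \triangle (T \triangle T_Y) = T$, so $J \subseteq Y$ is a $T$-join as required.

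Next, I would invoke the standard graph-theoretic fact that a set $U \subseteq V$ is the odd-degree set of some subgraph of a graph $H$ if and only if $|U \cap C|$ is even for every connected component $C$ of $H$. (One direction is the handshake lemma applied within each component; the other is an inductive path-pairing argument within each component.) Applied to $H = (V, Y)$ and $U = T \triangle T_Y$, it suffices to show that $|(T \triangle T_Y) \cap C|$ is even for every connected component $C$ of $(V, Y)$.

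The main step, then, is the following parity argument for a fixed component $C$. On one hand, $|C \cap T_Y|$ counts the odd-degree vertices of the subgraph of $(V,Y)$ induced on $C$, which is even by the handshake lemma. On the other hand, I claim $|C \cap T|$ is also even: were it odd, $C$ would be a $T$-odd subset of $V$, so by the $T$-join dominator property $Y$ would contain an edge with exactly one endpoint in $C$; but $C$ is a connected component of $(V, Y)$, so no edge of $Y$ crosses the cut separating $C$ from $V \setminus C$, a contradiction. Since $|A \triangle B| \equiv |A| + |B| \pmod{2}$, we conclude $|(T \triangle T_Y) \cap C|$ is even, completing the verification. I do not expect any serious obstacle here: the argument needs only the elementary connected-component parity characterization together with the defining property of a $T$-join dominator, and in particular does not invoke the LP duality or the Edmonds--Johnson $T$-join polytope machinery that is typically used for stronger fractional versions of this statement.
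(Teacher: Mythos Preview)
Your argument is correct. One small point worth making explicit: the existence of a $T$-join dominator already forces $|T|$ to be even (take $U=V$), so you are not tacitly assuming this.

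The paper does not supply a proof of this lemma; it is quoted as a known result from Edmonds--Johnson and Schrijver. Your approach---exhibiting a $T$-join $J\subseteq Y$ via the component-parity characterization of odd-degree sets---is the standard elementary proof under nonnegative edge costs, and it is exactly what is needed here since all costs $\gamma$ on the auxiliary graph are nonnegative. The cited references establish the stronger polyhedral statement (that the characteristic vector of any $T$-join dominator lies in the dominant of the $T$-join polytope, equivalently that the $T$-cut covering LP is integral), which would also cover negative costs; your route trades that generality for a short self-contained argument that avoids LP duality entirely.
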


Given two sets $P$ and $Q$, let $P \triangle Q$ denote the symmetric difference of the sets, i.e., $P \triangle Q := (P \setminus Q) \cup (Q \setminus P)$.
Finally, let $\mathcal{A}_\mathsf{FL}$ denote a $\rho_\mathsf{FL}$-approximation algorithm for the \emph{unconstrained} facility location problem.

\section{\texorpdfstring{$O(1)$}{Lg}-Approximation for the All-Even Case}\label{sec:alleven}

In this section, we present a constant-factor approximation algorithm for a special case of the problem where the parity constraint of every facility is even, i.e., $O=\emptyset$.

\paragraph{Our algorithm.}
We first find a minimum-cost perfect matching $M^\star$ on $D$, using $c$ as the cost function.
For each $e \in M^\star$, we independently choose one of the two endpoints of $e$ uniformly at random. Let $j_e$ be the chosen client and $\widehat{j_e}$ be the remaining one.
Let $D^\prime$ be the set of chosen clients, i.e., $ D^\prime = \{j_e \;|\; e \in M^\star\}$.
We now construct an unconstrained facility location instance where the client set is replaced with $D^\prime$. The rest of the input ($F$, $c$, and $f$) remains the same.
We execute $\mathcal{A}_\mathsf{FL}$ on this instance; let $S_\UncSol\subseteq F$ and $\sigma_\UncSol:D^\prime\to S_\UncSol$ denote the solution returned by $\mathcal{A}_\mathsf{FL}$. We construct a solution $S_\mathsf{ALG}$ and $\sigma_\mathsf{ALG}:D\to S_\mathsf{ALG}$ to our problem as follows: we choose $S_\mathsf{ALG}$ simply as $S_\UncSol$.
For each remaining client $\widehat{j_e} \in D \setminus D^\prime$, we assign $\widehat{j_e}$ to the same facility to which its pair is assigned, i.e., 
\begin{equation*}
    \sigma_\mathsf{ALG}(j) = \left\{
        \begin{array}{ll}
            \sigma_\UncSol (j), & \text{if } j \in D^\prime, \\
            \sigma_\UncSol (j_e), & \text{if } j = \widehat{j_e} \text{ for some } e \in M^\star.
        \end{array}
    \right.
\end{equation*}

This algorithm is a randomized $2\rho_\mathsf{FL}$-approximation algorithm for the problem. We present the full analysis in Appendix~\ref{app:even}.

\section{General Case}
\label{sec:general}

In this section, we present an $O(1)$-approximation algorithm for the general $O$-facility location problem.

\subsection{Our algorithm}

\paragraph{Outline.}
We start with a brief outline of our algorithm. As the first step of the algorithm, we execute $\mathcal{A}_\mathsf{FL}$ for the unconstrained facility location problem on the given input, but with its parity constraints dropped.
Let $S_{\mathcal{I}}\subseteq F$ and $\sigma_{\mathcal{I}}:D\to S_{\mathcal{I}}$ be the algorithm's output. Note that $(S_{\mathcal{I}}, \sigma_{\mathcal{I}})$ may be infeasible since we dropped the parity constraints; though, our algorithm will use it as the ``initial'' solution and correct the parities at small cost.

The second step of our algorithm is to construct an auxiliary weighted graph $G$ and a set of vertices $T\subseteq V(G)$. The construction is designed so that a $T$-join (almost) prescribes a way to correct the parities. Naturally, our algorithm will find a minimum-cost $T$-join.

Then the last step of our algorithm is to modify the initial solution as indicated by the minimum-cost $T$-join on the auxiliary graph. We first post-process the minimum-cost $T$-join we found to obtain a \emph{sparse} $T$-join. We will show that this ``sparsified'' $T$-join specifies a modification to the initial solution that restores the parity constraints.

In what follows, we describe the last two steps in more detail.

\paragraph{Construction of the auxiliary graph.}
We say a facility $i\in\SI$ is \emph{invalid} if its parity constraint is violated in the initial solution.
Let $\Sinv$ denote the set of invalid facilities, i.e., $\Sinv:=\{i\in O\mid |\sI^{-1}(i)|\textrm{ is even}\}\cup \{i\in \Obar\mid |\sI^{-1}(i)|\textrm{ is odd}\}$.

The vertex set of the auxiliary graph $G$ is $F\cup \{z\}$ for an artificial vertex $z\notin F$. Let $E$ be the edge set of the auxiliary graph and $\gamma:E\to\mathbb{Q}_{\geq 0}$ be the edge cost.
The following are three types of edges that we create in $G$.\begin{itemize}\setlength{\itemsep}{0pt}
\item (\emph{reassign edges}) For each pair of distinct facilities $i,i'\in F$,
we create an edge $(i,i')$ in the auxiliary graph with cost $\gamma(i,i'):=c(i,i')$.
\item (\emph{opening edges}) For each odd-constrained, initially closed facility $i \in O \setminus\SI$, we create an edge $(z,i)$ with cost $\gamma(z,i):=f(i)$.
\item (\emph{closing edges}) This last type of edges is created only if $|\SI|\geq 2$ or $|\overline{O} \setminus\SI|\geq 1$. For each odd-constrained, initially open facility $i \in O \cap S_\mathcal{I}$, we create an edge $(z,i)$ with cost
\begin{equation}\label{e:closingcost}
\gamma(z, i) := \min\begin{cases}
    \displaystyle \min_{i^\prime \in S_\mathcal{I} \setminus \{i\}}
    \left[ \left| \sigma^{-1}_{\mathcal{I}}(i) \right| \cdot c(i, i^\prime) \right]&\\[1em]
    \displaystyle \min_{i^\prime \in {\overline{O} \setminus S_\mathcal{I}}}
    \left[ \left| \sigma^{-1}_{\mathcal{I}}(i) \right| \cdot c(i, i^\prime) + f(i^\prime) \right],&
\end{cases}
\end{equation}
where $\min\emptyset:=+\infty$. Note that $\gamma(z,i)$ is finite since we have $|\SI|\geq 2$ or $|\overline{O} \setminus\SI|\geq 1$.
\end{itemize}
Finally, we choose $T=\Sinv$ if $|\Sinv|$ is even; we choose $T=\Sinv\cup\{z\}$ otherwise.
For notational convenience, for a set $E^\prime \subseteq E$, let $\gamma(E^\prime) := \sum_{e \in E^\prime} \gamma(e)$.

Intuitively speaking, if the $T$-join chooses a \emph{reassign edge} $(i,i')$, it is instructing us to reassign a client between $i$ and $i'$; an \emph{opening edge} $(z,i)$ corresponds to opening an initially closed facility~$i$; finally, a \emph{closing edge} $(z,i)$ corresponds to closing down an initially open facility $i$. Although we will formally describe this correction procedure later, here we introduce one more definition: if we decide to close down a facility $i$, we will need to reassign all clients that were previously assigned to $i$ to some other facility. This facility is called the \emph{substitute} of $i$. The substitutes are selected as the facilities that attain the minimum in \eqref{e:closingcost}. That is, for each closing edge $(z,i)$, the substitute of $i$, denoted by $\phi(i)$, is given as follows (ties are broken arbitrarily when the $\argmin$ has more than one element):\begin{equation}\label{e:phi}
\phi( i) \in \begin{cases}
\displaystyle \argmin_{i^\prime \in S_\mathcal{I} \setminus \{i\}}
    \left[ \left| \sigma^{-1}_{\mathcal{I}}(i) \right| \cdot c(i, i^\prime) \right]
    ,&\textrm{if }\displaystyle \gamma(z,i)=\min_{i^\prime \in S_\mathcal{I} \setminus \{i\}}
    \left[ \left| \sigma^{-1}_{\mathcal{I}}(i) \right| \cdot c(i, i^\prime) \right],\\[1em]
    \displaystyle \argmin_{i^\prime \in {\overline{O} \setminus S_\mathcal{I}}}
    \left[ \left| \sigma^{-1}_{\mathcal{I}}(i) \right| \cdot c(i, i^\prime) + f(i^\prime) \right]
    ,&\textrm{otherwise}.
\end{cases}
\end{equation}

\paragraph{Sparsifying the $T$-join.}
Given a minimum $T$-join $J$, we examine whether any of the following operations can be performed; if so, we perform the operation and repeat. We terminate when none of the operations can be applied any more.
\begin{enumerate}[(i)]\setlength{\itemsep}{0pt}
    \item\label{pp:1} If $(i,i_1),(i, i_2)\in J$ for some $i,i_1,i_2\in F$, remove $(i,i_1)$ and $(i, i_2)$ from $J$ and add $(i_1,i_2)$ instead.
    \item\label{pp:2} If $(z,i)\in J$ for some open odd-constrained facility $i$ and $(z,\phi(i))\in J$, remove $(z,i)$ and $(z,\phi(i))$ from $J$ and add $(i,\phi(i))$ instead. (Note that the condition implies that $(z,i)$ is a closing edge and $\phi(i)$ is, therefore, well-defined.)
    \item\label{pp:3} If $J$ contains a cycle, remove all edges on the cycle.
\end{enumerate}

\paragraph{Parity correction.}
The final step of the algorithm is to modify the initial solution as prescribed by the sparsified $T$-join $J$. The parity correction is performed in the following three substeps:
\begin{enumerate}\setlength{\itemsep}{0pt}
\item Firstly, for each opening edge $(z,i)\in J$, open $i$ and remove $(z,i)$ from $J$.
\item Secondly, for each reassign edge $(i_1,i_2)\in J$, reassign one arbitrary client from one of the two facilities to the other and remove $(i_1,i_2)$ from $J$ as follows:
\begin{itemize}\setlength{\itemsep}{0pt}
\item if $(z,i_1)\in J$ (or $(z,i_2)\in J$), reassign \emph{from} $i_1$ \emph{to} $i_2$ (or from $i_2$ to $i_1$, respectively);
\item otherwise, at least one of these facilities is guaranteed to be currently assigned at least one client; reassign from that facility to the other.
\end{itemize}
\item Lastly, for each closing edge $(z,i)\in J$, close $i$ and reassign all clients currently assigned to $i$ to $\phi(i)$; if necessary, open $\phi(i)$. Remove $(z,i)$ from $J$.
\end{enumerate}

\subsection{Analysis of the sparsification and parity correction}

In this section, we show that a sparsified $T$-join prescribes a cheap modification for correcting the invalid facilities in the initial solution.
We first prove that the sparsification does not increase the cost of a $T$-join.
We will slightly abuse the notation and treat a $T$-join $J$ interchangeably as a graph $(F\cup\{z\},J)$.
For $x\in F\cup\{z\}$, let $\dJ(x)$ denote the degree of $x$ in such a graph $J$.

\begin{lemma}
The given sparsification procedure yields a $T$-join of no greater cost.
\end{lemma}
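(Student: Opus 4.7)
The plan is to verify, for each of the three sparsification operations individually, that it (a) preserves the $T$-join property and (b) does not increase $\gamma(J)$. Iterating then yields the claim. Termination is not an issue because each operation strictly decreases $|J|$, provided we treat any attempt to re-insert an edge already present as a symmetric difference so the duplicate simply cancels (in particular, a resulting length-$2$ ``parallel-edge cycle'' is removed by~(\ref{pp:3})).

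Operations~(\ref{pp:1}) and~(\ref{pp:3}) are routine. For~(\ref{pp:1}), the modification amounts to symmetric-differencing $J$ with the triangle $\{(i,i_1),(i,i_2),(i_1,i_2)\}$; since every vertex has even degree in a triangle, all parities at $T$-vertices are preserved. The cost change is at most $\gamma(i_1,i_2)-\gamma(i,i_1)-\gamma(i,i_2)=c(i_1,i_2)-c(i,i_1)-c(i,i_2)\le 0$ by the triangle inequality satisfied by $c$. For~(\ref{pp:3}), deleting the edges of a simple cycle reduces every vertex's degree by $2$ and decreases the cost by the nonnegative $\gamma$-weight of the cycle, so both~(a) and~(b) follow immediately.

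The main obstacle is operation~(\ref{pp:2}), whose cost bound requires a careful look at~\eqref{e:closingcost} and~\eqref{e:phi}. The $T$-join property is again easy: replacing $(z,i)$ and $(z,\phi(i))$ by $(i,\phi(i))$ decreases $\dJ(z)$ by $2$ and leaves $\dJ(i)$ and $\dJ(\phi(i))$ unchanged, so every parity is preserved. For the cost, the key observation is that operation~(\ref{pp:2}) can apply only when $\phi(i)\in O\cap(\SI\setminus\{i\})$: having $(z,\phi(i))\in J\subseteq E$ requires $\phi(i)$ to be odd-constrained, because by construction of $G$ the edges incident to $z$ exist only for facilities in $O$; this rules out the second branch of~\eqref{e:phi}, in which $\phi(i)\in\Obar\setminus\SI$. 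Hence $\phi(i)$ realises the first branch of~\eqref{e:closingcost}, giving $\gamma(z,i)=|\sI^{-1}(i)|\cdot c(i,\phi(i))$. We may assume without loss of generality that $|\sI^{-1}(i)|\ge 1$, since a facility opened by $\mathcal{A}_\mathsf{FL}$ with no assigned client could be closed beforehand without increasing the cost. It follows that $\gamma(z,i)\ge c(i,\phi(i))=\gamma(i,\phi(i))$, so the inserted edge is paid for by $\gamma(z,i)$ alone while the nonnegative $\gamma(z,\phi(i))$ is absorbed, completing the cost bound and hence the lemma.
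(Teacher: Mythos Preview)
Your proof is correct and follows essentially the same approach as the paper: verify parity preservation and non-increase of cost for each of the three operations separately, using the triangle inequality for~(\ref{pp:1}), nonnegativity for~(\ref{pp:3}), and the bound $\gamma(z,i)\ge c(i,\phi(i))$ together with $\gamma(z,\phi(i))\ge 0$ for~(\ref{pp:2}). The only minor difference is in how you justify $\gamma(z,i)\ge c(i,\phi(i))$: you observe that $(z,\phi(i))\in E$ forces $\phi(i)\in O$, ruling out the second branch of~\eqref{e:phi}; the paper instead simply notes that \eqref{e:closingcost} and~\eqref{e:phi} give $\gamma(z,i)\ge |\sI^{-1}(i)|\cdot c(i,\phi(i))$ in \emph{either} branch (the $f(\phi(i))$ term only helps), which is slightly more direct but leads to the same conclusion.
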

\begin{proof}
It suffices to prove that each single operation produces a $T$-join of no greater cost, and the lemma follows from the induction on the number of operations. Observe that, for every vertex $x\in F\cup\{z\}$, the parity of $\dJ(x)$  remains the same when we apply any of the three operations. It remains to show that all three operations never increase the cost of $J$.

Consider Operation~\eqref{pp:1} that replaces $(i,i_1)$ and $(i,i_2)$ with $(i_1,i_2)$. The net increase in the cost is
\[
\gamma(i_1,i_2)-[\gamma(i, i_1)+\gamma(i,i_2)]=c(i_1,i_2)-[c(i, i_1)+c(i,i_2)]\leq 0
,\]where the inequality follows from the triangle inequality.

Operation~\eqref{pp:3} does not increase the cost of $J$ since all costs are nonnegative.

Now consider Operation~\eqref{pp:2}.
We can assume without loss of generality that $\mathcal{A}_\mathsf{FL}$ returns a solution such that $i\in\SI$ implies $\sigma^{-1}_{\mathcal{I}}(i)\neq\emptyset$. (Otherwise, we can simply exclude $i$ from \SI.) Since $(z,i)$ is a closing edge, \eqref{e:closingcost} and \eqref{e:phi} imply that $\gamma(z,i)\geq \left| \sigma^{-1}_{\mathcal{I}}(i) \right| \cdot c(i, \phi(i))\geq c(i, \phi(i)) = \gamma(i, \phi(i))$. The operation does not increase the cost of $J$ since $\gamma(z,\phi(i))\geq 0$.
\end{proof}

Following are the key sparsity observations we will use in the parity correction step. Let $J$ denote the sparsified $T$-join on which no further operation was possible.
\begin{observation}\label{obs:pre1}
For all $i\in F$, $i$ is adjacent in $J$ with at most one vertex in $F$.
\end{observation}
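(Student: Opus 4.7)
The plan is to derive this immediately from the termination condition of the sparsification procedure, specifically from Operation~\eqref{pp:1}. First I would suppose, for contradiction, that some facility $i \in F$ has two distinct neighbors $i_1, i_2 \in F$ in $J$, so both edges $(i, i_1)$ and $(i, i_2)$ lie in $J$. Then I would observe that this triple $(i, i_1, i_2)$ satisfies precisely the hypothesis of Operation~\eqref{pp:1}, since all three are in $F$.

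Next I would invoke the fact that $J$ was defined as the sparsified $T$-join, i.e., the one on which none of the three operations can be performed any further. This directly contradicts the applicability of Operation~\eqref{pp:1} to the triple $(i, i_1, i_2)$, giving the desired conclusion. No metric or cost argument is needed; the claim is purely structural and falls out from the stopping rule of the sparsification loop.

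The only minor subtlety, and thus the only ``obstacle,'' is a notational one: one must be careful that Operation~\eqref{pp:1} is allowed to fire whenever both incident edges lie entirely within $F$ (i.e., neither endpoint is the artificial vertex $z$), which is exactly what the quantifier ``for some $i, i_1, i_2 \in F$'' enforces. Once this is noted, the proof reduces to a single contrapositive sentence, and I would present it as such.
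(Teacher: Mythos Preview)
Your proposal is correct and matches the paper's own proof essentially verbatim: both argue by contradiction that if $i$ had two neighbors $i_1,i_2\in F$ in $J$, then Operation~\eqref{pp:1} would still be applicable, contradicting the termination of the sparsification procedure.
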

\begin{proof}
If $i$ were adjacent in $J$ with two facilities $i_1$ and $i_2$, $(i,i_1)$ and $(i,i_2)$ would have been replaced with $(i_1,i_2)$.
\end{proof}

\begin{observation}\label{obs:pre2}
For all edges $(i_1,i_2)\in J$ such that $i_1,i_2\in F$, at least one of $i_1$ and $i_2$ belongs to \SI, the set of initially open facilities.
\end{observation}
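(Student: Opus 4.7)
The plan is to argue by contradiction. Suppose toward contradiction that $(i_1, i_2) \in J$ with $i_1, i_2 \in F \setminus \SI$; I will derive a short cycle in $J$ whose existence contradicts the full sparsification of $J$ via Operation~(iii).

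First, I would establish that $i_1, i_2 \notin T$. By the definition preceding $\Sinv$, invalid facilities are drawn from $\SI$, so $\Sinv \subseteq \SI$ and hence $T \subseteq \SI \cup \{z\}$. Since $i_1, i_2 \in F \setminus \SI$, they lie outside $T$, which means $\dJ(i_1)$ and $\dJ(i_2)$ must both be even. The edge $(i_1, i_2)$ already contributes $1$ to each of these degrees, so both are at least $2$.

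Next, I would invoke Observation~\ref{obs:pre1}, which asserts that each facility has at most one neighbor in $F$. Since $i_1$ and $i_2$ already occupy this single ``$F$-slot'' for each other, any further neighbor of either vertex must be $z$, the only non-facility vertex in $G$. Combined with the parity and lower bound on the degrees, this forces $\dJ(i_1) = \dJ(i_2) = 2$ and $(z, i_1), (z, i_2) \in J$.

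Finally, the three edges $(z, i_1)$, $(i_1, i_2)$, $(z, i_2)$ form a cycle in $J$, so Operation~(iii) could have been applied to remove them, contradicting the assumption that $J$ admits no further sparsification operation. There is no real obstacle here; the only subtle point is noticing the implicit containment $\Sinv \subseteq \SI$ (which rules out any facility outside $\SI$ from being in $T$), after which the degree bookkeeping via Observation~\ref{obs:pre1} forces the forbidden cycle.
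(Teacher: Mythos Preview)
Your proof is correct and follows essentially the same route as the paper's: assume both endpoints lie outside $\SI$, use $\Sinv\subseteq\SI$ to get $i_1,i_2\notin T$, force a second neighbor of each via parity, identify that neighbor as $z$ using Observation~\ref{obs:pre1}, and exhibit the triangle $\{(i_1,i_2),(z,i_1),(z,i_2)\}$ contradicting Operation~(iii). The only difference is that you spell out the degree bookkeeping and the containment $\Sinv\subseteq\SI$ explicitly, whereas the paper leaves these implicit.
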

\begin{proof}
Suppose towards contradiction that $i_1,i_2\in F\setminus \SI$. Since $i_1\notin T$, there exists some vertex $x$ other than $i_2$ that is adjacent with $i_1$; we have $x=z$ from Observation~\ref{obs:pre1}. Likewise, we have $(z,i_2)\in J$, leading to contradiction since $\{(i_1,i_2),(z,i_1),(z,i_2)\}$ forms a cycle.
\end{proof}

\begin{observation}\label{obs:pre3}
If a closing edge $(z,i)$ is in $J$, we have $(z,\phi(i))\notin J$.
\end{observation}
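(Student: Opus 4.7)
\medskip

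\noindent\textbf{Proof plan for Observation~\ref{obs:pre3}.}
The plan is to split into cases based on the type of vertex $\phi(i)$, and to rule out each case either by showing that no edge $(z,\phi(i))$ exists in $G$ at all, or by invoking Operation~\eqref{pp:2} to derive a contradiction with the fact that $J$ admits no further sparsification step. A key preliminary observation, which I would state up front, is that the only edges of $G$ incident with $z$ are opening edges into facilities in $O\setminus \SI$ and closing edges into facilities in $O\cap \SI$; in particular, every neighbor of $z$ in $G$ must lie in $O$.

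Given this, I would first dispense with the easy cases. Recall from~\eqref{e:phi} that $\phi(i)$ lies either in $\SI\setminus\{i\}$ or in $\overline{O}\setminus \SI$. If $\phi(i)\in\overline{O}\setminus\SI$, then $\phi(i)\notin O$, so there is no edge $(z,\phi(i))$ in $G$ whatsoever, and there is nothing to prove. Similarly, if $\phi(i)\in \SI\setminus\{i\}$ but $\phi(i)\in\overline{O}$, the same conclusion holds. The remaining case, which I expect to be the only nontrivial one, is $\phi(i)\in O\cap\SI$, i.e., $\phi(i)$ is itself an odd-constrained, initially open facility; in this case $(z,\phi(i))$ exists in $G$ as a \emph{closing} edge.

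Now I handle this case by contradiction. Suppose $(z,\phi(i))\in J$. Then $(z,i)\in J$ is a closing edge for an open odd-constrained facility $i$, and $(z,\phi(i))\in J$ as well. But these two conditions are precisely the trigger for Operation~\eqref{pp:2}, which would replace the pair $\{(z,i),(z,\phi(i))\}$ with the single reassign edge $(i,\phi(i))$; this contradicts our assumption that $J$ is the output of the sparsification procedure, for which no operation remains applicable. Hence $(z,\phi(i))\notin J$, completing the proof. The only mild subtlety, which I would address explicitly, is that Operation~\eqref{pp:2} is meaningful here because $\phi(i)$ is well-defined for the closing edge $(z,i)$ by definition, and because the existence of the edge $(z,\phi(i))$ in $G$ is guaranteed by $\phi(i)\in O\cap\SI$; there is no genuine obstacle beyond this bookkeeping.
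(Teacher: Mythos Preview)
Your proof is correct and follows essentially the same approach as the paper: assume $(z,\phi(i))\in J$ and observe that Operation~\eqref{pp:2} would then apply, contradicting that $J$ is fully sparsified. Your preliminary case split on whether $(z,\phi(i))$ is even an edge of $G$ is more explicit than the paper's version, but it is harmless extra bookkeeping---the paper simply collapses those cases by noting that if $(z,\phi(i))\in J$ then the trigger for Operation~\eqref{pp:2} is met regardless.
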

\begin{proof}
Since $(z,i)\in J$ is a closing edge, we know $i$ is an open odd-constrained facility. Suppose $(z,\phi(i))\in J$. Then $(z,i)$ and $(z,\phi(i))$ would have been replaced with $(i,\phi(i))$, leading to contradiction.
\end{proof}
\newcommand{\ourSol}{\mathsf{ALG}}
\newcommand{\initSol}{\mathcal{I}}

We can now analyze the parity correction prescribed by $J$. Observations~\ref{obs:pre1} and~\ref{obs:pre2} show that, when we process a reassign edge, the facility that \emph{gives} a client has at least one client assigned to it, and the facility that \emph{receives} a client is open. Observation~\ref{obs:pre3} proves that, when we process a closing edge $(z,i)$, the substitute $\phi(i)$ is indeed open. These arguments are formalized by the following lemma.

\begin{lemma} \label{lem:map1}
The corrected solution is a feasible solution.
Moreover, the correction cost is bounded by $\gamma(J)$ from above.
\end{lemma}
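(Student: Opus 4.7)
The plan is to prove the lemma in two parts: (a) the three-substep correction procedure is well-defined and produces a parity-feasible solution, and (b) the total correction cost is at most $\gamma(J)$. For well-definedness I would walk through the three substeps and appeal to Observations~\ref{obs:pre1}--\ref{obs:pre3}: each facility $i \in F$ meets at most one reassign edge (Observation~\ref{obs:pre1}), so the direction rule for reassignments is unambiguous; Observation~\ref{obs:pre2} together with the WLOG assumption $\sI^{-1}(i) \neq \emptyset$ for $i \in \SI$ ensures that at every reassignment there is a client to move and an already-open facility to receive it; and Observation~\ref{obs:pre3} guarantees that the substitute $\phi(i)$ of any closing edge is itself never closed in the same pass, so $\phi(i)$ is open (either originally in $\SI$ or opened in substep~3 as the current operation is processed) at the moment clients migrate there.

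For parity feasibility I would case-split on an open facility $i$ after correction: (A) $i \in \SI$ with no closing edge at $i$, (B) $i \in O \setminus \SI$ opened through an opening edge of $J$, or (C) $i \in \Obar \setminus \SI$ opened only as a substitute. In each case I compute $\delta_i := |\sigma_{\mathsf{ALG}}^{-1}(i)| - |\sI^{-1}(i)| \pmod{2}$. A reassign edge at $i$ contributes $\pm 1$, and each closing edge $(z,i') \in J$ with $\phi(i') = i$ contributes $|\sI^{-1}(i')| - \epsilon(i')$, where $\epsilon(i') \in \{0,1\}$ records whether the reassign edge at $i'$ (if any) made $i'$ shed a client in substep~2. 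The key subclaim is that each such closing contribution is even: since $i' \in O \cap \SI$ and $\dJ(i') = 1 + \epsilon(i')$, the $T$-join identity at $i'$ together with the definition of $\Sinv$ forces $\epsilon(i') = 1$ exactly when $i' \notin T$ (i.e.\ $i'$ is valid with $|\sI^{-1}(i')|$ odd) and $\epsilon(i') = 0$ exactly when $i' \in T$ (i.e.\ $i'$ is invalid with $|\sI^{-1}(i')|$ even), so $|\sI^{-1}(i')| - \epsilon(i')$ is even in either situation. Hence in Case~A, $\delta_i \equiv \dJ(i) \equiv [i \in T] \pmod{2}$, which flips the parity of $i$ exactly when $i \in \Sinv$ and leaves the already-valid facilities alone; analogous parity counts in Cases~B and~C (using $i \notin T$) force $|\sigma_{\mathsf{ALG}}^{-1}(i)|$ to be odd and even, respectively, matching the required parity.

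For the cost bound I would charge each correction step to its edge in $J$: an opening edge pays exactly $f(i) = \gamma(z,i)$; a reassign edge pays exactly $c(i_1,i_2) = \gamma(i_1,i_2)$; and a closing edge $(z,i)$ moves at most $|\sI^{-1}(i)|$ clients from $i$ to $\phi(i)$ and opens $\phi(i)$ only when $\phi(i) \in \Obar \setminus \SI$, so its correction cost is at most $|\sI^{-1}(i)| \cdot c(i,\phi(i)) + f(\phi(i)) \cdot [\phi(i) \notin \SI] = \gamma(z,i)$ by the definition of $\phi(i)$ in~\eqref{e:closingcost} and~\eqref{e:phi}. If several closing edges share a substitute, the opening fee is paid once in the true cost but charged once per edge in the bound, which only strengthens the inequality; summing over edges then yields $\gamma(J)$.

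The principal obstacle is the Case~A parity accounting, in which a single $i \in \SI$ may carry a reassign edge while simultaneously acting as substitute for several closed facilities. The clean identity $\delta_i \equiv \dJ(i) \pmod{2}$ survives only because the $T$-join condition applied at every closed $i'$ with $\phi(i') = i$ forces each migration from $i'$ to consist of an even number of clients, so the only mod-$2$ contribution to $\delta_i$ is from the reassign edge at $i$ itself.
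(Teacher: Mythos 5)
Your proof is correct, but it reaches feasibility by a different route than the paper. The paper runs an \emph{inductive invariant} argument: it processes the edges of $J$ one at a time, updating $T\gets T\triangle\{i_1,i_2\}$ as each edge is removed, and maintains that (a) $J$ is a $T$-join for the evolving $T$ and (b) $T\setminus\{z\}$ is exactly the set of currently parity-violating facilities; since the empty edge set is only an $\emptyset$-join, feasibility falls out at the end. You instead do a \emph{static} mod-$2$ account of $\delta_i=|\sigma_{\mathsf{ALG}}^{-1}(i)|-|\sI^{-1}(i)|$ at each surviving open facility, reading everything off the degrees in the final sparsified $J$. Both arguments pivot on the same key fact --- a facility $i'$ being closed holds an even number of clients at that moment, so its migration to $\phi(i')$ is parity-neutral for the substitute --- which the paper gets for free from its invariant and you get from the identity $\dJ(i')=1+\epsilon(i')$ combined with the $T$-join degree condition and the definition of $\Sinv$; your derivation of this subclaim is correct, including the observation that the orientation rule forces $i'$ to shed a client exactly when it carries a reassign edge. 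The invariant approach buys cleaner bookkeeping (no case split over which edge types can meet at a vertex), while yours makes the role of each edge type in the final parity more transparent. Two small points of care: in Case~B you should note explicitly that the opening edge has already been deleted from $J$ before substep~2, so the newly opened facility falls under the ``otherwise'' branch of the orientation rule and therefore \emph{receives} the client (it has none to give); and a reassign edge pays \emph{at most} $c(i_1,i_2)$, not exactly, since the true increment is $c(i_2,j)-c(i_1,j)$. Neither affects the conclusion, and your cost accounting, including the harmless double-charging of a shared substitute's opening fee, matches the paper's.
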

\noindent Its full proof is deferred to Section~\ref{sec:dproofs}.

\subsection{Bounding \texorpdfstring{$\gamma(J)$}{Lg}}
We show in this section that the cost of a minimum $T$-join in the auxiliary graph $G$ is within a constant factor of the optimum.
Here we fix an arbitrary optimal solution $\SO\subseteq F$ and $\sO:D\to \SO$; let $\mathsf{OPT}:=f(\SO)+c(\sO)$ denote its value. In the rest of this section, we will exhibit a $T$-join dominator $Y \subseteq E$ such that $\gamma(Y) \leq O(1) \cdot \mathsf{OPT}$.

We construct $Y$ as the union of three edge sets $Y_1$, $Y_2$, and $Y_3$, defined as follows. (See Figure \ref{fig:tjd} on the last page.)
\begin{itemize}\setlength{\itemsep}{0pt}
    \item For each client $j\in D$, we add an edge to $Y_1$ between the facility $j$ is assigned to in the initial solution and the facility $j$ is assigned to in the fixed optimal solution.
If both facilities are the same, we just ignore the client rather than creating a loop.
    
    \item Let $Y_2$ be the set of edges between $z$ and each odd-constrained facility that is closed in the initial solution but open in the optimal solution.
    \item 
    Let $\mathcal{C}$ be the set of connected components $C$ in $(F \cup \{z\}, Y_1 \cup Y_2)$ such that $C$ does not contain $z$ and is $T$-odd, i.e.,
    $
    \mathcal{C} := \{ C  \mid  C \text{ is a connected component in } (F \cup \{z\}, Y_1 \cup Y_2),\; z \notin V(C) \text{,}
    $
    $
    \text{and } |V(C) \cap T| \text{ is odd} \},
    $ where $V(C)$ denotes the set of vertices in $C$.
    For each component $C \in \mathcal{C}$, pick an arbitrary odd-constrained facility $i^C\in V(C)$ that is open in the initial solution but closed in the optimal solution. We thus have $i^C\in V(C)\cap O\cap \SI\setminus\SO$.
    We now define $Y_3$ as the set of edges between $z$ and $i^C$ for all $C \in \mathcal{C}$.
\end{itemize}
The following observation holds since any two facilities that share a client have an edge in between in $Y_1$.
\begin{obs}\label{obeq}
For each $C \in \mathcal{C}$, let $n_\mathcal{I}^C$ (and $n_\mathcal{O}^C$) be the number of clients assigned to a facility in $V(C)$ by the initial solution (and by the optimal solution, respectively). We then have $
n_\mathcal{I}^C = n_\mathcal{O}^C
$.
\end{obs}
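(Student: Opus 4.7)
The plan is to prove this observation through a direct bijection argument on the clients themselves: I would show that the set of clients assigned to $V(C)$ in the initial solution is \emph{exactly} the same as the set of clients assigned to $V(C)$ in the optimal solution. Since both quantities count the size of this common set, equality follows immediately.

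To carry this out, I would first let $D^C_\mathcal{I} := \{j \in D \mid \sigma_\mathcal{I}(j) \in V(C)\}$ and $D^C_\mathcal{O} := \{j \in D \mid \sigma_\mathcal{O}(j) \in V(C)\}$, so that $n_\mathcal{I}^C = |D^C_\mathcal{I}|$ and $n_\mathcal{O}^C = |D^C_\mathcal{O}|$. I would then show $D^C_\mathcal{I} \subseteq D^C_\mathcal{O}$. Fix $j \in D^C_\mathcal{I}$ and set $i := \sigma_\mathcal{I}(j) \in V(C)$ and $i' := \sigma_\mathcal{O}(j)$. If $i = i'$, then $i' \in V(C)$ trivially. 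Otherwise, by construction of $Y_1$, the edge $(i,i')$ was added to $Y_1$, and hence $i$ and $i'$ are in the same connected component of $(F \cup \{z\}, Y_1 \cup Y_2)$. Since $i \in V(C)$, this connected component is $C$, so $i' \in V(C)$ and $j \in D^C_\mathcal{O}$. The reverse inclusion $D^C_\mathcal{O} \subseteq D^C_\mathcal{I}$ is symmetric, swapping the roles of $\sigma_\mathcal{I}$ and $\sigma_\mathcal{O}$.

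There is really no main obstacle here: the observation is an immediate consequence of how $Y_1$ was constructed (an edge for every client witnessing the pairing of its initial and optimal facilities) combined with the definition of a connected component. The only thing to be slightly careful about is the case $\sigma_\mathcal{I}(j) = \sigma_\mathcal{O}(j)$, where no edge is added to $Y_1$; but in that case the conclusion is trivial since both endpoints coincide and automatically lie in $V(C)$ together. No properties of $Y_2$, $Y_3$, or of the set $\mathcal{C}$ itself are needed for this observation.
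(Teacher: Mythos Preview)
Your proposal is correct and is precisely the argument the paper has in mind: the one-line justification given in the paper (``any two facilities that share a client have an edge in between in $Y_1$'') is exactly the connectivity fact you spell out, and your careful handling of the degenerate case $\sigma_\mathcal{I}(j)=\sigma_\mathcal{O}(j)$ is appropriate. There is nothing to add.
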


In order to show that $Y$ is a $T$-join dominator, it suffices to prove that $i^C$ can be chosen for every connected component $C \in \mathcal{C}$. Suppose that $i^C$ is well-defined for each $C\in\mathcal{C}$. Then, in the graph $(F\cup\{z\},Y_1\cup Y_2\cup Y_3)$, every connected component that does not contain $z$ must contain an even number of vertices in $T$, since otherwise, an edge in $Y_3$ would have connected this component to $z$. That is, the only connected component in $(F\cup\{z\},Y_1\cup Y_2\cup Y_3)$ that may have an odd number of vertices in $T$ is the one that contains $z$; however, since $|T|$ is even, this component also has even number of vertices in $T$. The conclusion now follows from the definition of a $T$-join dominator.

\begin{lemma} \label{existence-of-closed-odd-fac-in-T-odd}
For each connected component $C \in \mathcal{C}$, there exists an odd-constrained facility that is open in the initial solution but closed in the fixed optimal solution.
\end{lemma}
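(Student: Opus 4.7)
The plan is a parity-counting argument inside the connected component $C$, driven by Observation~\ref{obeq} (the total assignment count into $V(C)$ agrees between $\initSol$ and $\mathcal{O}$) together with the hypothesis that $|V(C)\cap T|$ is odd and $z\notin V(C)$.

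First, I would extract from $z\notin V(C)$ the key structural fact that $C$ contains no $Y_2$-edge. Since $Y_2$ consists exactly of edges between $z$ and each odd-constrained facility in $\SO\setminus \SI$, this yields $V(C)\cap O\cap(\SO\setminus \SI)=\emptyset$; equivalently, $V(C)\cap O\cap \SO\subseteq V(C)\cap O\cap \SI$. So choosing $i^C$ amounts to showing $V(C)\cap O\cap \SI\setminus \SO\neq \emptyset$.

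Next I would introduce four disjoint sets to classify the facilities of $V(C)\cap \SI$ by their parity constraint and by the parity of $|\sI^{-1}(\cdot)|$:
\begin{align*}
A_e&:=\{i\in V(C)\cap O\cap \SI:|\sI^{-1}(i)|\text{ even}\}, &
A_o&:=\{i\in V(C)\cap O\cap \SI:|\sI^{-1}(i)|\text{ odd}\}, \\
B_e&:=\{i\in V(C)\cap \Obar\cap \SI:|\sI^{-1}(i)|\text{ even}\}, &
B_o&:=\{i\in V(C)\cap \Obar\cap \SI:|\sI^{-1}(i)|\text{ odd}\}.
\end{align*}
From the definition of $\Sinv$ we have $V(C)\cap T=A_e\cup B_o$, so the hypothesis gives $|A_e|+|B_o|\equiv 1\pmod 2$. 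On the other hand, summing assignment counts facility-by-facility, $n_\initSol^C=\sum_{i\in A\cup B}|\sI^{-1}(i)|\equiv |A_o|+|B_o|\pmod 2$. For the optimum, let $A':=V(C)\cap O\cap \SO$ and $B':=V(C)\cap \Obar\cap \SO$; feasibility of $(\SO,\sO)$ forces $|\sO^{-1}(i)|$ odd on $A'$ and even on $B'$, whence $n_\mathcal{O}^C\equiv |A'|\pmod 2$.

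Finally, I would combine: Observation~\ref{obeq} gives $n_\initSol^C=n_\mathcal{O}^C$, hence $|A_o|+|B_o|\equiv |A'|\pmod 2$. Adding this to $|A_e|+|B_o|\equiv 1\pmod 2$ and cancelling the $|B_o|$ terms and the $|A_o|+|A_e|=|A|$ collapse yields $|A|-|A'|\equiv 1\pmod 2$. Since $A'\subseteq A$ (by the first step), the set $A\setminus A'=V(C)\cap O\cap \SI\setminus \SO$ has odd cardinality and is in particular nonempty, providing the required $i^C$.

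The only potentially delicate point is the first step: verifying that the absence of $z$ from $V(C)$ really does force $V(C)\cap O\cap \SO\subseteq \SI$, which rests on a careful reading of the definitions of $Y_2$ and of connected components in $(F\cup\{z\},Y_1\cup Y_2)$; the remainder is bookkeeping mod $2$.
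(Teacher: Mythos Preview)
Your proof is correct and rests on the same parity-counting core as the paper's: both use the $Y_2$ observation (that $z\notin V(C)$ forces $V(C)\cap O\cap \SO\subseteq \SI$), Observation~\ref{obeq}, and a mod-$2$ count of assignments into $V(C)$. The paper, however, wraps the argument in a contradiction: assuming no such $i^C$ exists, it first establishes that every facility in $V(C)$ is open in at least one of the two solutions (facilities closed in both are $T$-even singletons), then deduces that every facility in $V(C)\cap O$ is open in \emph{both}, which lets it compute $n_\mathcal{O}^C\equiv |V(C)\cap O|$ and $n_\mathcal{I}^C\equiv |V(C)\cap O|+|V(C)\cap\Sinv|\pmod 2$ and reach a contradiction with $T$-oddness. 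Your direct computation via the $A_e,A_o,B_e,B_o$ decomposition is slightly more economical---it never needs the singleton observation---and yields the marginally stronger conclusion that $|A\setminus A'|$ is odd, not merely nonempty.
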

\begin{proof}
From construction, for any facility $i$ that is closed in both solutions, $i$ cannot be incident with any edges in $Y_1$ or $Y_2$. Such facility $i$, therefore, forms a singleton connected component in $(F\cup\{z\},Y_1\cup Y_2)$ and we have $i\notin T$. Thus, for the $T$-odd component $C$, we have that every facility in $V(C)$ must be open in at least one of the two solutions.

Suppose towards contradiction that there does not exist an odd-constrained facility that is open in the initial solution but closed in the fixed optimal solution. That is, every facility $i\in V(C)$ that is open only in the initial solution is even-constrained.

If some facility $i\in V(C)$ is open only in the \emph{optimal} solution, $i$ cannot be odd-constrained: otherwise, the opening edge $(z,i)$ would be in $Y_2$, contradicting $C\in\mathcal{C}$. So we now have that every facility $i\in V(C)$ that is open only in one of the two solutions is even-constrained. In other words, every facility in $V(C)$ is either even-constrained or open in both solutions. This, together with $n_\mathcal{O}^C=\sum_{i\in V(C)}|\sO^{-1}(i)|$ and the fact that the optimal solution satisfies all parity constraints, implies that the parity of $n_\mathcal{O}^C$ is the same as that of $|V(C)\cap O|$.

On the other hand, since $V(C)$ contains invalid facilities, we have that the parity of $n^C_\mathcal{I}$ is equal to that of $|V(C)\cap O|+|V(C)\cap \Sinv|$. From Observation~\ref{obeq}, this implies that $|V(C)\cap \Sinv|$ is even, contradicting the fact that $C$ is $T$-odd.
\end{proof}

To argue that the closing edge $(z,i^C)$ indeed exists in $G$, we need to verify that $|\SI|\geq 2$ or $|\Obar\setminus\SI|\geq 1$. Note that $|\SI|\geq 1$ as long as $D\neq\emptyset$.
\begin{lemma} \label{script-C-safe}
If $\left| S_\mathcal{I} \right| = 1$ and $\overline{O} \setminus S_\mathcal{I} = \emptyset$, we have $\mathcal{C} = \emptyset$.
\end{lemma}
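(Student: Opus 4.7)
The plan is to exploit the hypothesis to show that only $i^\star$, the unique element of $\SI$, can be invalid, and then that whenever $i^\star$ is invalid the connected component of $(F \cup \{z\}, Y_1 \cup Y_2)$ containing $i^\star$ must also contain $z$. Together these rule out any $T$-odd component avoiding $z$, giving $\mathcal{C} = \emptyset$.

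First I would set $\SI = \{i^\star\}$, note that $\Obar \subseteq \{i^\star\}$ (so $\Obar \in \{\emptyset, \{i^\star\}\}$), and observe that since every client is assigned to $i^\star$ in the initial solution, every other facility $i$ has $|\sI^{-1}(i)| = 0$ and is therefore not invalid. Hence $\Sinv \subseteq \{i^\star\}$ and $T \subseteq \{i^\star, z\}$. If $i^\star \notin \Sinv$, then $\Sinv = \emptyset$ and so $T = \emptyset$, in which case no component is $T$-odd and $\mathcal{C} = \emptyset$ vacuously.

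It remains to handle $\Sinv = \{i^\star\}$, where $T = \{i^\star, z\}$. Any component of $(F\cup\{z\},Y_1\cup Y_2)$ that avoids both $i^\star$ and $z$ contains no vertex of $T$ and is therefore not $T$-odd; so it suffices to show that $i^\star$ and $z$ lie in the same component. For this I would produce a facility $i' \in F \setminus \{i^\star\}$ that is odd-constrained and assigned at least one client by $\sO$. Such an $i'$ lies in $O \setminus \SI$ and in $\SO$, so $(z, i') \in Y_2$; and since at least one client that $\sO$ routes to $i'$ was originally at $i^\star$ in the initial solution, $(i^\star, i') \in Y_1$. The two edges together yield the desired path from $i^\star$ to $z$.

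The existence of $i'$ follows from a parity count using the feasibility of $\sO$. In the case $\Obar = \emptyset$, every facility in $\SO$ is odd-constrained, so $|D| = \sum_{i \in \SO} |\sO^{-1}(i)|$ has the same parity as $|\SO|$; invalidity of the odd-constrained $i^\star$ forces $|D|=|\sI^{-1}(i^\star)|$ even, hence $|\SO|$ is even, and combined with $|D|\geq 1$ (a consequence of $|\SI|=1$) this gives $|\SO|\geq 2$, so any $i'\in\SO\setminus\{i^\star\}$ works. In the case $\Obar = \{i^\star\}$, the facility $i^\star$ is even-constrained while every other facility is odd-constrained, so $|D| \equiv |\SO\setminus\{i^\star\}| \pmod 2$; invalidity of the even-constrained $i^\star$ forces $|D|$ odd, hence $|\SO\setminus\{i^\star\}|$ is odd and nonempty, yielding an admissible $i'$. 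The only real obstacle is keeping these parity counts straight across the two sub-cases; beyond that the proof is essentially mechanical.
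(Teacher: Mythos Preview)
Your proof is correct and takes a genuinely different route from the paper. The paper argues by contradiction: assuming some $C \in \mathcal{C}$, it picks $i \in V(C)\cap T$, notes $i \in \Sinv \subseteq \SI$ so $i = i^\star$, and then shows $V(C)=\{i^\star\}$ (any other vertex of $V(C)$ would be odd-constrained, closed in $\SI$, and---since $z \notin V(C)$ forces it out of $Y_2$---also closed in $\SO$, which is impossible for a non-isolated vertex of $C$). From $V(C)=\{i^\star\}$ it deduces $i^\star \in \SO$, so $V(C)$ contains no facility in $\SI\setminus\SO$, contradicting Lemma~\ref{existence-of-closed-odd-fac-in-T-odd}. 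You instead work constructively: you pin down $T\subseteq\{i^\star,z\}$, and in the nontrivial case $T=\{i^\star,z\}$ you exhibit an explicit two-edge path $i^\star$--$i'$--$z$ in $Y_1\cup Y_2$ via a parity count on the feasible optimal solution. Your argument is self-contained (it does not invoke Lemma~\ref{existence-of-closed-odd-fac-in-T-odd}) at the price of the case split on $\Obar$; the paper's is shorter by reusing machinery already in place. Two small nits: the clause ``$|\sI^{-1}(i)|=0$ and is therefore not invalid'' gives the wrong reason---the correct one is simply $i\notin\SI$, since invalid facilities are by definition open---and ``$|D|\geq 1$ a consequence of $|\SI|=1$'' tacitly uses the paper's standing assumption that $\mathcal{A}_{\mathsf{FL}}$ opens no facility with empty preimage.
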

\begin{proof}
Suppose towards contradiction that $\left| S_\mathcal{I} \right| = 1$, $\overline{O} \setminus S_\mathcal{I} = \emptyset$, and $C\in\mathcal{C}$. Since $|V(C)\cap T|$ is odd, we can choose some $i\in V(C)\cap T$. Moreover, we have $i\in T\setminus\{z\}=\Sinv\subseteq\SI$ since $z\notin V(C)$.

Now we claim that $V(C)=\{i\}$. (\emph{Proof.} Let $i'$ be an arbitrary facility in $V(C)\setminus\{i\}$. Since $\SI=\{i\}$, we have $i'\notin\SI$. This, together with $\overline{O} \setminus S_\mathcal{I} = \emptyset$, yields $i'\in O$. Since $C$ does not contain $z$, we have $(z,i')\notin Y_2$, showing $i'\notin\SO$. Recall from the proof of Lemma~\ref{existence-of-closed-odd-fac-in-T-odd} that a facility in $V(C)$ cannot be closed in both solutions.)
Thus, $i$ must be open in the optimal solution, too.
Since $V(C)$ does not contain any facility that is closed in the optimal solution, we cannot choose $i^C$, contradicting Lemma~\ref{existence-of-closed-odd-fac-in-T-odd}.
\end{proof}

Now we bound the cost of $Y$.

\begin{obs}\label{lem:btj:1}
We have $\gamma(Y_1) \leq c(\sigma_\mathcal{I}) + c(\sigma_\mathcal{O})$.
\end{obs}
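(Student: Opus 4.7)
The plan is to prove the bound via a direct edge-by-edge charging argument that uses only the triangle inequality satisfied by $c$. First I would observe that every edge in $Y_1$ is by construction a reassign edge of the form $(\sI(j), \sO(j))$ for at least one client $j\in D$ with $\sI(j)\neq \sO(j)$, and that $\gamma$ on reassign edges is simply the metric cost $c$. Therefore, if for each edge $e\in Y_1$ I select one witnessing client $j(e)\in D$, I immediately get
\[
\gamma(Y_1) \;=\; \sum_{e\in Y_1} \gamma(e) \;\leq\; \sum_{e\in Y_1} c\bigl(\sI(j(e)),\,\sO(j(e))\bigr) \;\leq\; \sum_{j\in D} c\bigl(\sI(j),\,\sO(j)\bigr),
\]
where the last inequality comes from extending the sum over all clients (those with $\sI(j)=\sO(j)$ contribute $0$, and the selection $j\mapsto e$ is injective so no double-counting occurs in the opposite direction).

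Next, I would apply the triangle inequality to each summand. For each client $j\in D$, $c$ is a metric on $F\cup D$ so
\[
c\bigl(\sI(j),\,\sO(j)\bigr) \;\leq\; c\bigl(\sI(j),\,j\bigr) + c\bigl(j,\,\sO(j)\bigr).
\]
Summing over $j\in D$ and recalling the shorthand notation $c(\sigma) = \sum_{j\in D} c(\sigma(j),j)$ from the Preliminaries, the right-hand side equals exactly $c(\sI) + c(\sO)$, which yields the desired inequality $\gamma(Y_1) \leq c(\sI) + c(\sO)$.

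There is no real obstacle here; the only subtlety is bookkeeping when several clients induce the same edge of $Y_1$. Since $Y_1$ is a set rather than a multiset, the inequality goes the right way: collapsing duplicates can only decrease $\gamma(Y_1)$ relative to the per-client sum, so the step above is valid.
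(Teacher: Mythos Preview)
Your proof is correct and follows essentially the same approach as the paper: pick a witnessing client for each edge of $Y_1$, apply the triangle inequality through that client, and bound the per-edge sum by the full sum over all clients. The only cosmetic difference is that you first pass to $\sum_{j\in D} c(\sI(j),\sO(j))$ and then split via the triangle inequality, whereas the paper applies the triangle inequality edge-by-edge before summing; one minor slip is the phrase ``the selection $j\mapsto e$ is injective,'' which should read $e\mapsto j(e)$, but the intended (and correct) point is clear.
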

\begin{proof}
By the definition of $Y_1$, for every edge $(i, i^\prime) \in Y_1$, there exists a client $j$ such that $\sigma_\mathcal{I}(j) = i$ and $\sigma_\mathcal{O}(j) = i^\prime$, yielding that
$\gamma(i, i^\prime) = c(i, i^\prime) \leq c(i, j) + c(i^\prime, j)$.
We thus have
\begin{align*}
    \gamma(Y_1)
    = \sum_{(i,i^\prime) \in Y_1} \gamma(i, i^\prime) 
    \leq \sum_{\substack{i \in F, \\ j \in \sigma^{-1}_\mathcal{I}(i)}} c(i,j)
        + \sum_{\substack{i^\prime \in F, \\ j \in \sigma^{-1}_\mathcal{O}(i^\prime)}} c(i^\prime,j) 
    = c(\sigma_\mathcal{I}) + c(\sigma_\mathcal{O}).
\end{align*}
\end{proof}

\begin{obs}\label{lem:btj:2}
We have $\gamma(Y_2) \leq f( S_\mathcal{O} )$.
\end{obs}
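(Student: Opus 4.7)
The observation is almost immediate from the definitions, so my plan is short. By construction, every edge in $Y_2$ is of the form $(z,i)$ where $i$ is an odd-constrained facility that is closed in the initial solution but open in the optimal solution; that is, $i \in O \cap S_\mathcal{O} \setminus S_\mathcal{I}$. In particular, each such edge is an \emph{opening edge} in the auxiliary graph (it connects $z$ to an odd-constrained facility in $O \setminus S_\mathcal{I}$), so by the definition of the opening-edge cost we have $\gamma(z,i) = f(i)$.

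Summing this identity over all edges in $Y_2$ gives
\[
    \gamma(Y_2) = \sum_{i \in O \cap S_\mathcal{O} \setminus S_\mathcal{I}} f(i).
\]
Since $O \cap S_\mathcal{O} \setminus S_\mathcal{I} \subseteq S_\mathcal{O}$ and the opening costs $f(\cdot)$ are nonnegative, this sum is bounded above by $\sum_{i \in S_\mathcal{O}} f(i) = f(S_\mathcal{O})$, which is exactly the claimed inequality.

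There is no real obstacle here; the only thing worth double-checking is that each edge of $Y_2$ really is an opening edge (rather than a closing edge) of the auxiliary graph, so that the cost equals $f(i)$ rather than the more complicated expression in~\eqref{e:closingcost}. This is immediate from the condition $i \notin S_\mathcal{I}$ built into the definition of $Y_2$, matching precisely the side condition in the definition of opening edges.
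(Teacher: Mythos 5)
Your proof is correct and follows exactly the same reasoning as the paper's: each edge of $Y_2$ is an opening edge $(z,i)$ with $i \in O \cap S_\mathcal{O} \setminus S_\mathcal{I}$, so its cost is $f(i)$, and summing over the subset of $S_\mathcal{O}$ gives the bound by nonnegativity of the opening costs. No issues.
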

\begin{proof}
Note that the cost of each edge $(z, i) \in Y_2$ is the opening cost of $i$ and we add $(z, i)$ to $Y_2$ only if $i \in S_\mathcal{O} \setminus S_\mathcal{I}$.
\end{proof}

\begin{lemma}\label{lem:btj:3}
We have $\gamma(Y_3) \leq c(\sigma_\mathcal{I}) + c(\sigma_\mathcal{O}) + f(S_\mathcal{O})$.
\end{lemma}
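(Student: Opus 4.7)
The plan is, for each $C \in \mathcal{C}$, to name a concrete substitute facility whose cost realizes (up to the desired additive slack) the minimum in~\eqref{e:closingcost}, and then to sum these per-component bounds while ensuring nothing is double-counted. Concretely, for each $C \in \mathcal{C}$, I would pick the client $j^C \in \sigma_\mathcal{I}^{-1}(i^C)$ that minimizes $c(i^C, \sigma_\mathcal{O}(j))$ (this set is nonempty by the WLOG assumption that every open facility in $S_\mathcal{I}$ serves at least one client) and set $i^\star_C := \sigma_\mathcal{O}(j^C)$. Observe $i^\star_C \neq i^C$, because $i^C \notin S_\mathcal{O}$ while $i^\star_C \in S_\mathcal{O}$.

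The heart of the argument is a case-ruling observation: $i^\star_C$ cannot lie in $O \setminus S_\mathcal{I}$. Otherwise $(z,i^\star_C) \in Y_2$ and, since $\sigma_\mathcal{I}(j^C) = i^C$ and $\sigma_\mathcal{O}(j^C) = i^\star_C$, also $(i^C, i^\star_C) \in Y_1$; concatenating these two edges would place $z$ in the same connected component of $(F \cup \{z\}, Y_1 \cup Y_2)$ as $i^C$, contradicting $z \notin V(C)$. Hence $i^\star_C$ belongs to either $S_\mathcal{I} \setminus \{i^C\}$ or $\overline{O} \setminus S_\mathcal{I}$, so \eqref{e:closingcost} directly yields
\[
\gamma(z, i^C) \;\leq\; \bigl|\sigma_\mathcal{I}^{-1}(i^C)\bigr|\cdot c(i^C, i^\star_C) \;+\; f(i^\star_C)\cdot\mathbf{1}\bigl[i^\star_C \notin S_\mathcal{I}\bigr].
\]
By the minimality defining $j^C$ and the triangle inequality,
\[
\bigl|\sigma_\mathcal{I}^{-1}(i^C)\bigr|\cdot c(i^C, i^\star_C) \;\leq\; \sum_{j \in \sigma_\mathcal{I}^{-1}(i^C)} c(i^C, \sigma_\mathcal{O}(j)) \;\leq\; \sum_{j \in \sigma_\mathcal{I}^{-1}(i^C)} \bigl[c(i^C, j) + c(j, \sigma_\mathcal{O}(j))\bigr].
\]

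To sum these bounds over $C \in \mathcal{C}$, I would invoke two disjointness facts. First, the sets $\sigma_\mathcal{I}^{-1}(i^C)$ are pairwise disjoint because the facilities $i^C$ themselves are distinct, so the total client-cost contributions are bounded by $c(\sigma_\mathcal{I}) + c(\sigma_\mathcal{O})$. Second, the substitutes $\{i^\star_C\}_{C \in \mathcal{C}}$ are pairwise distinct: if $i^\star_{C_1} = i^\star_{C_2}$ for some $C_1 \neq C_2$, then the two $Y_1$-edges $(i^{C_1}, i^\star_{C_1})$ and $(i^{C_2}, i^\star_{C_2})$ would merge $C_1$ and $C_2$ into a single $Y_1 \cup Y_2$-component, a contradiction. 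Since each $i^\star_C$ lies in $S_\mathcal{O}$, the aggregated opening charges satisfy $\sum_{C} f(i^\star_C)\cdot\mathbf{1}[i^\star_C \notin S_\mathcal{I}] \leq f(S_\mathcal{O})$, and the stated inequality follows.

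The step I expect to be the main obstacle is precisely the case ruling: the minimum in~\eqref{e:closingcost} ranges over only $S_\mathcal{I} \setminus \{i^C\}$ and $\overline{O} \setminus S_\mathcal{I}$, so a substitute landing in $O \setminus S_\mathcal{I}$ would be useless for bounding $\gamma(z, i^C)$. It is the connectivity argument via $Y_2$---exploiting that $C \in \mathcal{C}$ avoids $z$---that rules out this bad case, and recognizing this structural consequence of the definition of $\mathcal{C}$ is the key insight making the whole accounting go through.
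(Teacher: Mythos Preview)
Your proof is correct and shares the paper's central insight: for any client $j \in \sigma_\mathcal{I}^{-1}(i^C)$, the optimal facility $\sigma_\mathcal{O}(j)$ must lie in $S_\mathcal{I}\setminus\{i^C\}$ or in $\overline{O}\setminus S_\mathcal{I}$, because a landing in $O\setminus S_\mathcal{I}$ would connect $C$ to $z$ through a $Y_2$-edge. The execution differs, however. The paper partitions $B^C:=\sigma_\mathcal{I}^{-1}(i^C)$ into the clients whose optimal facility is already open in $S_\mathcal{I}$ (fraction $\lambda$) and those whose optimal facility is even-constrained and closed (fraction $1-\lambda$), and bounds $\gamma(z,i^C)$ as a convex combination of the two branches of~\eqref{e:closingcost}; the opening-cost charge is then to the entire set $S^C_\mathsf{ec}\subseteq S_\mathcal{O}\cap V(C)$. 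You instead select a single witness $i^\star_C$ (the optimal facility of the closest client) and invoke whichever branch of~\eqref{e:closingcost} that witness falls into, charging at most one opening cost $f(i^\star_C)$ per component. Your route is a bit more economical---it sidesteps the $\lambda$-split entirely---and your distinctness argument for the $i^\star_C$'s plays the same role as the paper's observation that the $S^C_\mathsf{ec}$'s are disjoint across components. Both arguments yield the same bound.
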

\begin{proof}
Let $C$ be an arbitrary connected component in $\mathcal{C}$.
By Lemma \ref{existence-of-closed-odd-fac-in-T-odd}, $i^C$ is well-defined.
Let $B^C \subseteq D$ be the set of the clients that are assigned to $i^C$ in the initial solution, i.e., $B^C := \sigma^{-1}_\mathcal{I}(i^C)$.

We claim that $B^C$ can be partitioned into two sets $B^C_\mathsf{open}$ and $B^C_\mathsf{ec}$ where the former is a set of clients assigned in the optimal solution to a facility which is also open in the initial solution and the latter is to a facility which is even-constrained and closed in the initial solution.
For each $j\in B^C$, consider $\sO(j)=:i'\in\SO$.
Note that $i'\in V(C)$ from the construction of $Y_1$.
If $i'$ is open in the initial solution, since $i^C$ is closed in the optimal solution, we have $i'\in\SI\setminus\{i^C\}$.
Otherwise, since $i'$ is open in the optimal solution but $(z,i')$ was not chosen in $Y_2$, it must be the case that $i'$ is even-constrained.
Thus, to reiterate,
\begin{align*}
    B^C_\mathsf{open} &:= \left\{ j \;|\; \sigma_\mathcal{I}(j) = i^C \text{ and } \sigma_\mathcal{O}(j) \in (S_\mathcal{I} \setminus \{i^C \}) \cap V(C) \right\}; \\
    B^C_\mathsf{ec} &:= \left\{ j \;|\; \sigma_\mathcal{I}(j) = i^C \text{ and } \sigma_\mathcal{O}(j) \in (\overline{O} \setminus S_\mathcal{I}) \cap V(C) \right\}.
\end{align*}
Let $\lambda:=\left| B^C_\mathsf{open} \right| / \left| B^C \right|$.
Since $B^C_\mathsf{open}$ and $B^C_\mathsf{ec}$ forms a partition of $B^C$, we have
$0 \leq \lambda \leq 1$ and
$ \left| B^C_\mathsf{ec} \right| / \left| B^C \right| = 1 - \lambda $.

Now we bound $\gamma(z, i^C)$ from above using the assignment costs (in both solutions) of the clients in $B^C$, along with the opening costs of $V(C) \cap \SO$.

Assume for now that $\lambda >0$. Then we have
\begin{align}
    c(\sigma_\mathcal{I}|_{B^C_\mathsf{open}}) + c(\sigma_\mathcal{O}|_{B^C_\mathsf{open}})
    &= \sum_{j \in B^C_\mathsf{open}} c(i^C,j)
        + \sum_{j \in B^C_\mathsf{open}} c \left( \sigma_\mathcal{O}(j), j \right)
    \geq \sum_{j \in B^C_\mathsf{open}} c \left( i^C, \sigma_\mathcal{O} (j) \right) \nonumber\\
    &\geq \left| B^C_\mathsf{open} \right| \cdot
        \min_{i^\prime \in S_\mathcal{I} \setminus \{i^C\} } c(i^C, i^\prime)
     = \lambda \cdot \min_{i^\prime \in S_\mathcal{I} \setminus \{i^C\}}
        \left[ \left| \sigma^{-1}_\mathcal{I} (i^C) \right| \cdot c(i^C, i^\prime) \right], \label{D1-lowerbound} 
\end{align}
where the first inequality holds due to the triangle inequality and the second inequality is from the fact that, for every $j \in B^C_\mathsf{open}$, we have $\sigma_\mathcal{O}(j) \in S_\mathcal{I} \setminus \{i^C\}$. Note that the above trivially holds if $\lambda=0$.

Now suppose for the moment that $\lambda < 1$. Let $S^C_\mathsf{ec}$ be the set of the facilities that are assigned a client from $B^C_\mathsf{ec}$ in the optimal solution, i.e.,
$
S^C_\mathsf{ec} := \left\{ \sigma_\mathcal{O}(j) \mid j \in B^C_\mathsf{ec}  \right\}.
$
Let $i_\mathsf{ec}$ be the closest facility from $i^C$ in $S^C_\mathsf{ec}$.
We then have the following:
\begin{align}
    &c(\sigma_\mathcal{I}|_{B^C_\mathsf{ec}}) + c(\sigma_\mathcal{O}|_{B^C_\mathsf{ec}}) + f(S^C_\mathsf{ec})
    = \sum_{j \in B^C_\mathsf{ec}} c(i^C,j)
        + \sum_{j \in B^C_\mathsf{ec}} c\left(\sigma_\mathcal{O}(j), j\right)
        + \sum_{i \in S^C_\mathsf{ec}} f(i) \nonumber\\
    & \quad \geq \sum_{j \in B^C_\mathsf{ec}} c \left( i^C, \sigma_\mathcal{O}(j) \right)
        + \sum_{i \in S^C_\mathsf{ec}} f(i) 
    \geq \left| B^C_\mathsf{ec} \right|  \cdot c(i^C, i_\mathsf{ec})
        + f(i_\mathsf{ec}) \nonumber \\
    & \quad \geq (1 - \lambda) \cdot \left[ \left| \sigma^{-1}_\mathcal{I}(i^C) \right| \cdot c(i^C, i_\mathsf{ec})
        + f(i_\mathsf{ec}) \right] 
    \geq (1 - \lambda) \cdot \min_{i^\prime \in \overline{O} \setminus S_\mathcal{I}}
        \left[ \left| \sigma^{-1}_\mathcal{I}(i^C) \right| \cdot c(i^C, i^\prime) + f(i^\prime)  \right] \label{D2-lowerbound}
.\end{align}
Again, the above trivially holds when $\lambda=1$. Combining \eqref{D1-lowerbound} and \eqref{D2-lowerbound} yields
\begin{equation*}
    c(\sigma_\mathcal{I} |_{B^C}) + c(\sigma_\mathcal{O} |_{B^C}) + f(S^C_\mathsf{ec}) \geq \gamma(z, i^C).
\end{equation*}

It is noteworthy that $B^C$'s for $C \in \mathcal{C}$ are mutually disjoint since a client can be assigned to exactly one facility.
We can also observe that $S^C_\mathsf{ec}$'s are mutually disjoint because each facility belongs to at most one connected component in $\mathcal{C}$. We thus have
\begin{align*}
    \gamma(Y_3)
    = \sum_{C \in \mathcal{C}} \gamma(z, i^C) 
    \leq \sum_{C \in \mathcal{C}}
        \left(
            c(\sigma_\mathcal{I} |_{B^C})
            + c(\sigma_\mathcal{O} |_{B^C})
            + f(S^C_\mathsf{ec})
        \right) 
    \leq c(\sigma_\mathcal{I}) + c(\sigma_\mathcal{O}) + f(S_\mathcal{O}).
\end{align*}
\end{proof}

\begin{lemma} \label{lem:btj4}
There exists a $T$-join $J$ in $G$ whose cost is no more than
$2 \cdot \left( c(\sigma_\mathcal{I}) + c(\sigma_\mathcal{O}) + f(S_\mathcal{O}) \right)$.
\end{lemma}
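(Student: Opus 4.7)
The plan is to assemble the pieces we have developed---the three constituent bounds on $Y_1$, $Y_2$, $Y_3$ and the structural lemmas that guarantee $Y$ is well-defined and is a $T$-join dominator---and invoke the $T$-join/$T$-join-dominator duality to conclude that a minimum $T$-join cannot be more expensive than $\gamma(Y)$.

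First I would verify that $Y = Y_1 \cup Y_2 \cup Y_3$ is indeed a subset of the edge set $E$ of the auxiliary graph $G$. The edges of $Y_1$ are reassign edges (they connect pairs of distinct facilities; any loop is discarded by construction), and the edges of $Y_2$ are opening edges of odd-constrained facilities closed in the initial solution but open in the optimal solution, so both of these sets are immediately legal. For $Y_3$, I need to check that each closing edge $(z, i^C)$ actually exists in $G$. By Lemma~\ref{existence-of-closed-odd-fac-in-T-odd}, $i^C \in V(C) \cap O \cap \SI \setminus \SO$, so $i^C$ is an odd-constrained initially open facility; and by Lemma~\ref{script-C-safe}, the nonemptiness of $\mathcal{C}$ forces $|\SI| \geq 2$ or $|\Obar \setminus \SI| \geq 1$, which is precisely the side condition under which the closing edges are created.

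Second, I would argue that $Y$ is a $T$-join dominator, following the sketch already presented in the text right before Lemma~\ref{existence-of-closed-odd-fac-in-T-odd}. Specifically, for every connected component of $(F \cup \{z\}, Y)$ that does not contain $z$, I need to show that its intersection with $T$ has even size; equivalently, no component other than the one containing $z$ can be $T$-odd. Since $Y_3$ attaches an edge from $z$ to some $i^C \in V(C)$ for every $T$-odd component $C$ of $(F \cup \{z\}, Y_1 \cup Y_2)$ that excludes $z$, every previously $T$-odd non-$z$ component becomes merged into the component of $z$ in $(F \cup \{z\}, Y)$. Finally, because $|T|$ is even by construction of $T$ (we adjoined $z$ to $\Sinv$ precisely when $|\Sinv|$ was odd), the unique component containing $z$ must also be $T$-even. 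Hence every $T$-odd vertex set $U$ is cut by at least one edge of $Y$, so $Y$ is a $T$-join dominator.

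Third, I would add the three bounds: from Observation~\ref{lem:btj:1}, Observation~\ref{lem:btj:2}, and Lemma~\ref{lem:btj:3},
\begin{equation*}
\gamma(Y) = \gamma(Y_1) + \gamma(Y_2) + \gamma(Y_3) \leq \bigl(c(\sigma_\mathcal{I}) + c(\sigma_\mathcal{O})\bigr) + f(S_\mathcal{O}) + \bigl(c(\sigma_\mathcal{I}) + c(\sigma_\mathcal{O}) + f(S_\mathcal{O})\bigr) = 2\bigl(c(\sigma_\mathcal{I}) + c(\sigma_\mathcal{O}) + f(S_\mathcal{O})\bigr).
\end{equation*}
Invoking the lemma from~\cite{edmonds1973,schrijver2003} quoted in Section~\ref{sec:probdefn}, there exists a $T$-join $J$ in $G$ with $\gamma(J) \leq \gamma(Y) \leq 2\bigl(c(\sigma_\mathcal{I}) + c(\sigma_\mathcal{O}) + f(S_\mathcal{O})\bigr)$, as desired.

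There is no serious obstacle remaining here: the only subtlety is the careful bookkeeping of which components of the intermediate graph $(F \cup \{z\}, Y_1 \cup Y_2)$ are $T$-odd and the confirmation that the closing edges employed in $Y_3$ really exist in $G$, both of which have been handled by Lemmas~\ref{existence-of-closed-odd-fac-in-T-odd} and~\ref{script-C-safe}. Thus the proof is essentially a synthesis step.
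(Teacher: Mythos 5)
Your proof is correct and follows essentially the same route as the paper: the paper's own proof of Lemma~\ref{lem:btj4} simply cites Observations~\ref{lem:btj:1} and~\ref{lem:btj:2} and Lemmas~\ref{existence-of-closed-odd-fac-in-T-odd} and~\ref{lem:btj:3}, with the dominator argument and the well-definedness of the closing edges $(z,i^C)$ established in the surrounding text exactly as you spell them out. Your write-up is just a more explicit version of the same synthesis.
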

\begin{proof}
Immediate from Lemmas~\ref{existence-of-closed-odd-fac-in-T-odd} and \ref{lem:btj:3}, and Observations \ref{lem:btj:1} and \ref{lem:btj:2}.

\end{proof}

We can now prove our main theorem.

\begingroup
\def\thetheorem{\ref{thm:general}}
\begin{theorem}
There exists an $O(1)$-approximation algorithm for the $O$-facility location problem.
\end{theorem}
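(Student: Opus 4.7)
The plan is to collect the bounds established in this section together with the approximation guarantee of the subroutine $\mathcal{A}_\mathsf{FL}$. Writing $\mathsf{ALG}$ for the cost of the corrected solution $(S_\mathsf{ALG},\sigma_\mathsf{ALG})$, I would first decompose $\mathsf{ALG}$ as (cost of the initial solution) $+$ (correction cost). Lemma~\ref{lem:map1} bounds the correction cost by $\gamma(J)$, where $J$ is the sparsified $T$-join, and since the sparsification procedure is cost non-increasing, $\gamma(J)$ is no greater than the cost of a minimum $T$-join in $G$.

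The next step is to bound each summand by $O(\mathsf{OPT})$. For the initial solution, I would observe that any parity-feasible solution is also feasible for the unconstrained relaxation obtained by dropping parity constraints, so the unconstrained optimum is at most $\mathsf{OPT}$; the approximation guarantee of $\mathcal{A}_\mathsf{FL}$ then gives $f(S_\mathcal{I})+c(\sigma_\mathcal{I})\leq \rho_\mathsf{FL}\cdot\mathsf{OPT}$, which in particular implies $c(\sigma_\mathcal{I})\leq \rho_\mathsf{FL}\cdot\mathsf{OPT}$. For the correction, Lemma~\ref{lem:btj4} produces a $T$-join of cost at most $2\bigl(c(\sigma_\mathcal{I})+c(\sigma_\mathcal{O})+f(S_\mathcal{O})\bigr)$. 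Combining these,
\begin{align*}
\mathsf{ALG}
&\leq\bigl(f(S_\mathcal{I})+c(\sigma_\mathcal{I})\bigr)+2c(\sigma_\mathcal{I})+2\bigl(c(\sigma_\mathcal{O})+f(S_\mathcal{O})\bigr)\\
&\leq \rho_\mathsf{FL}\cdot\mathsf{OPT}+2\rho_\mathsf{FL}\cdot\mathsf{OPT}+2\cdot\mathsf{OPT}
=(3\rho_\mathsf{FL}+2)\cdot\mathsf{OPT}.
\end{align*}
Plugging in the current best LP-based constant $\rho_\mathsf{FL}\approx 1.488$ yields the ratio $\approx 6.464$ mentioned in the introduction.

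Since each ingredient---running $\mathcal{A}_\mathsf{FL}$, constructing $G$, computing a minimum-cost $T$-join (Edmonds), sparsifying, and applying the parity correction---runs in polynomial time, the algorithm as a whole is an $O(1)$-approximation algorithm, as claimed. The genuinely difficult step was the construction of the dominator $Y_1\cup Y_2\cup Y_3$ and the verification that $i^C$ is always well-defined (Lemmas~\ref{existence-of-closed-odd-fac-in-T-odd} and~\ref{script-C-safe}); that machinery has already been established in the preceding subsection, so the theorem itself reduces to the accounting step above.
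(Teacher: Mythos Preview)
Your proposal is correct and follows essentially the same approach as the paper: both combine Lemma~\ref{lem:map1} and Lemma~\ref{lem:btj4} with the guarantee $f(S_\mathcal{I})+c(\sigma_\mathcal{I})\leq\rho_\mathsf{FL}\cdot\mathsf{OPT}$ to obtain the bound $(3\rho_\mathsf{FL}+2)\cdot\mathsf{OPT}$, and note that all steps run in polynomial time. Your write-up is slightly more explicit in separating out the $2c(\sigma_\mathcal{I})\leq 2\rho_\mathsf{FL}\cdot\mathsf{OPT}$ step and in justifying why the unconstrained optimum is at most $\mathsf{OPT}$, but the argument is the same.
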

\addtocounter{theorem}{-1}
\endgroup
\begin{proof}
Immediate from Lemmas \ref{lem:map1} and \ref{lem:btj4}.
Note that the approximation ratio of our algorithm is $3 \rho_\mathsf{FL} + 2$ since our algorithm returns a feasible solution of cost at most
\begin{equation*}
\left( c(\sigma_\mathcal{I}) + f(S_\mathcal{I}) \right) + 2 \cdot \left( c(\sigma_\mathcal{I}) + c(\sigma_\mathcal{O}) + f(S_\mathcal{O}) \right)
\leq (3 \rho_\mathsf{FL} + 2) \cdot \mathsf{OPT},
\end{equation*}
since $c(\sI)+f(\SI)\leq\rho_\mathsf{FL}\cdot \mathsf{OPT}$.
It can be easily verified that the algorithm runs in polynomial time.
\end{proof}

\subsection{Deferred proofs}\label{sec:dproofs}
This section presents the deferred proofs.

\begingroup
\def\thetheorem{\ref{obs:equiv}}
\begin{observation}
The two problem definitions are equivalent.
\end{observation}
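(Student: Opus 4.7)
The observation requires showing that every instance of one problem definition can be converted to an instance of the other with matching optimal value. One direction is immediate: since the $O$-facility location problem is the special case of the three-valued version in which $\pi(i) \in \{\oddconst, \evenconst\}$ for every $i$, given an $O$-FL instance we simply set $\pi(i) = \oddconst$ for $i \in O$ and $\pi(i) = \evenconst$ for $i \in \overline{O}$, and feasible solutions and their costs are identical.

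For the other direction, the plan is to reduce a three-valued instance to an $O$-FL instance by replacing each unconstrained facility with two co-located copies, one odd-constrained and one even-constrained. Concretely, given an instance $(F, D, f, c, \pi)$, let $U := \pi^{-1}(\unconst)$. For each $i \in U$ introduce two fresh facilities $i_\mathsf{o}$ and $i_\mathsf{e}$ with opening costs $f(i_\mathsf{o}) = f(i_\mathsf{e}) = f(i)$, set the distance from each copy to any other node of $F \cup D$ equal to the corresponding distance from $i$, and set $c(i_\mathsf{o}, i_\mathsf{e}) := 0$ (the resulting function is still a metric). Remove $i$ from $F$, and define $O$ to consist of the originally odd-constrained facilities together with all the $i_\mathsf{o}$'s.

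The core step is then to verify that the two optima coincide. Given a feasible solution $(S, \sigma)$ to the three-valued instance, I would produce a solution to the $O$-instance of the same cost by, for every $i \in U \cap S$, opening $i_\mathsf{o}$ if $|\sigma^{-1}(i)|$ is odd and $i_\mathsf{e}$ otherwise, and rerouting the clients of $i$ to that copy. In the reverse direction, given a feasible solution $(S', \sigma')$ to the $O$-instance, I would collapse each pair $\{i_\mathsf{o}, i_\mathsf{e}\}$ back to $i$: if both copies are open, open $i$ once and assign the union of the two client sets to $i$, strictly \emph{decreasing} the cost by $f(i)$; if at most one copy is open, route its clients (if any) to $i$, giving identical cost. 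In either case the parity at each constrained facility of the original instance is untouched, and the unconstrained facility $i$ trivially satisfies its (absent) constraint. Combining the two inequalities yields equality of the optima and hence equivalence.

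The only subtlety is confirming that the triangle inequality is preserved after adding co-located copies with zero internal distance, and that a WLOG-optimal solution never opens a facility with no assigned clients (which makes the case analysis in the collapsing step clean); both are routine. The NP-hardness corollary then follows because the classical unconstrained facility location problem is the special case of the three-valued version obtained by setting $\pi \equiv \unconst$, and the reduction above turns this into a polynomial-time, cost-preserving reduction to the $O$-FL problem.
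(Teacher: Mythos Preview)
Your proposal is correct and follows essentially the same approach as the paper: both directions match, and in particular the nontrivial direction proceeds by duplicating each unconstrained facility into an odd copy and an even copy at the same location. The paper's argument is terser --- it simply notes that, since opening costs are nonnegative, an optimal solution may be assumed to open at most one copy --- whereas you spell out the collapsing step explicitly; the content is the same. (Minor nit: ``strictly decreasing'' should be ``decreasing,'' since $f(i)$ may be zero.)
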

\addtocounter{theorem}{-1}
\endgroup
\begin{proof}
Given an $O$-facility location problem instance, simply by defining $\pi$ as\[
\pi(i):=\begin{cases}\mathsf{odd},&\textrm{if }i\in O,\\\mathsf{even},&\textrm{otherwise},\end{cases}
\]we arrive at an equivalent instance of the second form, i.e., the form where unconstrained facilities are allowed. Now suppose that we are given an instance of the second form. We create two copies of every facility $i$ such that $\pi(i)=\mathsf{unconstrained}$ and put exactly one of these two copies into $O$, in addition to the facilities $i$ with $\pi(i)=\mathsf{odd}$. Since all opening costs are nonnegative, we can assume without loss of generality that an optimal solution to the new instance will open at most one copy of the duplicates. This shows the equivalence of the two problem definitions.
\end{proof}

\begingroup
\def\thetheorem{\ref{lem:map1}}
\begin{lemma}
The corrected solution is a feasible solution.
Moreover, the correction cost is bounded by $\gamma(J)$ from above.
\end{lemma}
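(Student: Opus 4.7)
The plan is to show that (a) each of the three correction operations is well-defined, (b) the final assignment satisfies all parity constraints, and (c) the total cost increase is at most $\gamma(J)$. All three will follow from a per-edge accounting, with the structural guarantees in Observations~\ref{obs:pre1}--\ref{obs:pre3} controlling the interactions between neighboring edges.

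For well-definedness, step~1 (opening) is trivial. For each reassign edge $(i_1, i_2)$ processed in step~2, Observation~\ref{obs:pre2} together with the WLOG assumption that every facility in $\SI$ has at least one client ensures that the giver has a client. If the other endpoint, say $i_2$, is not in $\SI$, then Observation~\ref{obs:pre1} and the $T$-join parity at $i_2$ force $(z, i_2)$ to be an opening edge (already processed in step~1), so the receiver is open. The direction rule selects the giver to be a facility with at least one client, never a newly opened one. For each closing edge $(z, i)$ in step~3, Observation~\ref{obs:pre3} ensures that $\phi(i)$ is not itself being closed, so $\phi(i)$ is either already in $\SI \setminus \{i\}$ or is opened by this very step.

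Parity correctness follows from a case analysis on each $i\in F$. Since $\dJ(i)$ is odd iff $i \in T$, and Observation~\ref{obs:pre1} bounds the number of facility-neighbors of $i$ in $J$ to at most one, the possible local configurations are few and each can be verified directly. The key invariant is that when a closing edge $(z,i)$ is processed, the number of clients $m_i$ currently assigned to $i$ is \emph{even}: if $i$ is invalid, $|\sI^{-1}(i)|$ is already even and the $T$-join parity leaves no room for a reassign edge at $i$, so $m_i = |\sI^{-1}(i)|$; if $i$ is valid, the $T$-join parity forces the closing edge to be paired with exactly one reassign edge, and its direction (dictated by the rule since $(z,i) \in J$) moves one client away from $i$, yielding $m_i = |\sI^{-1}(i)|-1$. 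Consequently the parity at $\phi(i)$ is never disturbed by the closing operation, and the remaining per-vertex configurations (reassign-only fixes for $\Sinv$, paired opening-plus-reassign for $O\setminus\SI$, and do-nothing for valid closed facilities) directly yield the correct parities.

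For the cost bound, we decompose $\Delta := (c(\sigma_\ourSol) + f(S_\ourSol)) - (c(\sI) + f(\SI))$ by edges of $J$. Each opening edge $(z,i)$ contributes $f(i) = \gamma(z,i)$; each reassign edge $(i_1, i_2)$ contributes at most $c(i_1, i_2) = \gamma(i_1,i_2)$ by the triangle inequality, since exactly one client moves. Each closing edge $(z,i)$ contributes at most $-f(i) + m_i\cdot c(i,\phi(i))$ if $\phi(i)\in\SI$, and at most $-f(i) + f(\phi(i)) + m_i\cdot c(i,\phi(i))$ otherwise; in either case, by the definition of $\gamma(z,i)$ in~\eqref{e:closingcost} and because $m_i \leq |\sI^{-1}(i)|$, this is at most $\gamma(z,i)$. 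Summing yields $\Delta \leq \gamma(J)$. The main subtlety of the whole argument is maintaining the inequality $m_i \leq |\sI^{-1}(i)|$ at each closing; this is precisely where the direction rule of step~2 combined with the cycle-removal operation~\eqref{pp:3} of the sparsification is essential, guaranteeing that at most one endpoint of any reassign edge carries a $z$-edge and hence no reassign ever loads an extra client onto a facility that is about to be closed.
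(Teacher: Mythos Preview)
Your proof is correct and relies on the same structural ingredients as the paper's proof: Observations~\ref{obs:pre1}--\ref{obs:pre3}, the fact that the direction rule in step~2 always sends a client away from a facility carrying a (closing) $z$-edge so that $m_i\le |\sI^{-1}(i)|$, and the key observation that $m_i$ is even at the moment a closing edge is processed. The only real difference is organizational. The paper packages the parity argument as a pair of invariants maintained while edges are deleted from $J$ one by one: it updates $T\gets T\triangle\{\text{endpoints}\}$ at each deletion and keeps the invariant that $J$ is a $T$-join and $T\setminus\{z\}$ is exactly the current set of facilities with incorrect parity; feasibility then falls out because the empty set is an $\emptyset$-join. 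You instead enumerate the possible local configurations at each facility (closing-only, closing-plus-reassign, opening-plus-reassign, reassign-only, idle) and verify each directly. Both arguments are equivalent in content; the invariant formulation is a bit cleaner and avoids having to separately track the effect of receiving clients via $\phi$, but your case analysis is equally valid.
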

\addtocounter{theorem}{-1}
\endgroup
\begin{proof}
We say a facility $i$ has the \emph{incorrect} parity in a solution $(S, \sigma)$ if the parity constraint of the facility is violated in the ``current'' solution. (This definition differs from the \emph{invalid} facilities, which are \emph{fixed} as the facilities with the incorrect parities in the initial solution.)
We show the feasibility of the corrected solution by establishing invariants throughout the parity correction procedure.
We modify the procedure so that it now modifies $T$ in addition to $J$ in each iteration.
The invariants are the following: 
\begin{itemize}\setlength{\itemsep}{0pt}
    \item $J$ is a $T$-join in the auxiliary graph, and
    \item $T \setminus \{ z \}$ is exactly the set of facilities having the incorrect parities.
\end{itemize}

Observe that $|J|$ decreases by one in each iteration; the corrected solution is, therefore, feasible since the empty set of edges is an $\emptyset$-join.
The correction cost will be bounded by showing that the cost incurred in each iteration can be covered by the cost of the corresponding edge removed from $J$.
Recall that we start with the sparsified $T$-join $J$ where $T\setminus\{z\} = S_\mathsf{inv}$; it is clear that both invariants initially hold.

Now we start erasing the edges from $J$.
Here we remark that, given a $T$-join $J$ and an edge $(i_1, i_2) \in J$, $J \setminus \{(i_1, i_2)\}$ is a $T \triangle \{i_1, i_2\}$-join since the degrees of $i_1$ and $i_2$ decrease by one.

Let us consider the first substep.
For each opening edge $(z, i) \in J$, we open facility $i$ in the solution, remove $(z, i)$ from $J$, and update $T \gets T \triangle \{z, i\}$.
It can be easily seen that $J$ is still a $T$-join.
By the construction of the auxiliary graph, facility $i$ is closed in the initial solution and hence $i \notin T$ at the beginning of this iteration. 
Moreover, since $i$ is odd-constrained at the same time, after opening $i$, this facility enters the set of facilities having the incorrect parities, establishing the second invariant.
Observe that the cost for opening $i$ can be covered by $\gamma(z, i)$.

Next, for each reassign edge $(i_1, i_2) \in J$, we transfer a client $j$ from one of the two facilities to the other.
We then remove $(i_1, i_2) \in J$ and update $T \gets T \triangle \{i_1, i_2\}$.
By Observation \ref{obs:pre2}, at least one of $i_1$ and $i_2$ was open in $S_\initSol$.
Recall that we can assume without loss of generality that $\mathcal{A}_\mathsf{FL}$ returns a solution such that $i\in\SI$ implies $\sigma^{-1}_{\mathcal{I}}(i)\neq\emptyset$.
Thus, we can always choose a client $j$ to be reassigned (see Observation~\ref{obs:pre1}).
Since exactly one client is reassigned, the parities of $i_1$ and $i_2$ are flipped and the second invariant is maintained.
Note that the cost of reassigning $j$ from, say, $i_1$ to $i_2$ is $-c(i_1, j) + c(i_2, j)\leq c(i_1, i_2)= \gamma(i_1, i_2)$ from the triangle inequality.
If the reassignment is from $i_2$ to $i_1$, the symmetric argument holds.

Let us now consider the last substep where we handle closing edges.
For each closing edge $(z, i) \in J$, we close $i$ and reassign all the clients currently assigned to $i$ to its substitute $\phi(i)$.
We then remove $(z, i)$ from $J$ and update $T \gets T \triangle \{z, i\}$.

Consider the time point right before closing $i$.
We claim that the number of clients assigned to $i$ is even.
Due to the construction of the auxiliary graph, we have $i \in O$.
Since we have already processed (and removed) all reassign edges, $i$ is adjacent with only $z$ at the moment and thus is in $T$.
This, from the induction hypothesis, implies that $i$ has the incorrect parity, i.e., $i$ is assigned even number of clients.
Therefore, reassigning all the clients assigned to $i$ to $\phi(i)$ would not change the parity of $\phi(i)$.
(Note that, if $\phi(i)$ was closed at the beginning, then $\phi(i) \in \overline{O}\setminus \SI \subseteq \overline{O}$.)
With the fact that $i$ becomes closed at this iteration, this shows that both invariants hold.

We finally verify that the correction cost here is no more than $\gamma(z, i)$.
Recall that we close facility $i$ and reassign every client $j$ assigned to $i$ to $\phi(i)$.
Thus, the change of the assignment cost for each $j$ is exactly $-c(i,j) + c(\phi(i), j)$.
As argued above, by the triangle inequality, we know that this value can be bounded by $c(i, \phi(i))$ from above.
We can thus see that the total assignment cost may increase by at most $|\sigma^{-1}_\initSol(i)| \cdot c(i, \phi(i))$ since the number of clients assigned to $i$ does not increase during the previous substeps.
(During the second substep, when we process $(i_1,i_2)$ and find that one of the facilities, say $i_1$, is adjacent with $z$ in $J$, we reassigned a client \emph{from} $i_1$ \emph{to} $i_2$.)
If $\phi(i)$ was closed at the beginning of the correction, we may need to open $\phi(i)$, but $\gamma(z, i)$ already pays for it.
If $\phi(i)$ was open, by Observation \ref{obs:pre3}, we know $\phi(i)$ will never be closed.
These together imply that the correction cost is no greater than $\gamma(z,i)$.
\end{proof}

\bibliographystyle{abbrv}
\bibliography{lit}

\begin{thebibliography}{10}

\bibitem{ahmed1989}
M.~{Ahamad} and M.~H. {Ammar}.
\newblock Performance characterization of quorum-consensus algorithms for
  replicated data.
\newblock {\em IEEE Transactions on Software Engineering}, 15(4):492--496,
  1989.

\bibitem{ahmadian2013}
S.~Ahmadian, Z.~Friggstad, and C.~Swamy.
\newblock Local-search based approximation algorithms for mobile facility
  location problems.
\newblock In {\em ACM-SIAM Symposium on Discrete Algorithms (SODA)}, pages
  1607--1621, 2013.

\bibitem{an2015}
H.-C. An, A.~Bhaskara, C.~Chekuri, S.~Gupta, V.~Madan, and O.~Svensson.
\newblock Centrality of trees for capacitated {$k$}-center.
\newblock {\em Mathematical Programming}, 154(1):29--53, 2015.

\bibitem{an2017}
H.-C. An, M.~Singh, and O.~Svensson.
\newblock {LP}-based algorithms for capacitated facility location.
\newblock {\em SIAM Journal on Computing}, 46(1):272--306, 2017.

\bibitem{balinski1964}
M.~L. Balinski.
\newblock On finding integer solutions to linear programs.
\newblock Technical report, Mathematica Princeton NJ, 1964.

\bibitem{bansal2012}
M.~Bansal, N.~Garg, and N.~Gupta.
\newblock A 5-approximation for capacitated facility location.
\newblock In {\em European Symposium on Algorithms (ESA)}, pages 133--144,
  2012.

\bibitem{benczur2000}
A.~A. Bencz{\'u}r and O.~F{\"u}l{\"o}p.
\newblock Fast algorithms for even/odd minimum cuts and generalizations.
\newblock In {\em European Symposium on Algorithms (ESA)}, pages 88--99, 2000.

\bibitem{byrka2007}
J.~Byrka.
\newblock An optimal bifactor approximation algorithm for the metric
  uncapacitated facility location problem.
\newblock In {\em Approximation, Randomization, and Combinatorial Optimization.
  Algorithms and Techniques}, pages 29--43. 2007.

\bibitem{chan2018}
T.-H.~H. Chan, A.~Guerqin, and M.~Sozio.
\newblock Fully dynamic {$k$}-center clustering.
\newblock In {\em World Wide Web Conference (WWW)}, pages 579--587, 2018.

\bibitem{cheriyan2015}
J.~Cheriyan, Z.~Friggstad, and Z.~Gao.
\newblock Approximating minimum-cost connected {$T$}-joins.
\newblock {\em Algorithmica}, 72(1):126--147, 2015.

\bibitem{cygan2018}
M.~Cygan, A.~Czumaj, M.~Mucha, and P.~Sankowski.
\newblock Online facility location with deletions.
\newblock In {\em European Symposium on Algorithms (ESA)}, pages 21:1--21:15,
  2018.

\bibitem{cygan2012}
M.~{Cygan}, M.~{Hajiaghayi}, and S.~{Khuller}.
\newblock {LP} rounding for {$k$}-centers with non-uniform hard capacities.
\newblock In {\em IEEE Symposium on Foundations of Computer Science (FOCS)},
  pages 273--282, 2012.

\bibitem{demaine2005}
E.~D. Demaine, F.~V. Fomin, M.~Hajiaghayi, and D.~M. Thilikos.
\newblock Fixed-parameter algorithms for {$(k, r)$}-center in planar graphs and
  map graphs.
\newblock {\em ACM Transactions on Algorithms}, 1(1):33--47, 2005.

\bibitem{edmonds1973}
J.~Edmonds and E.~Johnson.
\newblock Matching, {E}uler tours and the {C}hinese postman.
\newblock {\em Mathematical Programming}, 5:88--124, 1973.

\bibitem{eisenstat2014planar}
D.~Eisenstat, P.~N. Klein, and C.~Mathieu.
\newblock Approximating {$k$}-center in planar graphs.
\newblock In {\em ACM-SIAM Symposium on Discrete Algorithms (SODA)}, pages
  617--627, 2014.

\bibitem{eisenstat2014evolving}
D.~Eisenstat, C.~Mathieu, and N.~Schabanel.
\newblock Facility location in evolving metrics.
\newblock In {\em Automata, Languages, and Programming}, pages 459--470, 2014.

\bibitem{ene2013}
A.~Ene, S.~Har{-}Peled, and B.~Raichel.
\newblock Fast clustering with lower bounds: No customer too far, no shop too
  small.
\newblock {\em Computing Research Repository (CoRR)}, 2013.

\bibitem{everett1997}
H.~Everett, C.~M. De~Figueiredo, C.~Linhares-Sales, F.~Maffray, O.~Porto, and
  B.~A. Reed.
\newblock Path parity and perfection.
\newblock {\em Discrete Mathematics}, 165:233--252, 1997.

\bibitem{frank1971}
F.~Frank and L.~R. Anderson.
\newblock Effects of task and group size upon group productivity and member
  satisfaction.
\newblock {\em Sociometry}, 34(1):135--149, 1971.

\bibitem{goranci2018}
G.~Goranci, M.~Henzinger, and D.~Leniowski.
\newblock A tree structure for dynamic facility location.
\newblock In {\em European Symposium on Algorithms (ESA)}, pages 39:1--39:13,
  2018.

\bibitem{grotschel1981ellipsoid}
M.~Gr{\"o}tschel, L.~Lov{\'a}sz, and A.~Schrijver.
\newblock The ellipsoid method and its consequences in combinatorial
  optimization.
\newblock {\em Combinatorica}, 1(2):169--197, 1981.

\bibitem{grotschel1984}
M.~Gr{\"o}tschel, L.~Lov{\'a}sz, and A.~Schrijver.
\newblock Corrigendum to our paper ``the ellipsoid method and its consequences
  in combinatorial optimization''.
\newblock {\em Combinatorica}, 4(4):291--295, 1984.

\bibitem{grotschel1981weakly}
M.~Gr{\"o}tschel and W.~R. Pulleyblank.
\newblock Weakly bipartite graphs and the max-cut problem.
\newblock {\em Operations Research Letters}, 1(1):23--27, 1981.

\bibitem{hochbaum1985}
D.~S. Hochbaum and D.~B. Shmoys.
\newblock A best possible heuristic for the {$k$}-center problem.
\newblock {\em Mathematics of Operations Research}, 10(2):180--184, 1985.

\bibitem{jain2002}
K.~Jain, M.~Mahdian, and A.~Saberi.
\newblock A new greedy approach for facility location problems.
\newblock In {\em ACM Symposium on Theory of Computing (STOC)}, pages 731--740,
  2002.

\bibitem{jain2001}
K.~Jain and V.~V. Vazirani.
\newblock Approximation algorithms for metric facility location and
  {$k$}-median problems using the primal-dual schema and {L}agrangian
  relaxation.
\newblock {\em Journal of the ACM}, 48(2):274--296, 2001.

\bibitem{kakimura2012}
N.~Kakimura, K.-i. Kawarabayashi, and Y.~Kobayashi.
\newblock Erd{\H{o}}s-{P\'o}sa property and its algorithmic applications ---
  parity constraints, subset feedback set, and subset packing.
\newblock In {\em ACM-SIAM Symposium on Discrete Algorithms (SODA)}, pages
  1726--1736, 2012.

\bibitem{kaminski2012}
M.~Kaminski and N.~Nishimura.
\newblock Finding an induced path of given parity in planar graphs in
  polynomial time.
\newblock In {\em ACM-SIAM Symposium on Discrete Algorithms (SODA)}, pages
  656--670, 2012.

\bibitem{khuller2000}
S.~Khuller and Y.~Sussmann.
\newblock The capacitated {$k$}-center problem.
\newblock {\em SIAM Journal on Discrete Mathematics}, 13(3):403--418, 2000.

\bibitem{korupolu2000}
M.~R. Korupolu, C.~Plaxton, and R.~Rajaraman.
\newblock Analysis of a local search heuristic for facility location problems.
\newblock {\em Journal of Algorithms}, 37(1):146--188, 2000.

\bibitem{alfred1963}
A.~A. Kuehn and M.~J. Hamburger.
\newblock A heuristic program for locating warehouses.
\newblock {\em Management Science}, 9(4):643--666, 1963.

\bibitem{lammersen2008}
C.~Lammersen and C.~Sohler.
\newblock Facility location in dynamic geometric data streams.
\newblock In {\em European Symposium on Algorithms (ESA)}, pages 660--671,
  2008.

\bibitem{lang2018}
H.~Lang.
\newblock Online facility location against a {$t$}-bounded adversary.
\newblock In {\em ACM-SIAM Symposium on Discrete Algorithms (SODA)}, pages
  1002--1014, 2018.

\bibitem{li2011}
S.~Li.
\newblock A 1.488 approximation algorithm for the uncapacitated facility
  location problem.
\newblock In {\em Automata, Languages and Programming}, pages 77--88, 2011.

\bibitem{li2019}
S.~Li.
\newblock On facility location with general lower bounds.
\newblock In {\em ACM-SIAM Symposium on Discrete Algorithms (SODA)}, pages
  2279--2290, 2019.

\bibitem{alan1964}
A.~S. Manne.
\newblock Plant location under economies-of-scale---decentralization and
  computation.
\newblock {\em Management Science}, 11(2):213--235, 1964.

\bibitem{marx2015}
D.~Marx and M.~Pilipczuk.
\newblock Optimal parameterized algorithms for planar facility location
  problems using {V}oronoi diagrams.
\newblock In {\em European Symposium on Algorithms (ESA)}, pages 865--877,
  2015.

\bibitem{menon2011}
T.~Menon and K.~W. Phillips.
\newblock Getting even or being at odds? {C}ohesion in even-and odd-sized small
  groups.
\newblock {\em Organization Science}, 22(3):738--753, 2011.

\bibitem{padberg1982}
M.~W. Padberg and M.~R. Rao.
\newblock Odd minimum cut-sets and {$b$}-matchings.
\newblock {\em Mathematics of Operations Research}, 7(1):67--80, 1982.

\bibitem{pal2001}
M.~{Pal}, T.~{Tardos}, and T.~{Wexler}.
\newblock Facility location with nonuniform hard capacities.
\newblock In {\em IEEE Symposium on Foundations of Computer Science (FOCS)},
  pages 329--338, 2001.

\bibitem{schrijver2003}
A.~Schrijver.
\newblock {\em Combinatorial optimization: polyhedra and efficiency}.
\newblock Springer Science \& Business Media, 2003.

\bibitem{schrijver1994}
A.~Schrijver and P.~Seymour.
\newblock Packing odd paths.
\newblock {\em Journal of Combinatorial Theory}, 62:280--288, 1994.

\bibitem{sebo2013}
A.~Seb{\H{o}}.
\newblock Eight-fifth approximation for the path {TSP}.
\newblock In {\em Integer Programming and Combinatorial Optimization (IPCO)},
  pages 362--374, 2013.

\bibitem{shmoys1997}
D.~B. Shmoys and K.~Aardal.
\newblock {\em Approximation algorithms for facility location problems}.
\newblock Utrecht University: Information and Computing Sciences, 1997.

\bibitem{john1963}
J.~F. Stollsteimer.
\newblock A working model for plant numbers and locations.
\newblock {\em Journal of Farm Economics}, 45(3):631--645, 1963.

\bibitem{zookeeper}
{The Apache Software Foundation}.
\newblock Apache {Z}oo{K}eeper.
\newblock \url{http://zookeeper.apache.org/}.

\end{thebibliography}

\appendix

\section{Analysis for the All-Even Case}\label{app:even}

We present the full analysis of our algorithm described in Section~\ref{sec:alleven}.
We assume that $|D|$ is even; otherwise, the instance is infeasible.
It is easy to see that our algorithm returns a feasible solution since every facility is assigned exactly twice the number of the clients it is assigned in $\sigma_\UncSol$.
Fix an arbitrary optimal solution to the original problem, and let $S_\mathcal{O}\subseteq F$ and $\sigma_\mathcal{O}:D\to S_\mathcal{O}$ denote this solution.

\begin{lemma} \label{bound-matching}
There exists a matching $M$ whose cost is no greater than $c(\sigma_\mathcal{O})$.
\end{lemma}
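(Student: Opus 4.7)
The plan is to construct $M$ explicitly using the structure imposed by the all-even constraint on the optimal solution. Since $O = \emptyset$ in this section, every open facility $i \in S_\mathcal{O}$ satisfies that $|\sigma_\mathcal{O}^{-1}(i)|$ is even. This is precisely the structural property that enables a local pairing within each cluster.

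The key step is to, for each open facility $i \in S_\mathcal{O}$, arbitrarily pair up the clients in $\sigma_\mathcal{O}^{-1}(i)$ into $|\sigma_\mathcal{O}^{-1}(i)|/2$ pairs; this is well-defined since the number of clients in each cluster is even. Let $M$ denote the union of these pairs over all $i \in S_\mathcal{O}$. Since every client in $D$ belongs to exactly one cluster and each cluster contributes a perfect pairing of its clients, $M$ is indeed a perfect matching on $D$.

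To bound the cost, I would apply the triangle inequality locally within each cluster: for any pair $\{j_1, j_2\} \in M$ both assigned to the same facility $i = \sigma_\mathcal{O}(j_1) = \sigma_\mathcal{O}(j_2)$, we have $c(j_1, j_2) \leq c(i, j_1) + c(i, j_2)$. Summing over all pairs in $M$, the right-hand side telescopes into $\sum_{j \in D} c(\sigma_\mathcal{O}(j), j) = c(\sigma_\mathcal{O})$, since each client $j$ appears in exactly one pair and contributes exactly one term $c(\sigma_\mathcal{O}(j), j)$. This yields $c(M) \leq c(\sigma_\mathcal{O})$, which establishes the claim (and also implies the bound $c(M^\star) \leq c(\sigma_\mathcal{O})$ on the minimum perfect matching cost).

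There is no real obstacle here; the only thing worth noting is that the argument crucially relies on the all-even hypothesis. If any facility were odd-constrained in the optimal solution, the local pairing would leave an unmatched client and a global matching would require linking across clusters, which cannot be bounded by $c(\sigma_\mathcal{O})$ alone. This is precisely why the all-even case admits such a clean reduction to the unconstrained problem while the general case in Section~\ref{sec:general} requires the $T$-join machinery.
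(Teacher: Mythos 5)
Your proof is correct and follows essentially the same route as the paper's: pair clients arbitrarily within each cluster $\sigma_\mathcal{O}^{-1}(i)$ (which has even size since $O=\emptyset$), bound each pair's cost via the triangle inequality through $i$, and sum over the partition $\{\sigma_\mathcal{O}^{-1}(i)\}_{i\in S_\mathcal{O}}$ of $D$. No gaps.
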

\begin{proof}
Let $i$ be a facility in $S_\mathcal{O}$. Observe that $i$ is assigned an even number of clients in the optimal solution: $|\sigma^{-1}_\mathcal{O}(i)|$ is even.
We can thus find a matching $M_i$ on $\sigma^{-1}_\mathcal{O}(i)$ by arbitrarily pairing them, and the cost of $M_i$ is at most $\sum_{j \in \sigma^{-1}_\mathcal{O}(i)} c(i,j)$ since, for every $(j_1, j_2) \in M_i$, $c(j_1, j_2) \leq c(i, j_1) + c(i, j_2)$.
Choose $M$ as the union of $M_i$ for all $i\in S_\mathcal{O}$. The lemma now follows from the fact that $\{\sigma^{-1}_\mathcal{O}(i)\}_{i\in S_\mathcal{O} }$ form a partition of $D$.
\end{proof}

\begin{lemma} \label{expect-bound}
$\mathbb{E}\left[ c(\sigma_\UncSol) + f(S_\UncSol) \right]
\leq \rho_\mathsf{FL} \cdot \left[ \frac{c (\sigma_\mathcal{O})}{2} + f(S_\mathcal{O}) \right].$
\end{lemma}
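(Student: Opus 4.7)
The plan is to bound the expected cost of the unconstrained-instance solution produced by $\mathcal{A}_\mathsf{FL}$ by constructing an explicit feasible solution to the unconstrained instance on $D^\prime$ whose expected cost is small, and then invoking the $\rho_\mathsf{FL}$-approximation guarantee.

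First, I would observe that $\mathcal{A}_\mathsf{FL}$ is executed on an instance whose facility set, opening costs, and distances are exactly those of the input, but whose client set is the random subset $D^\prime$. Hence, conditioning on any realization of $D^\prime$, we have $c(\sigma_\UncSol)+f(S_\UncSol)\le \rho_\mathsf{FL}\cdot \mathsf{OPT}_\mathsf{unc}(D^\prime)$, where $\mathsf{OPT}_\mathsf{unc}(D^\prime)$ denotes the optimum of the unconstrained facility location instance on $D^\prime$.

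Next, I would exhibit the explicit feasible solution $(S_\mathcal{O}, \sigma_\mathcal{O}|_{D^\prime})$ to the unconstrained instance on $D^\prime$. This is feasible because every client in $D^\prime\subseteq D$ is assigned to a facility in $S_\mathcal{O}$, and there are no parity constraints in this unconstrained subroutine instance. Hence
\[
\mathsf{OPT}_\mathsf{unc}(D^\prime)\;\le\; f(S_\mathcal{O})+c(\sigma_\mathcal{O}|_{D^\prime}).
\]

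The main (and only nontrivial) step is then to compute the expectation of $c(\sigma_\mathcal{O}|_{D^\prime})$ over the independent uniform choices of representatives. By linearity of expectation, and using that for each $e=\{j_1,j_2\}\in M^\star$, exactly one of $j_1,j_2$ lies in $D^\prime$ (uniformly at random), the expected contribution of the pair to $c(\sigma_\mathcal{O}|_{D^\prime})$ is $\tfrac{1}{2}\bigl[c(\sigma_\mathcal{O}(j_1),j_1)+c(\sigma_\mathcal{O}(j_2),j_2)\bigr]$. Summing over all edges of $M^\star$ and recalling that $M^\star$ is a perfect matching on $D$,
\[
\mathbb{E}\bigl[c(\sigma_\mathcal{O}|_{D^\prime})\bigr]\;=\;\tfrac{1}{2}\sum_{j\in D}c(\sigma_\mathcal{O}(j),j)\;=\;\tfrac{c(\sigma_\mathcal{O})}{2}.
\]

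Combining these three steps, taking expectations in the first inequality, and substituting the bound just derived, yields
\[
\mathbb{E}\bigl[c(\sigma_\UncSol)+f(S_\UncSol)\bigr]\;\le\;\rho_\mathsf{FL}\cdot\Bigl[f(S_\mathcal{O})+\tfrac{c(\sigma_\mathcal{O})}{2}\Bigr],
\]
which is the claimed bound. I do not expect any serious obstacle here; the argument is purely a linearity-of-expectation averaging over the random representative choices, with Lemma~\ref{bound-matching} reserved for a separate step in the overall analysis that bounds the cost of pulling in the non-representative clients via the matching.
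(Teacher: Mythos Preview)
Your proposal is correct and follows essentially the same approach as the paper: exhibit $(S_\mathcal{O},\sigma_\mathcal{O}|_{D'})$ as a feasible solution to the unconstrained instance on $D'$, use linearity of expectation and the marginal probability $\Pr[j\in D']=\tfrac12$ to get $\mathbb{E}[c(\sigma_\mathcal{O}|_{D'})]=\tfrac{1}{2}c(\sigma_\mathcal{O})$, and then apply the $\rho_\mathsf{FL}$-approximation guarantee. The only cosmetic difference is that you phrase the first step via conditioning on $D'$ and the value $\mathsf{OPT}_\mathsf{unc}(D')$, whereas the paper names an optimal solution $(S_{\mathcal{O}'},\sigma_{\mathcal{O}'})$ explicitly.
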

\begin{proof}
Observe that $(S_\mathcal{O}, \sigma_\mathcal{O} |_{D^\prime})$ is a feasible solution to the unconstrained classic facility location instance. Let  $(S_\mathcal{O^\prime}, \sigma_\mathcal{O^\prime} )$ be an optimal solution to this instance, and we have \[
\mathbb{E}\left[ c(\sigma_\mathcal{O^\prime}) + f(S_\mathcal{O^\prime})\right]
    \leq \mathbb{E}\left[c(\sigma_\mathcal{O}|_{D^\prime}) + f(S_\mathcal{O}) \right]
    .\]
    
Since we constructed a perfect matching and chose one of the two endpoints of each edge in the matching uniformly at random, the marginal probability that a client is in $D^\prime$ is exactly $\frac{1}{2}$; thus we have
\begin{align*}
     \mathbb{E}\left[c(\sigma_\mathcal{O}|_{D^\prime}) + f(S_\mathcal{O}) \right] 
    &= \sum_{j \in D} c(\sigma_\mathcal{O}(j),j)\cdot \Pr \left[ j \in D^\prime \right] + f(S_\mathcal{O}) \\
    &= \frac{c(\sigma_\mathcal{O})}{2} + f(S_\mathcal{O}),
\end{align*}
where the first line follows from the linearity of expectation. The desired conclusion follows from the fact that $\mathcal{A}_\mathsf{FL}$ is a $\rho_\mathsf{FL}$-approximation algorithm for the unconstrained facility location problem.
\end{proof}

\begin{lemma} \label{all-even-approx-ratio}
$\mathbb{E}\left[c(\sigma_\mathsf{ALG}) + f(S_\mathsf{ALG}) \right]
\leq (\rho_\mathsf{FL} + 1) \cdot c(\sigma_\mathcal{O}) + 2\rho_\mathsf{FL} \cdot f(S_\mathcal{O}).$
\end{lemma}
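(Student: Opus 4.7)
The plan is to bound $c(\sigma_\mathsf{ALG}) + f(S_\mathsf{ALG})$ by a deterministic quantity in terms of $c(\sigma_\UncSol)$, $f(S_\UncSol)$, and $c(M^\star)$, and then take expectations and combine with Lemmas~\ref{bound-matching} and~\ref{expect-bound}. Note that $f(S_\mathsf{ALG}) = f(S_\UncSol)$ by construction, so only the assignment cost requires care.

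First I would rewrite $c(\sigma_\mathsf{ALG})$ by grouping the terms by matching edges: for each $e \in M^\star$,
\[
c(\sigma_\mathsf{ALG}(j_e), j_e) + c(\sigma_\mathsf{ALG}(\widehat{j_e}), \widehat{j_e}) = c(\sigma_\UncSol(j_e), j_e) + c(\sigma_\UncSol(j_e), \widehat{j_e}).
\]
Applying the triangle inequality to the second summand, $c(\sigma_\UncSol(j_e), \widehat{j_e}) \leq c(\sigma_\UncSol(j_e), j_e) + c(j_e, \widehat{j_e})$, and summing over all $e \in M^\star$ yields the deterministic bound
\[
c(\sigma_\mathsf{ALG}) \leq 2\, c(\sigma_\UncSol) + c(M^\star).
\]

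Next I would take expectations, combining with $f(S_\mathsf{ALG}) = f(S_\UncSol)$, to get
\[
\mathbb{E}\!\left[c(\sigma_\mathsf{ALG}) + f(S_\mathsf{ALG})\right] \leq 2\, \mathbb{E}[c(\sigma_\UncSol)] + \mathbb{E}[f(S_\UncSol)] + c(M^\star) \leq 2 \cdot \mathbb{E}\!\left[c(\sigma_\UncSol) + f(S_\UncSol)\right] + c(M^\star),
\]
where the second inequality uses nonnegativity of $f(S_\UncSol)$. Since $M^\star$ is a minimum-cost perfect matching on $D$, Lemma~\ref{bound-matching} gives $c(M^\star) \leq c(\sigma_\mathcal{O})$, and Lemma~\ref{expect-bound} bounds the expected unconstrained cost by $\rho_\mathsf{FL} \cdot (c(\sigma_\mathcal{O})/2 + f(S_\mathcal{O}))$. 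Plugging these in yields
\[
\mathbb{E}\!\left[c(\sigma_\mathsf{ALG}) + f(S_\mathsf{ALG})\right] \leq 2 \rho_\mathsf{FL} \cdot \left(\frac{c(\sigma_\mathcal{O})}{2} + f(S_\mathcal{O})\right) + c(\sigma_\mathcal{O}) = (\rho_\mathsf{FL} + 1) \cdot c(\sigma_\mathcal{O}) + 2\rho_\mathsf{FL} \cdot f(S_\mathcal{O}),
\]
as required.

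There is no real obstacle here — the argument is essentially constant-chasing. The only subtle point is making sure the factor of $2$ appears only on $c(\sigma_\UncSol)$ (not on $f(S_\UncSol)$) in the triangle-inequality step, since otherwise combining with Lemma~\ref{expect-bound} would produce a weaker facility-cost coefficient. The derivation above handles this by bounding $2\mathbb{E}[c(\sigma_\UncSol)] + \mathbb{E}[f(S_\UncSol)]$ by twice the joint expectation and then absorbing the slack into the matching term.
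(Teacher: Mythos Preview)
Your proof is correct and follows essentially the same route as the paper: both derive the deterministic bound $c(\sigma_\mathsf{ALG}) \leq 2\,c(\sigma_\UncSol) + c(M^\star)$ via the triangle inequality, then take expectations and plug in Lemmas~\ref{bound-matching} and~\ref{expect-bound}. The paper's presentation is slightly terser (it splits $c(\sigma_\mathsf{ALG})$ into $c(\sigma_\mathsf{ALG}|_{D'})$ and $c(\sigma_\mathsf{ALG}|_{D\setminus D'})$ rather than grouping by matching edges), but the arithmetic is identical, including the implicit use of $2\,\mathbb{E}[c(\sigma_\UncSol)] + \mathbb{E}[f(S_\UncSol)] \leq 2\,\mathbb{E}[c(\sigma_\UncSol) + f(S_\UncSol)]$.
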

\begin{proof}
Observe that we have $c(\sigma_\mathsf{ALG}(\widehat{j_e}), \widehat{j_e}) \leq c(\sigma_\UncSol(j_e), j_e) + c(j_e, \widehat{j_e})$ for every client $\widehat{j_e} \in D \setminus D^\prime$ from the triangle inequality, yielding\[
    c(\sigma_\mathsf{ALG} |_{D \setminus D^\prime})
    \leq c(\sigma_\UncSol) + c(M^\star)
\]since $M^\star$ is a perfect matching. We thus have
\begin{align*}
    \mathbb{E} \left[ c(\sigma_\mathsf{ALG}) + f(S_\mathsf{ALG}) \right]
    &= \mathbb{E} \left[ c(\sigma_\mathsf{ALG} |_{D^\prime})
        + c(\sigma_\mathsf{ALG} |_{D \setminus D^\prime})
        + f(S_\mathsf{ALG}) \right] \\
    &\leq \mathbb{E} \left[  c(\sigma_\UncSol) +  c(\sigma_\UncSol) + c(M^\star) + f(S_\UncSol) \right]
        \\
    &\leq (\rho_\mathsf{FL} + 1) \cdot c(\sigma_\mathcal{O})
        + 2\rho_\mathsf{FL} \cdot f(S_\mathcal{O}),
\end{align*}where the last inequality follows from Lemmas~\ref{bound-matching} and~\ref{expect-bound}.
\end{proof}

\begin{thm}
There exists a randomized $2\rho_\mathsf{FL}$-approximation algorithm for the $\emptyset$-facility location problem.
\end{thm}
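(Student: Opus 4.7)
The plan is to deduce this theorem immediately from Lemma~\ref{all-even-approx-ratio}, which already carries essentially all of the technical work. First I would dispense with feasibility. If $|D|$ is odd, the $\emptyset$-facility location instance is infeasible (every open facility must be assigned an even number of clients, so $|D|=\sum_{i\in S}|\sigma^{-1}(i)|$ forces $|D|$ to be even), and the algorithm can report this directly. Otherwise, $|D^\prime|=|D|/2$, and each facility $i\in S_\mathsf{ALG}$ ends up assigned exactly $2\cdot|\sigma_\UncSol^{-1}(i)|$ clients (each representative in $D^\prime$ brings its matched partner along), which is an even number; so every parity constraint is satisfied.

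For the approximation ratio, I invoke Lemma~\ref{all-even-approx-ratio} to obtain
\[
\mathbb{E}\left[c(\sigma_\mathsf{ALG}) + f(S_\mathsf{ALG}) \right]
\leq (\rho_\mathsf{FL} + 1)\cdot c(\sigma_\mathcal{O}) + 2\rho_\mathsf{FL} \cdot f(S_\mathcal{O}).
\]
By the definition of an approximation ratio for a minimization problem, $\rho_\mathsf{FL} \geq 1$, hence $\rho_\mathsf{FL} + 1 \leq 2\rho_\mathsf{FL}$. Substituting this into the asymmetric bound above yields
\[
\mathbb{E}\left[c(\sigma_\mathsf{ALG}) + f(S_\mathsf{ALG}) \right]
\leq 2\rho_\mathsf{FL} \cdot \bigl(c(\sigma_\mathcal{O}) + f(S_\mathcal{O})\bigr)
= 2\rho_\mathsf{FL} \cdot \mathsf{OPT},
\]
which is the desired expected approximation guarantee.

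Finally, the running time is polynomial: the algorithm executes one minimum-cost perfect matching computation on $D$, one invocation of $\mathcal{A}_\mathsf{FL}$ on an instance of the unconstrained problem, and $O(|D|)$ bookkeeping steps to assign each unmatched partner $\widehat{j_e}$ alongside its representative $j_e$. There is no genuine obstacle in this final step; the entire proof is a one-line algebraic simplification of Lemma~\ref{all-even-approx-ratio} together with the trivial inequality $\rho_\mathsf{FL}\geq 1$, as all of the substantive argumentation (bounding the matching cost by $c(\sigma_\mathcal{O})$ and bounding the expected unconstrained-instance cost by $\rho_\mathsf{FL}\cdot(c(\sigma_\mathcal{O})/2 + f(S_\mathcal{O}))$ via the coin-flip that marginalizes each client with probability $1/2$) has already been absorbed into Lemmas~\ref{bound-matching}, \ref{expect-bound}, and~\ref{all-even-approx-ratio}.
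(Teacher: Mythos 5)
Your proof is correct and follows the paper's own route exactly: the paper likewise derives the theorem as ``immediate'' from Lemma~\ref{all-even-approx-ratio}, with the same implicit use of $\rho_\mathsf{FL}\geq 1$ to collapse the asymmetric bound into $2\rho_\mathsf{FL}\cdot\mathsf{OPT}$, and the same feasibility and running-time observations. Nothing is missing.
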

\begin{proof}
Immediate from Lemma~\ref{all-even-approx-ratio}. It is easy to observe that the algorithm runs in polynomial time.
\end{proof}

\section{Parity-Constrained $k$-Center}\label{app:kcenter}

\newcommand{\leqtau}{{\leq \tau}}

In this appendix, we present a $6$-approximation algorithm for the parity-constrained $k$-center problem.

\subsection{Preliminaries}

\paragraph*{Problem definition.}
In the parity-constrained $k$-center problem, we are given as the input a metric $c$ on a set of nodes $V$, an integer $k$, and parity constraints
$\pi : V \rightarrow \{ \oddconst, \evenconst ,\unconst \}$.
The objective is to find a subset $S \subseteq V$ of size at most $k$ and an assignment $\sigma:V\to S$ such that, for all $u\in S$, $|\sigma^{-1}(u)|$ is odd if $\pi(u)=\oddconst$ and even if $\pi(u)=\evenconst$ so as to minimize $\max_{v\in V}c(\sigma(v),v)$. 

To simplify the presentation, we will assume that $\pi(u)\neq\unconst$ for all $u\in V$ in what follows. We will discuss how we can lift this assumption towards the end of this appendix.

\paragraph{Notation.}
For a given $\tau\in\mathbb{R}$, let $G_{\leq\tau}$ be an unweighted graph on node set $V$, in which $u$ and $v$ are adjacent if and only if $c(u,v)\leq \tau$. Let $S^*\subseteq V$ and $\sigma^*:V\to S^*$ be an optimal solution of value $\tau^*$; observe that, for all $v\in V$, either $\sigma^*(v)=v$ or $(\sigma^*(v),v)\in G_{\leq\tau^*}$. That is, $G_{\leq\tau^*}$ can be intuitively understood as a graph showing ``admissible'' assignments.
 
Given an unweighted graph $G = (V, E)$ and two nodes $u, v \in V$, let $d_G(u,v)$ denote the shortest length of a path between $u$ and $v$. For a positive integer $\rho\in\mathbb{Z}_{>0}$, the \emph{$\rho$-th power graph} of $G$, denoted by $G^\rho$, is a graph on the same node set $V$ where two nodes $u$ and $v$ are adjacent if and only if $d_G(u,v) \leq \rho$.

\paragraph{Guessing the optimum.}
Using the above notation, we can describe the standard method to solve bottleneck optimization problems~\cite{hochbaum1985}: the algorithm guesses the optimal value $\tau$ and constructs $G_{\leq\tau}$. If we can find some $S\subseteq V$ and $\sigma:V\to S$ such that $|S|\leq k$ and $d_{G_{\leq\tau}}(\sigma(v),v)\leq\rho$ for all $v\in V$, $S$ is a $\rho$-approximate solution. (Note that $d_{G_{\leq\tau}}(\sigma(v),v)\leq\rho$ implies $c(\sigma(v),v)\leq\rho\tau$ from the triangle inequality.) On the other hand, if the algorithm correctly concludes that there is no way to choose $S\subseteq V$ with $|S|\leq k$ such that every node $v\in V\setminus S$ is adjacent to a center in $S$, the guess is incorrect. This ``guessing'' can be done, for example, by performing binary search.

We note that we can assume that $G_{\leq\tau}$ is connected in applying this method, as in~\cite{cygan2012,an2015}: since no assignments can be made across different connected components in an optimal solution, we can separately consider each connected component and determine the smallest $k'$ for which we can find an assignment. If the total sum of these $k'$s exceeds the given budget $k$, we conclude that the guess was incorrect; otherwise, we can output the union of the centers chosen in each connected component.

\paragraph*{An algorithm for the unconstrained version.} For the sake of completeness, we will first present a $2$-approximation algorithm for the unconstrained $k$-center problem. This $2$-approximation algorithm is slightly different from the original presentation of Hochbaum and Shmoys~\cite{hochbaum1985}, but is a mere combination of the ideas already existing in \cite{hochbaum1985} and Khuller \& Sussmann~\cite{khuller2000}. Note that the following lemma suffices to obtain a $2$-approximation algorithm: the rest of the argument follows from the standard method of guessing the optimum. Although Conditions~\eqref{c:a:3} and \eqref{c:a:4} are not necessary to prove the correctness of the algorithm, they will be useful in our algorithm for the parity-constrained problem.

\begin{lemma} \label{akci:lem1}
Given an unweighted connected graph $G = (V, E)$, we can find in polynomial time a set of \emph{centers} $S \subseteq V$, an \emph{assignment} $\sigma:V\to S$, and a tree  $\mathcal{T}$ on $S$ satisfying the following conditions:
\begin{enumerate}[(i)]
\item there does not exist a set $S'\subseteq V$ with $|S'|<|S|$ such that every node $v\in V\setminus S'$ is adjacent to a center in $S'$;\label{c:a:1}
\item for every node $v \in V$, we have $d_G(\sigma(v),v)\leq 2$;\label{c:a:2}
\item for every center $u \in S$, we have $\sigma(u)=u$;\label{c:a:3}
\item for every edge $(u, v)$ in $\mathcal{T}$, we have $d_G(u, v) = 3$.\label{c:a:4}
\end{enumerate}
\end{lemma}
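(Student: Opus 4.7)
The plan is to construct $S$, $\sigma$, and $\mathcal{T}$ jointly by an incremental greedy procedure that simultaneously enforces all four conditions. We would start with an arbitrary seed $v_0 \in V$, set $S := \{v_0\}$, and let $\mathcal{T}$ be the single-vertex tree on $v_0$. Throughout the procedure we maintain the ``uncovered'' set $U := \{v \in V : d_G(v, S) \geq 3\}$. While $U \neq \emptyset$, we locate a pair $(v, u)$ with $v \in U$, $u \in S$, and $d_G(u, v) = 3$; we then add $v$ to $S$ and extend $\mathcal{T}$ by the edge $(u, v)$. Once the procedure terminates, we define the assignment $\sigma$ by sending each $v \in V \setminus S$ to any center within $G$-distance $2$ (which exists because $v \notin U$), and setting $\sigma(u) := u$ for all $u \in S$.

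The crucial step to justify is that, whenever $U \neq \emptyset$, a pair $(v, u)$ of the required kind can be found. Since $G$ is connected and both $S$ and $U$ are nonempty, there must be an edge of $G$ joining some $w \in U$ to some $w' \in V \setminus U$. Because $w' \notin U$, some center $s \in S$ satisfies $d_G(w', s) \leq 2$, so the triangle inequality gives $d_G(w, s) \leq 3$; combined with $d_G(w, S) \geq 3$ from the definition of $U$, we obtain $d_G(w, s) = 3$, and we take $v := w$, $u := s$. Each iteration strictly shrinks $U$, so the loop terminates in polynomial time.

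Condition (ii) is enforced by the termination criterion $U = \emptyset$, (iii) by the definition of $\sigma$, and (iv) directly by the construction of $\mathcal{T}$. For (i), observe that $S$ is an independent set in $G^2$, since every newly added center lies at $G$-distance exactly $3$ from every earlier center. If some dominating set $S' \subseteq V$ of $G$ had $|S'| < |S|$, then the map $f : S \to S'$ that sends each $u \in S$ either to itself (if $u \in S'$) or to an arbitrary $G$-neighbor of $u$ in $S'$ would, by pigeonhole, collide on two distinct centers $u_1 \neq u_2$, yielding $d_G(u_1, u_2) \leq 2$ and contradicting the $G^2$-independence of $S$.

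The main obstacle is reconciling (i) and (iv): condition (i) essentially asks that $S$ be an independent set in $G^2$ of a size that lower-bounds the minimum dominating set of $G$, while (iv) asks that the same $S$ be connected by edges of $G$-distance exactly $3$. Naive constructions, such as taking an arbitrary maximal independent set in $G^2$, may produce centers pairwise at $G$-distance strictly larger than $3$ and therefore fail (iv); a small example like a sufficiently long path already exhibits this phenomenon. The incremental construction above resolves the tension by extending $S$ only along distance-$3$ links, while the frontier argument guarantees such a link is always available as long as $U$ is nonempty.
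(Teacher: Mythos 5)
Your proposal is correct and follows essentially the same greedy construction as the paper's proof sketch (grow $S$ and $\mathcal{T}$ simultaneously by repeatedly attaching a node at distance exactly $3$ from the current centers, then assign the remaining nodes to a center within distance $2$); your frontier and pigeonhole arguments merely make explicit the details the paper leaves to the cited references. One small slip: a newly added center is at distance exactly $3$ from \emph{some} earlier center and at distance at least $3$ from all the others, not exactly $3$ from every earlier center --- but only the lower bound is needed for the $G^2$-independence of $S$, so the argument stands.
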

\begin{proof}[Proof sketch~\cite{hochbaum1985,khuller2000}]
Consider the following algorithm that chooses $S$ and constructs $\mathcal{T}$ at the same time.
Initially, we choose an arbitrary node $v$ to form a singleton tree. At this point, $v$ will be the only center in $S$.
We then repeatedly select a node $v\in V\setminus S$ that is at distance 3 from the centers chosen so far: i.e., $\min_{u\in S} d_G(u,v)=3$; this node $v$ enters $S$ and becomes the child of a node $u$ in $\mathcal{T}$ such that $d_G(u,v)=3$.

When we cannot admit any more node to the tree, every node $v\in V\setminus S$ must be at distance at most two from $S$. We construct the assignment $\sigma$ by assigning every center in $S$ to itself and every other node to an arbitrary center in $S$ that is at distance two or shorter. Note that Condition~\eqref{c:a:1} follows from the observation that no two centers in $S$ can be adjacent to a same node.
\end{proof}

\subsection{Our algorithm}\label{app:b:alg}

Appendices~\ref{app:b:alg} and~\ref{app:b:analysis} are dedicated to proving Lemma~\ref{akc:lem1b}.
Recall that we assume $\pi(v)\in\{\oddconst,\evenconst\}$ for all $v\in V$.
The algorithm will be presented in this section first and the analysis will follow in Appendix~\ref{app:b:analysis}.

\begin{lemma}\label{akc:lem1b}
Given an unweighted connected graph $G = (V, E)$, we can in polynomial time either find a set of \emph{centers} $S \subseteq V$ and an \emph{assignment} $\sigma:V\to S$ satisfying the following conditions, or correctly conclude that no such $(S,\sigma)$ exists:
\begin{enumerate}[(i)]
\item for every node $v \in V$, we have $d_G(\sigma(v),v)\leq 6$;\label{c:b:1}
\item for every center $u \in S$, we have $|\sigma^{-1}(u)|$ is odd if and only if $\pi(u)=\oddconst$;\label{c:b:2}
\item there does not exist a set $S'\subseteq V$ with $|S'|<|S|$ such that an assignment $\sigma':V\to S'$ satisfying the following conditions exists:\label{c:b:3}
\begin{itemize}
\item for every node $v \in V$, we have $d_G(\sigma'(v),v)\leq 1$;\label{c:c:1}
\item for every node $u \in S'$, we have $|\sigma'^{-1}(u)|$ is odd if and only if $\pi(u)=\oddconst$.\label{c:c:2}
\end{itemize}
\end{enumerate}
\end{lemma}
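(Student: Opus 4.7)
The plan is to invoke Lemma~\ref{akci:lem1} to produce a skeleton solution $(S,\sigma,\mathcal{T})$ on $G$ and then to correct parities by chains of reassignments along the tree~$\mathcal{T}$, possibly preceded by a single modification of~$S$. Let $T\subseteq S$ denote the set of centers whose current parity is wrong. A counting argument modulo~$2$ applied to $n=\sum_{u\in S}|\sigma^{-1}(u)|$ after grouping by $\pi(u)$ yields $|T|\equiv n+|S\cap\pi^{-1}(\oddconst)|\pmod 2$; in particular, the parity of $|T|$ is determined by $S$ alone, and any parity-feasible assignment to $S$ requires $|T|$ to be even.

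If $|T|$ is even, I would compute the (unique) minimum $T$-join $J$ on~$\mathcal{T}$---an edge belongs to $J$ iff its removal isolates an odd number of $T$-vertices---and decompose $J$ into edge-disjoint paths pairing the vertices of~$T$. For each such path $u_0,u_1,\dots,u_\ell$, I would perform $\ell$ one-edge reassignments, transferring one client from $u_i$ to $u_{i+1}$ at each step, using a distinct client each time; property~\eqref{c:a:3} of Lemma~\ref{akci:lem1} guarantees that $u_i$ itself is always available as a client of~$u_i$. The intermediate centers $u_1,\dots,u_{\ell-1}$ each lose and gain exactly one client, so their parities are intact; only $u_0$ and $u_\ell$ have their parities flipped. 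Each transferred client crosses a single tree edge of $G$-length~$3$ starting from a node at $G$-distance $\leq 2$ of its original center, and thus ends within $G$-distance $\leq 5$ of its new center.

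If $|T|$ is odd, the parity invariant precludes feasibility with the current $S$, so a single modification of~$S$ flipping the parity of $|S\cap\pi^{-1}(\oddconst)|$ is required. I would enumerate the polynomial-size family of candidate modifications---close some $u\in S$ (reassigning $\sigma^{-1}(u)$ to its neighbor $\phi(u)$ in~$\mathcal{T}$, at $G$-distance~$3$) or open some $u\in V\setminus S$---retain those yielding an even new $|T|$, invoke the easy-case procedure on the modified tree, and output any resulting feasible solution. If no candidate modification restores feasibility, I would conclude infeasibility; correctness follows from the parity invariant. A careful choice of which client carries each segment of the chain ensures that, even with the modification, no client is forwarded across more than a single close-hop plus a tree edge, bounding its final $G$-distance by~$6$.

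Property~(iii) then follows from property~\eqref{c:a:1} of Lemma~\ref{akci:lem1}: any feasible $(S',\sigma')$ with $d_G(\sigma'(v),v)\leq 1$ is in particular an adjacency cover of $G$, so $|S'|\geq|S|$ before any close operation; a close operation only reduces~$|S|$, so our output still satisfies the required lower-bound comparison. The principal obstacle will be the hard case: proving that a single modification always suffices whenever the instance is feasible, and arguing that the client trajectories induced by the modification together with the subsequent $T$-join chains can always be routed so that no single client traverses more than the permitted two hops---one close-hop and one tree edge---thereby preserving the bound of~$6$ rather than something larger.
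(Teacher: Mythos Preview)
Your overall plan---start from the skeleton $(S,\sigma,\mathcal{T})$ of Lemma~\ref{akci:lem1} and fix parities by chains along the tree---is the same as the paper's, but two concrete steps do not go through as written.

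\textbf{Shared vertices in the $T$-join on $\mathcal{T}$.} Your edge-disjoint path decomposition of the tree $T$-join need not be \emph{node}-disjoint: a centre $w$ of tree-degree~$d$ can be intermediate on up to $\lfloor d/2\rfloor$ paths. In each such path you need $w$ to hand off one client, but property~\eqref{c:a:3} of Lemma~\ref{akci:lem1} only guarantees that the single node $w$ is assigned to~$w$; after the first hand-off, $w$ may be empty, so the assertion ``$u_i$ itself is always available as a client of~$u_i$'' is false from the second path on. If you instead forward a just-received client, that client traverses two tree edges and its distance bound becomes $3+3=6$ (not $5$), and for longer chains of sharing you lose control entirely. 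The paper sidesteps this by working on~$\mathcal{T}^2$ rather than~$\mathcal{T}$: siblings are adjacent in~$\mathcal{T}^2$, which lets a simple bottom-up recursion pair up ``exposed'' paths at each internal vertex and output \emph{node-disjoint} reassignment paths. Each centre is then touched at most once, the centre itself is the client that moves, and one $\mathcal{T}^2$-edge costs at most~$6$ in~$G$.

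\textbf{The odd-$|T|$ case and condition~(iii).} When $|T|$ is odd and $S$ contains no odd-constrained centre, your single ``open'' modification increases $|S|$ by one, and your argument for condition~(iii) explicitly covers only the ``close'' case. Lemma~\ref{akci:lem1}\eqref{c:a:1} gives no lower bound of $|S|+1$ on parity-feasible covers, so (iii) can fail. The paper handles this situation (its Case~\eqref{case:c}) by opening the closed odd-constrained node \emph{and simultaneously} closing an invalid (hence even-constrained) root, so that $|S|$ is unchanged; the only time $|S|$ ever increases is the $|S|=1$ small case, where a direct argument shows two centres are necessary. Your enumeration over single open/close moves does not recover this paired modification.
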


We reiterate that the entire algorithm first guesses the optimal value $\tau$, constructs $G_{\leq\tau}$, and executes the algorithm of this section for each connected component of it. If the algorithm reports no $(S,\sigma)$ exists for any one of the components, the guess is incorrect; if the union of obtained $S$'s has more than $k$ centers, the guess again is incorrect; otherwise, the union is the desired solution. This is a standard technique of bottleneck optimization~\cite{hochbaum1985,khuller2000} and we omit the formal proof.

\paragraph{Phase 1: Infeasibility test and initialization.}
The first step of our algorithm is to check if $|V|$ is odd but all nodes are even-constrained; in this case,
the algorithm concludes that no $(S,\sigma)$ exists. Otherwise, we execute the algorithm in the proof of Lemma~\ref{akci:lem1} to obtain an initial $S\subseteq V$ and $\sigma:V\to S$ along with a tree $\mathcal{T}$ on $S$. The algorithm will modify this $S$ and $\sigma$ to satisfy the parity constraints.

We say a node is \emph{invalid} if its parity constraint is not satisfied. Let $\Sinv \subseteq S$ denote the set of \emph{invalid} centers.
We can assume 
that $\Sinv \neq \emptyset$ since otherwise, the initial $S$ and $\sigma$ can be immediately output by our algorithm. 
As the algorithm proceeds, \Sinv will change.

\paragraph{\emph{Small case} ($|S|=1$).}
We will separately handle the case where $|S|=1$.
In this case, if there exists a node $v \in V$ whose parity constraint coincides with the parity of $|V|$, we close the center in $S$, open $v$ instead, and reassign every node in $V$ to $v$.
Otherwise, it is guaranteed that there exists a closed odd-constrained node $v$. We open $v$ in addition to the already open center in $S$ and reassign $v$ to itself.

Assume from now on that $|S|\geq 2$.

\paragraph{Phase 2: Modifying and rooting $\mathcal{T}$.}
We modify $\mathcal{T}$ and transform it into a rooted tree according to one of the following three cases:
\begin{enumerate}[(A)]
\item If $|\Sinv|$ is even, we simply root $\mathcal{T}$ at an arbitrary open node.\label{case:a}
\item Otherwise, we find an open odd-constrained center and make it the root of $\mathcal{T}$.\label{case:b}
\item If neither of the above applies, it is assured that there exists a closed odd-constrained node $v$.\label{case:c}
We open the node in addition to those in $S$, and add the node to $\mathcal{T}$ by attaching it as a child of $\sigma(v)$.
We root the tree at an arbitrary invalid center (except for $v$).
\end{enumerate}

\paragraph{Phase 3: Opening and closing centers.}
Let $r$ be the center selected as the root of $\mathcal{T}$. In Cases~\eqref{case:b} and~\eqref{case:c}, we close $r$, choose an arbitrary child $c$ of $r$, and reassign to $c$ every node $r$ was assigned. Even after $r$ is closed, it remains to be the root node of $\mathcal{T}$.

Note that the above procedure may change $\Sinv$: if $r$ was in \Sinv, $r$ leaves \Sinv as it gets closed. If $r$ was originally assigned an odd number of nodes, the parity of the number of nodes assigned to $c$ becomes flipped, so $c$ leaves \Sinv if it already was there and enters \Sinv if it was not. (If $r$ was originally assigned an even number of nodes, $c$ will stay still.) In Case~\eqref{case:c}, $v$ enters \Sinv.

\paragraph{Phase 4: Identifying reassignment paths.}
Once we obtained the modified and now-rooted tree $\mathcal{T}$, we construct node-disjoint paths on $\mathcal{T}^2$ between invalid nodes, called the \emph{reassignment paths}.
Let $\mathcal{T}_u$ denote the subtree rooted at $u\in \mathcal{T}$.

Given $\mathcal{T}_u$, the following recursive procedure finds a set of node-disjoint paths such that there is a one-to-one correspondence between the invalid nodes in $\mathcal{T}_u$ and the endpoints of the paths, provided that $\mathcal{T}_u$ contains an even number of invalid nodes. Obviously, this is impossible to achieve if the subtree contains an odd number of invalid nodes. In this case, every path will be between two invalid nodes, except for one that is between $u$ and an invalid node. This last path, called the \emph{exposed path} from $\mathcal{T}_u$, is allowed to be a trivial $u-u$ path if $u$ is invalid itself. Every invalid node still appears as an endpoint of exactly one path. The recursive procedure is indeed simple:\begin{itemize}
\item For each child $c$ of $u$, we recursively find the set of paths within $\mathcal{T}_c$. If $\mathcal{T}_c$ contains an odd number of invalid nodes, one of the paths will be an exposed path.
\item Collect these exposed paths, pair them arbitrarily, and concatenate each pair into a path between two invalid nodes. (Note that siblings are adjacent in $\mathcal{T}^2$.)
\item This may leave at most one exposed path that was not paired with another. We extend the path to $u$ by adding an edge. If $u$ is invalid, this yields a path between two invalid nodes. Otherwise, this path becomes the exposed path from $\mathcal{T}_u$.
\end{itemize}
It is guaranteed that $|\Sinv|$ is even at the beginning of this phase. Thus, when we run this recursive procedure on the entire tree $\mathcal{T}$, we will obtain a set of paths between invalid nodes, without an exposed path from $\mathcal{T}_r$.

\paragraph{Phase 5: Reassignment.}
For each path $P= u_1, \cdots, u_\ell$ we found above, we reassign $u_i$ to $u_{i+1}$ for all $i=1,\cdots,\ell-1$. Each path can be ``oriented'' in an arbitrary direction with the only exception that, if $\mathcal{T}$ was modified according to Case~\ref{case:c}, the newly opened node $v$ must be the last node of a path, i.e., the path containing $v$ must be oriented \emph{towards} $v$.

\subsection{Analysis}\label{app:b:analysis}

There are a few places in the algorithm where certain assertions are made. We first need to verify these are indeed true.
\begin{lemma}
The given algorithm is well-defined.
\end{lemma}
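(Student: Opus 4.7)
The plan is to verify that every existential assertion the algorithm relies on is indeed satisfied, walking through the algorithm phase by phase. Concretely, I need to check: (a) in the \emph{small case}, the existence of a closed odd-constrained node when no node's parity matches $|V|$'s parity; (b) in Phase 2 Case~\eqref{case:c}, the existence of a closed odd-constrained node $v$, and the existence of an invalid center distinct from $v$ to use as the root; (c) in Phase 3, the existence of a child $c$ of $r$; (d) the evenness of $|\Sinv|$ at the start of Phase 4; and (e) in Phase 5, that $v$ is an endpoint of some reassignment path whenever we are in Case~\eqref{case:c}. The infeasibility test of Phase 1 (which rejects the case where $|V|$ is odd but all nodes are even-constrained) will be repeatedly used as a hypothesis.

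For the small case, if no node's parity matches $|V|$'s, I would argue by cases on the parity of $|V|$: if $|V|$ is odd, the infeasibility test rules out all nodes being even-constrained, so some node is odd-constrained, contradicting that no parity matches odd $|V|$; hence $|V|$ is even, every node is odd-constrained, and since $|V|\ge 2 > |S|$, a closed odd-constrained node exists. For Phase 2 Case~\eqref{case:c}, I would observe that Cases~\eqref{case:a} and~\eqref{case:b} failing means $|\Sinv|$ is odd and every open center is even-constrained; since $\sum_{u\in S}|\sigma^{-1}(u)|=|V|$ and a center is invalid exactly when $|\sigma^{-1}(u)|$ is odd, the parity of $|V|$ equals that of $|\Sinv|$, which is odd, so by the infeasibility test some node is odd-constrained, and it must be closed since every open center is even-constrained. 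After opening this $v$ with zero clients assigned, $v$ becomes invalid, bumping $|\Sinv|$ up by one to an even number $\ge 2$, which guarantees an invalid center distinct from $v$ to serve as the root.

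For the remaining items, the child $c$ in Phase 3 exists because, having exited the small case, $|S|\ge 2$, and Case~\eqref{case:c} can only grow the tree, so any root has at least one child. The evenness of $|\Sinv|$ at the start of Phase 4 is verified by tracking changes through Phase 3: in Case~\eqref{case:a} nothing is modified; in Case~\eqref{case:b}, closing $r$ removes it from $\Sinv$ iff it was invalid, while reassigning $|\sigma^{-1}(r)|$ clients to $c$ flips $c$'s parity iff $|\sigma^{-1}(r)|$ is odd, and a simple case split on the four possibilities shows the net change to $|\Sinv|$ always has the right parity to make it even; in Case~\eqref{case:c}, $|\Sinv|$ is already even before choosing $r$, and the same bookkeeping (noting that every original open center is even-constrained, so the chosen invalid $r$ has odd $|\sigma^{-1}(r)|$) yields an even change. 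Finally, for Phase 5, since $v$ is invalid at the start of Phase 4, the recursive construction guarantees $v$ is an endpoint of exactly one reassignment path, which we may then orient toward $v$.

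The only nontrivial obstacle is the bookkeeping in item~(d): one must carefully distinguish whether $r\in\Sinv$ and whether $|\sigma^{-1}(r)|$ is odd, and argue that in every subcase the parity of $|\Sinv|$ flips the right number of times. Everything else reduces to transparent counting arguments leveraging the infeasibility test of Phase 1 and the fact that the sum of assignment cardinalities equals $|V|$.
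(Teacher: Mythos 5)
Your proposal is correct and follows essentially the same route as the paper: each existential assertion is discharged via the Phase-1 infeasibility test together with the identity $\sum_{u\in S}|\sigma^{-1}(u)|=|V|$, and the evenness of $|\Sinv|$ after Phase~3 is tracked by counting parity flips case by case. The only point to make explicit when you carry out the Case~(B) bookkeeping is that the root $r$ is odd-constrained there, so ``$r$ invalid'' and ``$|\sigma^{-1}(r)|$ odd'' are mutually exclusive and exhaustive---exactly one flip occurs---which collapses your ``four possibilities'' to the two subcases the paper treats.
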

\begin{proof}
First, in the \emph{small case} ($|S|=1$), if we cannot find a node $v$ whose parity constraint coincides with the parity of $|V|$, this must be because $|V|$ is even but all nodes are odd-constrained. (Note that, if $|V|$ is odd and all nodes are even-constrained, the algorithm would have already concluded that no $(S,\sigma)$ exists in Phase~1.) Since $|V|$ is even and $|S|=1$, there must be some node in $V\setminus S$ and it has to be odd-constrained. This verifies the guarantee in the algorithm description.

Second, Case~\ref{case:c} of Phase~2 applies when $|\Sinv|$ is odd and there does not exist an open odd-constrained center. Since every open center is even-constrained and $|\Sinv|$ is odd, we have that $|V|$ is odd. Given that the algorithm did not conclude that no $(S,\sigma)$ exists, there must exist an odd-constrained node, verifying the guarantee.

Last, we verify the claim in the algorithm description that $|\Sinv|$ becomes even after Phase~3. 
If Case~\ref{case:a} applied in Phase~2, \Sinv does not change during Phases~2 and~3 and there is nothing to prove.

Suppose that Case~\ref{case:b} applied. If the root $r$ was valid, it was assigned an odd number of nodes; thus, the child $c$ to which all the nodes assigned to $r$ are reassigned gets its parity flipped. This flips the parity of $\Sinv$, making it even. On the other hand, if $r$ was invalid, the parity of $c$ is not flipped but $r$ itself is removed from \Sinv. This makes $|\Sinv|$ even again.

Finally, suppose that Case~\ref{case:c} applied. Closing the root $r$ flips the parity of $|\Sinv|$. Since none of the open centers are odd-constrained, $r$ must be even-constrained; therefore, the child $c$ gets its parity flipped, too. The newly opened odd-constrained node $v$ enters \Sinv. In total, the parity of $|\Sinv|$ flips three times, making it even.
\end{proof}

We are now ready to prove our main lemma.

\begin{proof}[Proof of Lemma~\ref{akc:lem1b}]
When the algorithm concludes that no $(S,\sigma)$ exists, this is because $|V|$ is odd but all nodes are even-constrained. Hence, the algorithm's conclusion is correct.

Observe that \Sinv indeed is the set of invalid centers at the end of Phase~3. It is easy to show by induction that the set of paths we found in Phase~4 are node-disjoint and their endpoints have a one-to-one correspondence between \Sinv, in addition to the fact that the root node does not appear in any of these paths unless the root itself is invalid.

From the construction, the reassignment in Phase~5 assigns nodes only to open centers. Due to the node disjointness of the paths, it is also clear that the parity constraints are all satisfied by the end of the algorithm: when we perform a series of reassignments on $u_1, \cdots, u_\ell$, the number of nodes each $u_i$ is assigned remains the same, except for $u_1$ (which decreases by 1) and $u_\ell$ (increases by 1). This verifies Condition~\eqref{c:b:2} of Lemma~\ref{akc:lem1b}.

In order to verify Condition~\eqref{c:b:1}, we consider each reassignment performed by the algorithm. In the small case, let $s\in S$ be the initially chosen center; then, for every node $v\in V$, we have $d_G(s,v)\leq 2$. We thus have $d_G(u,v)\leq 4$ for all $u,v\in V$, showing that all reassignments are within the distance of four no matter what our eventual choice of centers is.

In Phase~2, no reassignments are made but $\mathcal{T}$ may be modified; note that, for every $u$ and $v$ that are adjacent in $\mathcal{T}$, we have $d_G(u,v)\leq 3$. (The inequality may strictly hold when one of them is the newly added node in Case~\ref{case:c}.)

In Phase~3, when we close $r$, every node $v$ such that $\sigma(v)=r$ will be reassigned to $c$. Since $c$ is a child of $r$, we have $d_G(c,v)\leq d_G(c,r)+d_G(r,v)\leq 3+2$.

Finally in Phase~4, each $u_i$ is reassigned to $u_{i+1}$; we have $d_G(u_{i+1},u_i)\leq 6$ since $u_i$ and $u_{i+1}$ are adjacent in $\mathcal{T}^2$.

Now we verify Condition~\eqref{c:b:3}. Our algorithm does not increase the cardinality of open centers, except when $|S|=1$ and there does not exist a node whose constraint coincides with the parity of $|V|$. In that case, it is obvious that we have to open at least two centers. In the other cases, Condition~\eqref{c:a:1} of Lemma~\ref{akci:lem1} gives the desired conclusion.
\end{proof}

\subsection{Final remarks}

\paragraph{Allowing unconstrained nodes.}
It is easy to handle unconstrained nodes: for each connected graph $G=(V,E)$, we treat every unconstrained node as if its parity constraint is the parity of $|V|$.

When we treat an unconstrained node as a node with a fixed parity constraint, we only need to verify that this does not lead the algorithm to incorrectly conclude that no $(S,\sigma)$ exists or to unnecessarily open more centers than the unconstrained algorithm does. Observe that both can happen only in the small case of our algorithm. However, since we treat an unconstrained node as having the parity of $|V|$, the algorithm will always succeed in opening exactly one center in the small case.

\paragraph*{}
This completes the proof of Theorem~\ref{thm:kcentermain}.

\begingroup
\def\thetheorem{\ref{thm:kcentermain}}
\begin{theorem}[rephrased]
There exists a 6-approximation algorithm for the parity-constrained $k$-center problem.
\end{theorem}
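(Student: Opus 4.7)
The plan is to combine Lemma~\ref{akc:lem1b} with the standard guess-the-optimum framework for bottleneck optimization already described just above the statement of Lemma~\ref{akci:lem1}. The algorithm enumerates candidate values of $\tau$ taken from the finite set of pairwise distances $\{c(u,v) : u,v\in V\}$ (or performs binary search on this set). For each candidate $\tau$, the algorithm constructs the unweighted graph $G_{\leq\tau}$, handles unconstrained nodes of each connected component as described in the ``Allowing unconstrained nodes'' remark (treating each unconstrained node as having the parity of $|V(H)|$ for the component $H$ it belongs to), and then invokes the algorithm of Lemma~\ref{akc:lem1b} on every connected component of $G_{\leq\tau}$ independently. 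If any component reports infeasibility, or if the total number of selected centers across all components exceeds $k$, then the guess $\tau$ is declared incorrect; otherwise, the union of the per-component outputs is returned as a candidate solution. We output the solution corresponding to the smallest successful $\tau$.

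To establish correctness and the $6$-approximation ratio, I would argue as follows. First, since no assignment in any feasible (in particular, optimal) solution crosses connected components of $G_{\leq\tau^*}$, the optimal solution restricted to each component of $G_{\leq\tau^*}$ yields a feasible center set $S^*_H$ in that component $H$, with $|\sigma^{*-1}(u)|$ of the prescribed parity for every $u\in S^*_H$, and with every $v\in V(H)$ either in $S^*_H$ or adjacent in $G_{\leq\tau^*}$ to a member of $S^*_H$. Consequently, the hypothetical witness $(S', \sigma')$ of Condition~\eqref{c:b:3} of Lemma~\ref{akc:lem1b} is achievable in each component by the per-component optimum, so the $|S|$ returned by Lemma~\ref{akc:lem1b} for that component is no larger than $|S^*_H|$. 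Summing over components gives that, when $\tau = \tau^*$, the total number of centers returned is at most $|S^*|\leq k$, so the binary search will find a successful $\tau \leq \tau^*$. By Condition~\eqref{c:b:1} of Lemma~\ref{akc:lem1b}, every node $v$ is assigned to a center $\sigma(v)$ with $d_{G_{\leq\tau}}(\sigma(v),v)\leq 6$; combined with the triangle inequality of $c$ and the definition of $G_{\leq\tau}$, this yields $c(\sigma(v),v) \leq 6\tau \leq 6\tau^*$. Condition~\eqref{c:b:2} of Lemma~\ref{akc:lem1b}, together with the parity-matching convention adopted for unconstrained nodes, certifies parity feasibility of the returned solution.

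The main technical obstacle, namely showing that the per-component procedure simultaneously achieves distance $6$ and optimal cardinality under parity constraints, is already handled by Lemma~\ref{akc:lem1b}, whose proof was sketched in Appendix~\ref{app:b:analysis}. The only remaining subtlety concerns unconstrained nodes: we must verify that our reduction does not spuriously declare infeasibility or inflate the center count beyond the true optimum. Since an unconstrained node $v$ in a component $H$ can always be consistently assigned the parity of $|V(H)|$ while being compatible with any optimal parity-feasible assignment restricted to $H$, this reduction preserves both feasibility and the optimal per-component cardinality, as noted in the ``Final remarks''. Finally, the overall running time is polynomial because the outer binary search incurs only an $O(\log |V|^2)$ overhead on top of the polynomial-time per-component subroutine of Lemma~\ref{akc:lem1b}.
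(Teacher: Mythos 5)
Your proposal is correct and follows essentially the same route as the paper: invoke Lemma~\ref{akc:lem1b} on each connected component of $G_{\leq\tau}$ within the standard guess-the-optimum framework, use Condition~\eqref{c:b:3} against the per-component restriction of the optimal solution to bound the number of centers, and handle unconstrained nodes by assigning them the parity of their component. The paper explicitly omits this assembly as standard, and you have filled it in as intended.
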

\addtocounter{theorem}{-1}
\endgroup

\paragraph{Running time.}
We can implement the algorithm in the proof of Lemma~\ref{akci:lem1} to run in $O(|V|^2)$ time as follows: we maintain $\min_{u\in S}d_G(u,v)$ for each node $v\in V$. This array can be initialized as $+\infty$ and updated by performing BFS each time a new node enters $S$. A na\"ive implementation, therefore, runs in $O(|V|^3)$ time.
However, we do not need to exactly determine the values in the array; if a value is greater than 3, it suffices to know that it is greater than 3.
Thus, we can initialize all the distances as 4 instead of $+\infty$, and when we run BFS, we can modify it so that it does not visit a node to which the shortest path length is already greater than 3. In this algorithm, the value of every node can only decrease and only then we will let BFS ``visit'' the node; this can happen at most four times for each node. The overall running time, therefore, is bounded by $O(|V|^2)$. Now the rest of the algorithm is a straightforward traversal of $\mathcal{T}$ with reassignments, which runs in $O(|V|)$ time.

The overall running time is $O(|V|^2\log |V|)$, where the $\log |V|$ term originates from binary search.

\begin{obs}
Our algorithm can be implemented to run in $O(|V|^2\log |V|)$ time.
\end{obs}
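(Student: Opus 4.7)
The plan is to decompose the running time into three contributions: (a) the number of binary-search iterations over $\tau$ in the bottleneck meta-algorithm, (b) a single call to the procedure of Lemma~\ref{akci:lem1} at a fixed guess $\tau$, and (c) the tree manipulations of Phases~2--5 at a fixed $\tau$. I will argue that (a) contributes a factor of $\log|V|$ and that a single iteration of (b) plus (c) runs in $O(|V|^2)$.

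For (a), the optimum $\tau^*$ must equal $c(u,v)$ for some $u,v\in V$, so there are only $O(|V|^2)$ distinct candidate values. Sorting them once in $O(|V|^2\log|V|)$ time and binary-searching on the sorted list yields $O(\log|V|)$ guesses; this is the sole source of the logarithmic factor in the overall bound.

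For (b), the dominant cost is Lemma~\ref{akci:lem1} itself; building $G_{\leq\tau}$ and its connected components is already $O(|V|^2)$. The implementation idea, sketched in the paragraph preceding the observation, is to maintain $\delta[v]:=\min\bigl(\min_{u\in S}d_G(u,v),\,4\bigr)$, initialized to $4$, and to run a truncated BFS from each newly added center that refuses to descend beyond distance $3$. The crucial amortized claim is that $\delta[v]$ attains only five possible values and hence decreases $O(1)$ times over the entire execution; each such decrease triggers $O(|V|)$ work scanning $v$'s neighbors, so the total is $O(|V|^2)$. Fleshing out this charging is the main obstacle: one must argue that the truncated BFS only expands the frontier through a vertex $v$ when its $\delta[v]$ strictly improves, so that every ``push'' out of $v$ can be paid for by one of the constantly many decrements of $\delta[v]$, rather than paying $O(|V|^2)$ per BFS call for up to $|S|$ calls.

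For (c), Phases~2 through~5 operate entirely on the tree $\mathcal{T}$ and on the assignment array, and never revisit the metric. Phase~2 re-roots $\mathcal{T}$ and possibly grafts one leaf in constant time. Phase~3 closes the root and transfers at most $|V|$ client assignments. Phase~4's recursion visits each vertex of $\mathcal{T}$ exactly once, and the work of concatenating exposed paths at a vertex with $d$ children is $O(d)$; summing over the tree telescopes to $O(|V|)$. Phase~5 performs exactly one reassignment per path edge for a total of $O(|V|)$ work. Combining (a), (b), and (c) yields the claimed $O(|V|^2\log|V|)$ bound.
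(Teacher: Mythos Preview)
Your proposal is correct and follows essentially the same approach as the paper: decompose into the $O(\log|V|)$ binary-search factor, an $O(|V|^2)$ implementation of Lemma~\ref{akci:lem1} via the truncated-BFS/$\delta[v]$-decrement amortization (the paper's own argument), and an $O(|V|)$ tree traversal for Phases~2--5. Your treatment adds the small detail of sorting the $O(|V|^2)$ candidate thresholds once, which the paper leaves implicit, but otherwise the decomposition, the key amortized charging idea, and the bounds are identical.
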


\captionsetup[figure]{font=small, labelfont=bf}
\begin{figure}[p]
\centering
\begin{tikzpicture}

\newcommand\nodeSize{0.4}
\newcommand\borderwidth{1.1mm}
\newcommand\borderwidthbold{2.0mm}
\newcommand\facXpos{{1, 5.5, 6, 5, 12.5, 11, 17}}
\newcommand\facYpos{{7, 0.5, 6, 11, 1, 6.5, 5}}
\newcommand\specialXpos{12}
\newcommand\specialYpos{11}
\newcommand\cliXpos{{2.5, 8.75, 4, 6.5, 8.5, 2.5, 6, 11, 13, 16, 14.5}}
\newcommand\cliYpos{{2, 1, 4, 3, 3.5, 9, 8.5, 3.75, 3.25, 2, 5}}

\drawSolAssign{\cliXpos[0]}{\cliYpos[0]}{\facXpos[1]}{\facYpos[1]}{\nodeSize}
\drawSolAssign{\cliXpos[1]}{\cliYpos[1]}{\facXpos[1]}{\facYpos[1]}{\nodeSize}
\drawSolAssign{\cliXpos[2]}{\cliYpos[2]}{\facXpos[2]}{\facYpos[2]}{\nodeSize}
\drawSolAssign{\cliXpos[3]}{\cliYpos[3]}{\facXpos[2]}{\facYpos[2]}{\nodeSize}
\drawSolAssign{\cliXpos[4]}{\cliYpos[4]}{\facXpos[2]}{\facYpos[2]}{\nodeSize}
\drawSolAssign{\cliXpos[5]}{\cliYpos[5]}{\facXpos[3]}{\facYpos[3]}{\nodeSize}
\drawSolAssign{\cliXpos[6]}{\cliYpos[6]}{\facXpos[3]}{\facYpos[3]}{\nodeSize}
\drawSolAssign{\cliXpos[7]}{\cliYpos[7]}{\facXpos[4]}{\facYpos[4]}{\nodeSize}
\drawSolAssign{\cliXpos[8]}{\cliYpos[8]}{\facXpos[4]}{\facYpos[4]}{\nodeSize}
\drawSolAssign{\cliXpos[9]}{\cliYpos[9]}{\facXpos[6]}{\facYpos[6]}{\nodeSize}
\drawSolAssign{\cliXpos[10]}{\cliYpos[10]}{\facXpos[6]}{\facYpos[6]}{\nodeSize}

\drawOptAssign{\cliXpos[0]}{\cliYpos[0]}{\facXpos[0]}{\facYpos[0]}{\nodeSize}
\drawOptAssignCurve{\cliXpos[1]}{\cliYpos[1]}{\facXpos[1]}{\facYpos[1]}{7.5}{0.25}{\nodeSize}
\drawOptAssignCurve{\cliXpos[2]}{\cliYpos[2]}{\facXpos[2]}{\facYpos[2]}{4.5}{5.5}{\nodeSize}
\drawOptAssign{\cliXpos[3]}{\cliYpos[3]}{\facXpos[1]}{\facYpos[1]}{\nodeSize}
\drawOptAssign{\cliXpos[4]}{\cliYpos[4]}{\facXpos[1]}{\facYpos[1]}{\nodeSize}
\drawOptAssign{\cliXpos[5]}{\cliYpos[5]}{\facXpos[0]}{\facYpos[0]}{\nodeSize}
\drawOptAssign{\cliXpos[6]}{\cliYpos[6]}{\facXpos[2]}{\facYpos[2]}{\nodeSize}
\drawOptAssign{\cliXpos[7]}{\cliYpos[7]}{\facXpos[5]}{\facYpos[5]}{\nodeSize}
\drawOptAssign{\cliXpos[8]}{\cliYpos[8]}{\facXpos[5]}{\facYpos[5]}{\nodeSize}
\drawOptAssign{\cliXpos[9]}{\cliYpos[9]}{\facXpos[4]}{\facYpos[4]}{\nodeSize}
\drawOptAssign{\cliXpos[10]}{\cliYpos[10]}{\facXpos[5]}{\facYpos[5]}{\nodeSize}

\drawEvenCO{\facXpos[0]}{\facYpos[0]}{\nodeSize}{\borderwidth}
\drawOddOO{\facXpos[1]}{\facYpos[1]}{\nodeSize}{\borderwidth}
\drawEvenOO{\facXpos[2]}{\facYpos[2]}{\nodeSize}{\borderwidth}
\drawOddOC{\facXpos[3]}{\facYpos[3]}{\nodeSize}{\borderwidth}
\drawOddOO{\facXpos[4]}{\facYpos[4]}{\nodeSize}{\borderwidth}
\drawOddCO{\facXpos[5]}{\facYpos[5]}{\nodeSize}{\borderwidth}
\drawEvenOC{\facXpos[6]}{\facYpos[6]}{\nodeSize}{\borderwidth}

\drawClient{\cliXpos[0]}{\cliYpos[0]}{\nodeSize}
\drawClient{\cliXpos[1]}{\cliYpos[1]}{\nodeSize}
\drawClient{\cliXpos[2]}{\cliYpos[2]}{\nodeSize}
\drawClient{\cliXpos[3]}{\cliYpos[3]}{\nodeSize}
\drawClient{\cliXpos[4]}{\cliYpos[4]}{\nodeSize}
\drawClient{\cliXpos[5]}{\cliYpos[5]}{\nodeSize}
\drawClient{\cliXpos[6]}{\cliYpos[6]}{\nodeSize}
\drawClient{\cliXpos[7]}{\cliYpos[7]}{\nodeSize}
\drawClient{\cliXpos[8]}{\cliYpos[8]}{\nodeSize}
\drawClient{\cliXpos[9]}{\cliYpos[9]}{\nodeSize}
\drawClient{\cliXpos[10]}{\cliYpos[10]}{\nodeSize}

\drawTextA{9}{-2}{\nodeSize}
\end{tikzpicture}
\hspace{1cm}
\begin{tikzpicture}

\newcommand\nodeSize{0.4}
\newcommand\borderwidth{1.1mm}
\newcommand\borderwidthbold{2.0mm}
\newcommand\facXpos{{1, 5.5, 6, 5, 12.5, 11, 17}}
\newcommand\facYpos{{7, 0.5, 6, 11, 1, 6.5, 5}}
\newcommand\specialXpos{12}
\newcommand\specialYpos{11}

\drawEdgeGrey{\facXpos[1]}{\facYpos[1]}{\specialXpos}{\specialYpos}{\nodeSize}
\drawEdgeGrey{\facXpos[3]}{\facYpos[3]}{\specialXpos}{\specialYpos}{\nodeSize}
\drawEdgeGreyCurve{\facXpos[4]}{\facYpos[4]}{\specialXpos}{\specialYpos}{13.5}{6}{\nodeSize}

\drawYa{\facXpos[0]}{\facYpos[0]}{\facXpos[1]}{\facYpos[1]}{\nodeSize}
\drawYa{\facXpos[0]}{\facYpos[0]}{\facXpos[3]}{\facYpos[3]}{\nodeSize}
\drawYa{\facXpos[1]}{\facYpos[1]}{\facXpos[2]}{\facYpos[2]}{\nodeSize}
\drawYa{\facXpos[2]}{\facYpos[2]}{\facXpos[3]}{\facYpos[3]}{\nodeSize}
\drawYa{\facXpos[4]}{\facYpos[4]}{\facXpos[5]}{\facYpos[5]}{\nodeSize}
\drawYa{\facXpos[4]}{\facYpos[4]}{\facXpos[6]}{\facYpos[6]}{\nodeSize}
\drawYa{\facXpos[5]}{\facYpos[5]}{\facXpos[6]}{\facYpos[6]}{\nodeSize}
\drawYb{\facXpos[5]}{\facYpos[5]}{\specialXpos}{\specialYpos}{\nodeSize}
\drawYc{\facXpos[3]}{\facYpos[3]}{\specialXpos}{\specialYpos}{\nodeSize}

\drawEvenCO{\facXpos[0]}{\facYpos[0]}{\nodeSize}{\borderwidth}
\drawOddOO{\facXpos[1]}{\facYpos[1]}{\nodeSize}{\borderwidthbold}
\drawEvenOO{\facXpos[2]}{\facYpos[2]}{\nodeSize}{\borderwidthbold}
\drawOddOC{\facXpos[3]}{\facYpos[3]}{\nodeSize}{\borderwidthbold}
\drawOddOO{\facXpos[4]}{\facYpos[4]}{\nodeSize}{\borderwidthbold}
\drawOddCO{\facXpos[5]}{\facYpos[5]}{\nodeSize}{\borderwidth}
\drawEvenOC{\facXpos[6]}{\facYpos[6]}{\nodeSize}{\borderwidth}
\drawSpecial{\specialXpos}{\specialYpos}{\nodeSize}

\drawTextB{9}{-2}{\nodeSize}
\end{tikzpicture}

\caption{\textbf{(A)}: An initial solution $(S_\mathcal{I}, \sigma_\mathcal{I})$ and an optimal solution $(S_\mathcal{O}, \sigma_\mathcal{O})$.
A facility (and a client) is represented as a square (and a circle, respectively).
Odd-constrained facilities have red dashed borders; even-constrained ones have navy-blue solid borders.
The upper-right triangle is filled with black if the facility is open in the initial solution; the lower-left triangle is filled with gray if it is open in the optimal solution.
Assignments in the initial solution are marked with black solid lines; assignments in the optimal solution are gray solid lines.
\textbf{(B)}: A $T$-join dominator $Y$.
If a facility $i$ is in $S_\mathsf{inv}$, it is marked with a thicker border.
Every edge in $Y_1$ is drawn as a green densely-dotted line; $Y_2$ as a navy-blue loosely-dotted line; and $Y_3$ as a coral solid line.
The remaining closing edges are drawn as black thin solid lines.
(We omitted the remaining reassign edges.)
}\label{fig:tjd}
\end{figure}
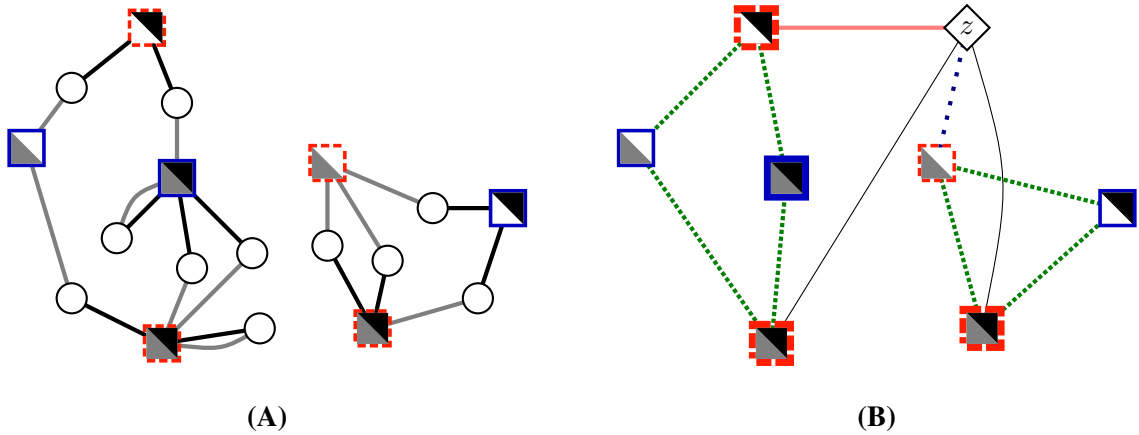

\end{document}